\newcommand{\Graph}[2][1.0]{\vcenter{\hbox{\includegraphics[scale=#1]{#2}}}}
\newcommand{\lfrac}[2]{\frac{\numprint{#1}}{\numprint{#2}}}
\definecolor{links}{rgb}{0,0.3,0}
\newcommand{\cored}{\mathop{{}^\backprime\!\Delta\!^\prime}}
\newcommand{\mzv}[2][]{\zeta^{#1}(#2)}
\newcommand{\zz}[1]{Z_{#1}}
\newcommand{\completed}[1]{\overline{#1}}
\DeclareMathOperator{\anc}{anc}
\DeclareMathOperator{\Spec}{Spec}
\newcommand{\non}[1]{\text{non-$\phi^{#1}$}}
\newcommand{\plogdiv}{primitive log-divergent}
\newcommand{\defas}{\mathrel{\mathop:}=}
\newcommand{\vect}[1]{\boldsymbol{#1}}
\newcommand{\iu}{\mathrm{i}}
\DeclareMathOperator{\ImTeil}{Im}
\DeclareMathOperator{\ReTeil}{Re}
\newcommand{\RU}{\xi}
\newcommand{\pipow}[1]{\nu_{#1}}
\DeclareMathOperator{\depth}{depth}
\newcommand{\isomorph}{\cong}
\newcommand{\conjugate}[1]{{#1}^{\ast}}
\newcounter{AxiomNumber}
\newcommand{\defaxiom}[1]{%
	\item[\textbf{I\arabic{AxiomNumber}:}]%
	\addtocounter{AxiomNumber}{-1}\refstepcounter{AxiomNumber}%
	\label{axiom:#1}\addtocounter{AxiomNumber}{1}%
}
\newcommand{\axiom}[1]{\textbf{\hyperref[axiom:#1]{I}\ref{axiom:#1}}}
\newcommand{\setexp}[2]{\left\{ #1 \colon #2 \right\}}
\newcommand{\set}[1]{\left\{ #1 \right\}}
\newcommand{\concat}{\star}
\newcommand{\abs}[1]{\left\vert#1\right\vert}
\newcommand{\restrict}[2]{\left.{#1}\right|_{#2}}
\newcommand{\MZV}[1][]{\mathcal{Z}_{#1}}
\newcommand{\MZVf}[1][]{\mathcal{U}_{\ifthenelse{\equal{#1}{}}{}{#1}}}
\newcommand{\MZVfDR}[1][]{\mathcal{U}^{\dR}_{\ifthenelse{\equal{#1}{}}{}{#1}}}
\newcommand{\dR}{\mathfrak{dr}}
\newcommand{\motivic}{\mathfrak{m}}
\DeclareMathOperator{\mot}{mot}
\newcommand{\motive}[1]{\mot_G}
\newcommand{\Periods}{\mathcal{P}}
\newcommand{\per}{\mathrm{per}}
\newcommand{\PhiPeriods}[1][]{\ifthenelse{\equal{#1}{}}{\Periods_{\phi^4}}{\Periods_{\phi^4,\leq #1}}}
\newcommand{\LogPeriods}[1][]{\ifthenelse{\equal{#1}{}}{\Periods_{\mathrm{log}}}{\Periods_{\mathrm{log},\leq #1}}}
\newcommand{\GraphPeriods}{\Periods}
\newcommand{\AllGraphPeriods}{\Periods_{0,0}}
\newcommand{\parity}[1]{\overline{#1}}
\newcommand{\DeligneBasis}[1][]{\mathcal{D}_{#1}}
\newcommand{\ParityBasis}[1][]{\mathcal{D}_{#1}'}
\newcommand{\divides}{\mathbin{|}}
\DeclareMathOperator{\Spann}{lin}
\newcommand{\lin}{\Spann}
\newcommand{\weight}[1]{\abs{#1}}
\newcommand{\Maple}{%
	\href{http://www.maplesoft.com/products/Maple/}
	{\textsf{\textup{Maple}}}%
}
\newcommand{\MapleNote}{%
	\footnote{Maple is a trademark of Waterloo Maple Inc.}
}
\newcommand{\MapleTM}{%
	\href{http://www.maplesoft.com/products/Maple/}
	{\textsf{\textup{Maple}}\texttrademark}%
}
\newcommand{\JaxoDraw}{\texttt{\textup{JaxoDraw}}}
\newcommand{\Axodraw}{\texttt{\textup{Axodraw}}}
\theoremstyle{plain}
\newtheorem{thm}{Theorem}
\newtheorem{lem}[thm]{Lemma}
\newtheorem{con}[thm]{Conjecture}
\newtheorem{cor}[thm]{Corollary}
\newtheorem{prop}[thm]{Proposition}
\newtheorem{quest}[thm]{Question}
\newtheorem{remark}[thm]{Remark}
\newtheorem{defn}[thm]{Definition}
\newtheorem{ex}[thm]{Example}
\numberwithin{thm}{section}
\numberwithin{equation}{section}
\newcommand{\perms}[1]{\mathfrak{S}_{#1}}
\newcommand{\shuffles}[1]{\mathfrak{S}_{#1}}
\DeclareMathOperator{\Li}{Li}
\newcommand{\dd}{\mathrm{d}}
\font \rus= wncyr10
\newcommand{\shuffle}{\mathbin{\hbox{\rus x}}}
\newcommand{\To}{\longrightarrow}
\newcommand{\sW}{{\mathcal W}}
\newcommand{\A}{{\mathbb A}}
\newcommand{\CC}{{\mathbbm{C}}}
\newcommand{\FF}{{\mathbb F}}
\newcommand{\NN}{{\mathbbm{N}}}
\newcommand{\PP}{{\mathbb P}}
\newcommand{\QQ}{{\mathbbm{Q}}}
\newcommand{\RR}{{\mathbbm{R}}}
\newcommand{\ZZ}{{\mathbbm{Z}}}
\title{The Galois coaction on $\phi^4$ periods}
\author{Erik Panzer}
\address{%
All Souls College,
OX1 4AL
Oxford, UK%
}
\email{erik.panzer@all-souls.ox.ac.uk}
\author{Oliver Schnetz}
\address{%
Department Mathematik\\
Cauerstra{\ss}e 11\\
91058 Erlangen}
\email{schnetz@mi.uni-erlangen.de}
\begin{document}
\begin{abstract}
We report on calculations of Feynman periods of {\plogdiv} $\phi^4$ graphs up to eleven loops.
The structure of $\phi^4$ periods is described by a series of conjectures. In particular, we discuss the possibility
that $\phi^4$ periods are a comodule under the Galois coaction. Finally, we compare the results with
the periods of {\plogdiv} \emph{non}-$\phi^4$ graphs up to eight loops and find remarkable differences
to $\phi^4$ periods.
Explicit results for all periods we could compute are provided in ancillary files.
\end{abstract}
\maketitle
\section{Introduction}
\subsection{Feynman periods}
Let $G$ be a connected graph. The graph polynomial of $G$ is defined by associating a variable $x_e$ to every edge $e$ of $G$ and setting (see \cite{Brown:PeriodsFeynmanIntegrals,BognerWeinzierl:GraphPolynomials})
\begin{equation}\label{eq:psi}
	\Psi_G(x)
	\defas
	\sum_{T\ \text{span.\ tree}}\;\prod_{e\notin T}x_e,
\end{equation}
where the sum is over all spanning trees $T$ of $G$. 
We say that $G$ is a $\phi^4$ graph when all its vertices have degree at most four. Moreover, a graph $G$ is called \plogdiv\ if
\begin{equation}\label{eq:defnprimdiv}\begin{split}
	N_G &= 2h_G\quad\text{and}\\
	N_{\gamma} &> 2h_{\gamma} 
	\quad \text{for all non-empty strict subgraphs $\gamma$ of $G$,}
\end{split}\end{equation}
where $h_{\gamma}$ denotes the `loop order' (first Betti number) of $\gamma$ and $N_{\gamma}$ is the number of edges in $\gamma$.
Whenever condition \eqref{eq:defnprimdiv} holds, the \emph{Feynman period} of the graph $G$ (not necessarily $\phi^4$) is defined by the convergent integral \cite{Weinberg,LowensteinZimmermann:PowerCountingMassless,BEK}
\begin{equation}\label{eq:Feynman-period}
	P(G) 
	= \int_0^\infty\!\!\!\cdots\int_0^\infty\frac{{\dd}x_1\!\cdots{\dd}x_{N_G-1}}{\restrict{\Psi_G(x)^2}{x_{N_G}=1}}
	\in \RR_+. 
\end{equation}
In this way, $P$ defines a map from the set of {\plogdiv} graphs to positive real numbers. 
In $\phi^4$ quantum field theory these numbers are renormalization scheme independent contributions to the $\beta$-function \cite{ItzyksonZuber}. 
Let
\begin{equation*}\begin{split}
	\PhiPeriods 
	&\defas \lin_{\QQ} \setexp{P(G)}{\text{$G$ {\plogdiv} and $\phi^4$}}\subseteq\\
	\LogPeriods 
	&\defas \lin_{\QQ} \setexp{P(G)}{\text{$G$ {\plogdiv}}}\subset \Periods
\end{split}\end{equation*}
denote the $\QQ$-vector spaces spanned by {\plogdiv} periods. They are subspaces of the $\QQ$-algebra $\Periods$ of periods
in the sense of Kontsevich and Zagier \cite{KZ:Periods}.\footnote{%
	Periods are those numbers which can be written as integrals of rational functions over domains defined by polynomial inequalities (with all coefficients in $\QQ$).
}
We obtain finite-dimensional subspaces if we restrict the loop order of the graphs,
\begin{equation}\label{eq:period-loop-grading}
	\mathcal{P}_{\bullet,\leq n}
	\defas \lin_{\QQ} \setexp{P(G)}{h_G\leq n}
\end{equation}
for $\phi^4$ graphs or general {\plogdiv} graphs $G$, respectively.

The first systematic study of $\phi^4$ periods (up to seven loops) was done with exact numerical methods in 1995 by D.\ Broadhurst and D.\ Kreimer \cite{BK:KnotsAndNumbers,BK:MZVPositiveKnots}.
Later, this method was extended to eight loops \cite{Schnetz:Census}.

The only all order result in $\phi^4$ theory is that periods of the zig-zag graphs (see Figure~\ref{fig:zig-zags}) are rational multiples of the Riemann zeta function at odd integers.
\begin{figure}%
	\centering%
	$\zz{5}=\Graph[0.9]{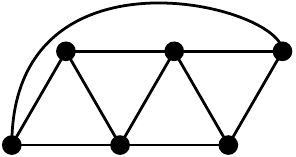}$ \qquad $\zz{6}=\Graph[0.9]{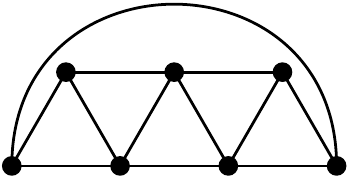}$%
	\caption{The zig-zag graphs with five and with six loops.}%
	\label{fig:zig-zags}%
\end{figure}
\begin{thm}[F.~Brown and O.~Schnetz \cite{BrownSchnetz:ZigZag}, conjectured in \cite{BK:KnotsAndNumbers}]
\label{thm:zig-zag}%
For every integer $n \geq 3$, the period of the zig-zag graph $\zz{n}$ is given by
\begin{equation} \label{eq:zigzag}
	P(\zz{n}) 
	=  4\frac{(2n-2)!}{n!(n-1)!}\Big(1-\frac{1-(-1)^n}{2^{2n-3}}\Big)\mzv{2n-3}.
\end{equation}
\end{thm}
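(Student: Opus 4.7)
\medskip
\noindent\textbf{Proof proposal.}

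The plan is to compute $P(\zz{n})$ via the theory of \emph{graphical functions}. To a graph $\Gamma$ with three distinguished external vertices $\{0,1,z\}$ one associates a single-valued real-analytic function $f_\Gamma$ on $\CC\setminus\{0,1\}$; for suitable $\Gamma$, the leading term of its expansion at the boundary point $z=1$ recovers the Feynman period of the graph obtained by identifying these external vertices. For the class of graphs considered here, $f_\Gamma$ belongs to the ring of single-valued multiple polylogarithms introduced by F.~Brown, and therefore admits a symbolic calculus.

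The first step is the recursive description of the zig-zags: $\zz{n+1}$ arises from $\zz{n}$ by attaching one further ``zig'' vertex to two distinguished external vertices. On the level of graphical functions this combinatorial operation translates into a fixed integral operator $\mathcal{I}$ of one-loop type. By Brown's integration lemma, $\mathcal{I}$ preserves the ring of single-valued polylogarithms and acts combinatorially on their symbols. Iterating $\mathcal{I}$ yields, by induction on $n$, an explicit formula for $f_{\zz{n}}(z)$ as a $\QQ$-linear combination of single-valued $\Li$-functions whose coefficients can be read off from the recursion.

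The remaining work is to evaluate this formula in closed form at the boundary. The key observation is that the generating series of the coefficients in $n$ matches a single hypergeometric ${}_3F_2$ expression, whose value at the relevant integer arguments produces the central binomial number $(2n-2)!/(n!(n-1)!)$. The hardest step, which I expect to be the main obstacle, is to show that in the boundary limit \emph{all} contributions from multiple zeta values of depth $\geq 2$ cancel, leaving only a rational multiple of $\mzv{2n-3}$. This cancellation is engineered by a Kummer-type hypergeometric identity, and the oscillating prefactor $1-(1-(-1)^n)/2^{2n-3}$ arises exactly from evaluating this identity at integer arguments of alternating parity, which controls the sign pattern produced by the functional equation of the depth-one single-valued polylogs. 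Once this cancellation is secured, extracting the leading term of $f_{\zz{n}}(z)$ at $z=1$ reproduces \eqref{eq:zigzag}.
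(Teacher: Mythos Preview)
The paper does not contain a proof of this theorem; it is quoted from \cite{BrownSchnetz:ZigZag} as an external result, so there is no ``paper's own proof'' to compare against. Your sketch is aimed at reproducing the argument of that reference rather than anything in the present article.

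That said, your outline is broadly in the spirit of \cite{BrownSchnetz:ZigZag}: the actual proof does proceed via graphical functions, does exploit the recursive structure of the zig-zags under appending a vertex, and does land in Brown's ring of single-valued multiple polylogarithms. Where your proposal becomes speculative is in the endgame. In \cite{BrownSchnetz:ZigZag} the closed form is not obtained by recognising a ${}_3F_2$ generating series and then invoking an unspecified ``Kummer-type hypergeometric identity''; rather, one computes the relevant graphical function explicitly as a specific single-valued multiple polylogarithm, and the depth-reduction to a single $\zeta(2n-3)$ follows from explicit evaluations of single-valued polylogs at $0$ and $1$ together with combinatorial identities for the resulting coefficients (of Catalan/central-binomial type). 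The parity factor $1-(1-(-1)^n)/2^{2n-3}$ comes out of these evaluations directly, not from a functional equation of depth-one single-valued polylogs as you suggest.

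So the genuine gap in your proposal is the step you yourself flag as the hardest: you assert that higher-depth MZVs cancel via ``a Kummer-type hypergeometric identity'' without naming it or indicating why such an identity should apply here. This is precisely the substance of the theorem, and without a concrete mechanism your sketch does not yet constitute a proof strategy one could carry out. If you want to reconstruct the argument, the place to look is the explicit single-valued integration calculus in \cite{Schnetz:gf} and \cite{BrownSchnetz:ZigZag}, rather than classical hypergeometric summation.
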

The zig-zag periods are the only known $\phi^4$ periods which are rational multiples of values of the Riemann zeta function at integer argument.
A large class of $\phi^4$ periods (but not all $\phi^4$ periods) evaluate to \emph{multiple} zeta values (MZVs), i.e.\ rational linear combinations of the integer-indexed ($d,n_1,\ldots,n_d \in \NN=\set{1,2,3,\ldots}$) nested sums
\begin{equation}\label{eq:MZV}
	\mzv{\vect{n}}
	=
	\mzv{n_d,\ldots,n_2,n_1}
	\defas
	\sum_{k_d>\cdots>k_1\geq1}\frac{1}{k_d^{n_d}\!\cdots k_2^{n_2}k_1^{n_1}}
	\quad\text{with}\quad n_d\geq 2.
\end{equation}
We refer to $d$ as the \emph{depth} of the sum while $n_1+n_2+\ldots+n_d$ is called its \emph{weight}. The sums~\eqref{eq:MZV} obey many $\QQ$-linear relations and span the $\QQ$-algebra\footnote{There exist two weight-homogeneous product formulas in $\MZV$, see \cite{IharaKanekoZagier:DerivationDoubleShuffle} for example.}
\begin{equation}
	\MZV
	=\lin_{\QQ}\setexp{\mzv{n_d,\ldots,n_1}}{n_d\geq2}
\end{equation}
of MZVs. Conjecturally, MZVs are graded by the weight.

Feynman periods, and $\PhiPeriods$ in particular, are interesting because they are very sparse and appear to be constrained in several non-trivial ways. For example, no Feynman graph is known to evaluate to the simplest zeta value of all:
\begin{con}\label{con:zeta2}%
	$\mzv{2}$ is not a $\phi^4$ period: $\mzv{2} \notin \PhiPeriods$.\footnote{In fact, we expect $\mzv{2} \notin \LogPeriods$ and even more generally, that $\mzv{2} \notin \AllGraphPeriods$, see Section~\ref{sec:Feynman-motives}.}
\end{con}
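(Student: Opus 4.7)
The statement is a non-containment assertion about transcendental numbers, so any approach must go beyond elementary bounds on periods. My plan is to lift the conjecture to the motivic level, where the Galois coaction structure central to this paper provides tools unavailable at the level of complex numbers. Specifically, I would work with motivic lifts $P^{\motivic}(G)$ and $\mzv[\motivic]{2}$, and attempt to show that $\mzv[\motivic]{2}$ does not lie in the motivic analogue of $\PhiPeriods$; applying the period homomorphism then gives the conclusion.

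The natural strategy exploits the weight grading. All currently known {\plogdiv} $\phi^4$ periods have weight $\geq 3$; the minimum is achieved at three loops by the complete graph $K_4$, whose period is $6\mzv{3}$ of weight $3$. The target $\mzv{2}$ has weight $2$, so if one can establish that the motivic weight filtration on the motivic analogue of $\PhiPeriods$ is a grading with trivial weight-$2$ piece, the result follows. The key technical step is therefore to show that the weight-$2$ component of $P^{\motivic}(G)$ vanishes for every {\plogdiv} $\phi^4$ graph $G$. Using the coaction, this component can be expressed in terms of simpler motivic periods, and one hopes to identify a vanishing pattern governed by graph-theoretic features, for instance an induction on loop number via reduction of subgraphs, or a Hodge-theoretic argument showing that the relevant graph motive carries no $\QQ(-1)$ summand in its weight filtration.

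The main obstacles are twofold. First, even conjecturally, $\phi^4$ periods are not always multiple zeta values, so one cannot restrict attention to $\MZV$ and must work in a broader motivic framework accommodating the elliptic and higher-genus contributions expected from eight loops onward. Second, and more fundamentally, ruling out a weight-$2$ contribution in $\PhiPeriods$ at the level of complex numbers requires transcendence input equivalent to Grothendieck's period conjecture for the relevant motives: without it, one cannot exclude an accidental $\QQ$-linear relation expressing $\mzv{2}$ through higher-weight $\phi^4$ periods. Consequently I expect any rigorous result to be conditional on the coaction conjectures developed later in this paper, and an unconditional proof to remain out of reach with present technology.
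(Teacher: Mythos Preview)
The statement is labelled \textbf{Conjecture} in the paper and is presented as open; there is no proof in the paper to compare against. The paper only offers supporting evidence (explicit computation of all $\phi^4$ periods up to seven loops, together with Conjecture~\ref{con:weightdrop} which makes low-weight contributions from higher-loop graphs unlikely) and remarks that a \emph{motivic} version ``recently became accessible by F.~Brown's small graphs principle'', citing Theorem~\ref{thm:small-graphs-principle} as the analogous proven statement that $\log(2)$ is not a motivic $\phi^4$ period.

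Your strategy---lift to motivic periods, exploit the weight grading, and try to show the weight-$2$ piece of $\PhiPeriods^{\motivic}$ is trivial---is in the same spirit as what the paper points to, and your concluding assessment that an unconditional proof is out of reach matches the paper's stance. Two small corrections: the weight-$0$ period $1=P(\Graph[0.19]{oneloop})$ already lies in $\PhiPeriods$, so the minimum weight is $0$, not $3$ (this does not affect your argument, since weight $2$ could still be absent); and the concrete mechanism the paper alludes to for excluding low-weight contributions is not an induction on loop number but rather Brown's small graphs principle, which bounds which periods can occur as Galois conjugates by restricting to motives of graphs with few edges. Your honest identification of the transcendence obstruction is exactly right: without the period conjecture, nothing prevents an accidental $\QQ$-linear relation, and this is why the paper keeps the statement as a conjecture.
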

This observation is well-known to particle physicists and supported by our explicit computations (all periods of graphs with loop order $\leq 7$ are known and MZVs of small weight are not expected to arise from larger graphs according to Conjecture~\ref{con:weightdrop}). Amazingly, a motivic (see next subsection) version of Conjecture~\ref{con:zeta2} recently became accessible by F.~Brown's `small graphs principle' in \cite{Brown:FeynmanAmplitudesGalois}, which was already used by Brown to prove the motivic version of $\log(2)\notin\PhiPeriods$ (see Theorem~\ref{thm:small-graphs-principle}).

However, $\phi^4$ periods are very sparse and $\MZV\cap\PhiPeriods$ appears to be a small subspace of $\MZV$, restricted much further than just by Conjecture~\ref{con:zeta2} alone.
Below we will also report on results for Feynman periods which (conjecturally) do not belong to $\MZV$, but still obey highly non-trivial constraints. 

The aim of this article is to offer an explanation (Conjecture~\ref{con:coaction}) of a wide range of phenomena (such as $\mzv{2}\mzv{2n+1}\notin\PhiPeriods$) that follow from this sparsity of $\phi^4$ periods.

\subsection{Galois theory for periods}
The most remarkable property of periods is that there should exist a Galois theory of periods \cite{Andre:GaloisMotivesTranscendental}, extending the classical Galois theory of algebraic numbers to the bigger space $\Periods$.
This construction rests on standard, but very difficult transcendence conjectures.\footnote{For example, $\pi$ and the numbers $\mzv{2n+1}$ are expected to be algebraically independent over the rationals, but $\mzv{3}$ is the only odd Riemann zeta value which is known to be irrational \cite{Apery}.}
To bypass this problem, one can define a $\QQ$-algebra $\Periods^{\motivic}$ of \emph{motivic} periods, which are enriched avatars of period integrals like \eqref{eq:Feynman-period}, see \cite{Brown:NotesMotivicPeriods}.
It comes with a surjective homomorphism $\per\colon \Periods^{\motivic} \longrightarrow \Periods$ whose injectivity is the remaining conjecture.
The powerful gain is a well-defined coaction\footnote{%
	Regrettably, our convention to write it as a left coaction is opposite to the notation as a right coaction, $\Delta\colon \Periods^{\motivic} \longrightarrow \Periods^{\motivic} \otimes \Periods^{\dR}$, which is used in \cite{Brown:NotesMotivicPeriods,Brown:FeynmanAmplitudesGalois}.
	This was noticed too late and changing our convention in this article would have been too risky.
	We hope that the reader will find this (purely notational) inconvience not too confusing when moving between our article and \cite{Brown:NotesMotivicPeriods,Brown:FeynmanAmplitudesGalois}.
}
\begin{equation}\label{eq:coact}%
	\Delta\colon \Periods^{\motivic} \longrightarrow \Periods^\dR \otimes  \Periods^\motivic
\end{equation}
which turns $\Periods^{\motivic}$ into a comodule over the Hopf algebra $\Periods^\dR$ of de Rham periods \cite{Brown:NotesMotivicPeriods}. The Galois group is dual to $\Periods^{\dR}$ and so \eqref{eq:coact} encodes the action of this group on $\Periods^{\motivic}$.
For the motivic multiple zeta values $\mzv[\motivic]{\vect{n}}$ defined in \cite{Brown:MMZ} this coaction can be computed explicitly, see \eqref{eq:Goncoact}.
The interpretation of the examples ($n\geq 1$)
\begin{equation}\label{eq:coact-singlezeta}%
	\Delta\mzv[\motivic]{2} = 1\otimes \mzv[\motivic]{2}
	\quad\text{and}\quad
	\Delta\mzv[\motivic]{2n+1}=\mzv[\dR]{2n+1}\otimes1+1\otimes\mzv[\motivic]{2n+1}
\end{equation}
is that this action is trivial on $\mzv[\motivic]{2}$, but $\mzv[\motivic]{2n+1}$ has $1$ as a (non-trivial) Galois conjugate.
We remove the trivial term $1 \otimes x$ from $\Delta x$ in the reduced coaction
\begin{equation}\label{eq:coact-reduced}%
	\Delta'x \defas \Delta x-1\otimes x,
\end{equation}
such that $\Delta'\mzv[\motivic]{2}=0$ and $\Delta'\mzv[\motivic]{2n+1} = \mzv[\dR]{2n+1} \otimes 1$.
The main subject of this article is to study the coaction on $\PhiPeriods^{\motivic}$, motivic versions of Feynman periods (see Section~\ref{sec:Feynman-motives}).
Surprisingly, the entirety of our data supports the following conjecture.
\begin{con}\label{con:coaction}%
The Galois coaction closes on $\phi^4$ periods:
\begin{equation*}
	\Delta\colon\PhiPeriods^{\motivic}\longrightarrow\Periods^\dR \otimes\PhiPeriods^{\motivic},
	\tag{Scenario 1}%
\end{equation*}
which means that all Galois conjugates of a $\phi^4$ period are also $\phi^4$ periods. Also, the data is consistent with the grading \eqref{eq:period-loop-grading} by loop order, if one allows for non-$\phi^4$ graphs:
\begin{equation*}
	\Delta'\colon \PhiPeriods[n]^{\motivic}\longrightarrow\Periods^\dR \otimes\LogPeriods[n-1]^{\motivic}.
	\tag{Scenario 2}%
\end{equation*}
\end{con}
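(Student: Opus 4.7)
\emph{Proof proposal.} Conjecture~\ref{con:coaction} is stated on spaces defined by their generators, so any verification strategy splits naturally into three stages: data collection, lifting to motivic periods, and analysis of the coaction.

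First, at each loop order $n$ up to the computationally feasible maximum, enumerate all {\plogdiv} $\phi^4$ graphs $G$ with $h_G = n$ and compute $P(G)$ to high numerical precision using~\eqref{eq:Feynman-period}. Use an integer relation algorithm to express each $P(G)$ in a conjectural basis of $\Periods_{\leq w}$ in the expected weight $w = 2h_G - 3$; for MZVs such bases are well understood, and for the non-MZV periods that first appear at $h_G = 7$ a candidate basis can be assembled from other known motivic periods. In parallel one needs to compile the analogous table for {\plogdiv} non-$\phi^4$ graphs at loop orders up to $n-1$, which is what Scenario~2 will be tested against.

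Second, promote each numerical identity $P(G) = \sum_i q_i \pi_i$ to a motivic identity $P^{\motivic}(G) = \sum_i q_i \pi_i^{\motivic}$. This upgrade is conditional on standard transcendence conjectures in general, but for MZVs it follows from Brown's construction of motivic MZVs, and more generally one works with a chosen system of motivic basis elements directly. Apply $\Delta$ termwise using the explicit coaction on this basis and inspect the resulting tensors $\sum_\alpha \xi_\alpha^{\dR}\otimes \eta_\alpha^{\motivic}$: Scenario~1 asks that every $\eta_\alpha^{\motivic}$ lies in $\PhiPeriods^{\motivic}$, while Scenario~2 demands the strictly stronger statement that every $\eta_\alpha^{\motivic}$ lies in $\LogPeriods[n-1]^{\motivic}$. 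Both questions reduce, once the tables are in place, to elementary linear algebra over $\QQ$.

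A genuine proof, by contrast, would require an intrinsic description of $\PhiPeriods^{\motivic}$ as a sub-comodule of $\Periods^{\motivic}$ — ideally a functorial construction on the motive of a $\phi^4$ graph whose coaction is manifestly compatible with a graph-theoretic operation (in the spirit of Brown's small graphs principle invoked around Theorem~\ref{thm:small-graphs-principle}) producing $\phi^4$ minors. The main obstacle is precisely here: motives of graphs beyond the MZV range are poorly understood, so the appropriate graph-theoretic operation dual to the Galois action on the left-hand factor of $\Delta$ is not yet identified, and the question whether such operations preserve the $\phi^4$ condition at all vertices cannot currently be addressed. For this reason the statement is offered as a conjecture supported by exhaustive data rather than as a theorem.
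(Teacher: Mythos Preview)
Your proposal is appropriate in spirit: the statement is a conjecture, the paper offers no proof, and the support consists of exactly the kind of exhaustive case-by-case verification you outline. Two points of comparison with what the paper actually does are worth noting.

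First, the paper's computations are predominantly \emph{analytic}, not numerical-plus-PSLQ. Section~\ref{sec:methods} describes four ingredients (parametric integration, graphical functions, generalized single-valued hyperlogarithms, the decomposition algorithm) that produce exact expressions for $P(G)$ as explicit $\QQ$-linear combinations of iterated integrals. PSLQ is used only as a fallback for a handful of coefficients (as in the discussion of $P_{7,11}$). This matters because the motivic lift in your Stage~2 is then not a leap of faith from a numerical identity but a direct translation of an exact iterated-integral expression, still conditional on the injectivity of $\per$ (Remark~\ref{rem:period-conjecture}) but not on any integer-relation guess.

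Second, rather than applying the Goncharov--Brown formula~\eqref{eq:Goncoact} termwise and then solving a linear-algebra problem, the paper passes to the $f$-alphabet (Section~\ref{sec:falphabet}), where the coaction becomes deconcatenation~\eqref{eq:coact-falphabet}. The verification then reduces to reading off the derivations $\delta_m$ of each period and checking that the results again lie in the span of known $\phi^4$ periods; this is exactly the content of Table~\ref{tab:periods}. Your description of the obstruction to an unconditional proof is accurate and matches the paper's own discussion.
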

By \eqref{eq:coact-singlezeta} the only non-trivial conjugate of odd zeta values is $1=P(\Graph[0.19]{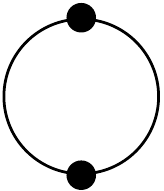})$.
Therefore the zig-zag series from Theorem~\ref{thm:zig-zag} is trivially consistent with both scenarios.
However, we already obtain strong constraints on $\PhiPeriods^{\motivic}$ if we assume the motivic version of Conjecture~\ref{con:zeta2}: Because $\mzv[\dR]{2n+1} \neq 0$ and
\begin{equation*}
	\Delta' \left( \mzv[\motivic]{2}\mzv[\motivic]{2n+1} \right)
	= \mzv[\dR]{2n+1} \otimes \mzv[\motivic]{2},
\end{equation*}
Scenario~1 implies that $\mzv[\motivic]{2}\mzv[\motivic]{2n+1} \notin \PhiPeriods^{\motivic}$ for every $n\geq 1$. In Section~\ref{sec:results} we give striking examples of non-MZV $\phi^4$ periods that obey the highly restrictive Conjecture~\ref{con:coaction}.
In Table~\ref{tab:periods} we show the Galois conjugates of all known $\phi^4$ periods with $\leq 8$ loops, expressed as linear combinations of $\phi^4$ periods. 
{\MapleTM} readable files with all known data are attached to this article, see Section~\ref{sec:data}.\MapleNote
The labeling of periods refers to these files and is consistent with \cite{Schnetz:Census}.

Our data rules out the possibility $\Delta'\colon \PhiPeriods[n]^{\motivic}\longrightarrow\Periods^\dR \otimes\PhiPeriods[n-1]^{\motivic}$, because some $8$-loop periods, like $P_{8,17}$, have Galois conjugates which are not in $\PhiPeriods[7]$ but require the $8$-loop period $P_{8,16}$ (see Table~\ref{tab:periods}).
Similarly, $\Delta'\colon \LogPeriods[n]^{\motivic}\longrightarrow\Periods^\dR \otimes\LogPeriods[n-1]^{\motivic}$ is excluded by \eqref{delta1Euler}.

\subsection{Motivic Feynman periods}\label{sec:Feynman-motives}
A construction of \emph{motivic} Feynman periods goes back to \cite{BEK,Brown:PeriodsFeynmanIntegrals} and was considerably generalized in \cite{Brown:FeynmanAmplitudesGalois}. To a graph $G$ one can associate a finite dimensional $\QQ$-vector space $\GraphPeriods^{\motivic}(G)$, see \cite[in particular section~9]{Brown:FeynmanAmplitudesGalois}, which consists of (motivic versions of) all integrals of the form
\begin{equation}\label{eq:generalized-periods}%
	\int_0^\infty\!\!\!\cdots\int_0^\infty \restrict{\frac{q(x)}{\Psi_G(x)^k}}{x_{N_G}=1}
	{\dd}x_1\!\cdots{\dd}x_{N_G-1}
	,
\end{equation}
where $k\geq 1$ is an integer and $q(x)\in\ZZ[x_1,\ldots,x_{N_G-1}]$ can be any polynomial such that the integral converges.%
\footnote{There is a caveat here: It is not clear that all elements of $\GraphPeriods^{\motivic}(G)$ can be written in the form \eqref{eq:generalized-periods}, see \cite{Brown:FeynmanAmplitudesGalois} for details on these \emph{non-global} periods.}
In this setting, the coaction conjecture is proven:
\begin{equation}\label{eq:coaction-theorem}%
	\Delta \colon 
	\GraphPeriods^{\motivic}(G)
	\longrightarrow 
	\Periods^{\dR} \otimes \GraphPeriods^{\motivic}(G)
	,
\end{equation}
in other words, the Galois group acts on $\GraphPeriods^{\motivic}(G)$. Following the notation of \cite{Brown:FeynmanAmplitudesGalois},
we write $\AllGraphPeriods^{\motivic}$ for the space spanned by all $\GraphPeriods^{\motivic}(G)$ for any\footnote{%
	A restriction to $\phi^4$ graphs and/or {\plogdiv} graphs  is insignificant here, because every graph is a minor of some {\plogdiv} $\phi^4$ graph \cite[Lemma~11]{Brown:PeriodsFeynmanIntegrals} and the spaces $\GraphPeriods^{\motivic}(G)$ are minor monotone by \cite[Theorem~7.8]{Brown:FeynmanAmplitudesGalois} (see \cite[Proposition~37]{Brown:PeriodsFeynmanIntegrals} for a non-motivic proof).}
graph $G$. Then \eqref{eq:coaction-theorem} partly explains Conjecture~\ref{con:coaction} because surprisingly, experiments suggest that at low loop orders $\AllGraphPeriods^{\motivic}$ and $\PhiPeriods^{\motivic}$ are the same.

However, this situation changes at high loop orders: At eight loops we have periods of non-$\phi^4$ graphs (e.g.\ $P^\non4_{8,39}$, see \eqref{delta1Euler}, and $P^\non4_{8,218}$) which are not expected in $\phi^4$. More evidence comes from the $c_2$-invariant which seems strongly constrained for $\phi^4$ periods (see Section~\ref{sec:c2}). In general, we expect that $\PhiPeriods^{\motivic}$ is tiny in $\LogPeriods^{\motivic}$ and hence also in $\AllGraphPeriods^{\motivic}$.
Regretfully, we have so far no information about how $\LogPeriods^{\motivic}$ relates to $\AllGraphPeriods^{\motivic}$ (except for inclusion).

\subsection{The $c_2$-invariant}\label{sec:c2}%
Since \eqref{eq:psi} is defined over the integers, it defines an affine scheme of finite type over $\Spec\ZZ$ which is called the graph hypersurface $X_G \subset \A^{N_G}$.
For any field $k$, we can therefore consider the zero locus $X_G(k)$ of $\Psi_G$ in $k^{N_G}$.
Using the finite fields $k\cong \FF_q$ of order $q$, this determines the point-counting function
\begin{equation}\label{2a}
	[X_G] \colon q \mapsto [X_G]_q \defas \# X_G(\FF_q) \in \NN_0
\end{equation}
as a map from the set of prime powers $q = p^n$ to non-negative integers. Inspired by the occurrence of MZVs in $\PhiPeriods$, Kontsevich raised the question if $[X_G]$ is a polynomial in $q$ \cite{Kontsevich:GelfandTalk1997}. While for graphs with at most 13 edges this is true \cite{Stembridge,Schnetz:Fq}, it is known that $[X_G]$ is of general type \cite{BelkaleBrosnan} and fails to be polynomial even for $\phi^4$ graphs \cite{Doryn:Pointcount,Schnetz:Fq,modphi4}.

For every graph with at least three vertices, $[X_G]_q$ is divisible by $q^2$ \cite{Schnetz:Fq}. This suggests
\begin{defn}\label{def:c2}
Let $G$ have at least three vertices. The $c_2$-invariant \cite{Schnetz:Fq} associates to $G$ an infinite sequence in $\prod_{q=p^n}\ZZ/q\ZZ$ by
\begin{equation}\label{2b}
	c_2(G)_q
	\defas
	[X_G]_q/q^2 \mod q.
\end{equation}
\end{defn}
By experiment \cite{Schnetz:Fq} it is expected that for many graphs the $c_2$-invariant is the most complicated part of the point-counting function $[X_G]$. Recent work \cite{Schnetz:Fq,K3phi4,BrownDoryn:Framings} showed that it captures some information about the period~\eqref{eq:Feynman-period}; in particular it is conjectured in \cite{K3phi4} that graphs with identical period have the same $c_2$-invariant. 

Following \cite{BelkaleBrosnan}, one might expect $c_2$-invariants of graphs to be arbitrarily complicated. 
Experiments up to ten loops in \cite{modphi4}, however, suggest that $c_2$-invariants of $\phi^4$ graphs are very restricted; in small dimensions we see only few geometries:

\subsubsection{quasi-constants}
Many $c_2$-invariants count points on zero-dimensional varieties. In this case they are called \emph{quasi-constant} because there exist an $m\in \NN$ and a unique $c\in\ZZ$ such that $c_2(G)_{p^{nm}} \equiv c \mod p^{nm}$ for all $n\in\NN$ and all but finitely many primes $p$.
We only know cases when $c=0$ and $c=-1$. In fact, so far only five quasi-constant $c_2$-invariants were found in $\phi^4$ theory:
The constants $0,-1$ and the three quasi-constants (with $c=-1$) $-z_2$, $-z_3$, and $-z_4$ which depend on whether $\FF_q$ has a primitive $N$th root of unity (this is the case $z_N(q)=1$) or not:
\begin{equation}
	z_N(q)
	=\begin{cases}
		\phantom{-}1 & \text{if $N \divides q-1$,}\\
		\phantom{-}0 & \text{if $\gcd(N,q)>1$, and}\\
		-1& \text{otherwise.}\\
	\end{cases}
\end{equation}
We say that a graph $G$ has weight drop if $c_2(G)\equiv0\mod q$ for all $q$. This property is linked to the weight of its period:
In the case that $P(G)$ is an MZV, its weight is at most $2h_G-3$. If $G$ has weight drop then it is conjectured (and proved in some cases \cite{BrownYeats:WD}) that $P(G)$ has at most weight $2h_G-4$.

The $c_2$-invariant $-1$ indicates an MZV period of pure maximum weight (Conjecture~\ref{con:c2-1}) whereas the periods of quasi-constant $c_2=-z_N$ are MPLs of pure maximum weight at $N$th roots of unity by Conjectures~\ref{con:c2-z2} and \ref{con:z3}.

Beyond $\phi^4$ theory we see more quasi-constant $c_2$-invariants: The non-$\phi^4$ graph $P^\non4_{8,218}$ e.g.\ has $c_2$-invariant
$c_2(P^\non4_{8,218})_q=1-[X\subset\A\colon x^2+x-1=0]_q$.

\subsubsection{modular forms}
In $\phi^4$ theory, non-quasi-constant $c_2$-invariants seem to come from geometries of dimensions greater than or equal to two. 
Only three two-dimensional geometries are found in $\phi^4$ theory up to ten loops.
These are of $K3$ type and modular with respect to modular forms of weight and level (3,7), (3,8), and (3,12) \cite{modphi4,K3phi4}.

The number of geometries of $\phi^4$ $c_2$-invariants seem to increase with the dimension. The first non-modular varieties are expected at nine loops and four dimensions \cite{modphi4}.
The $c_2$-invariant is (at least for small primes $p$) known for all $\phi^4$ periods up to ten loops and all non-$\phi^4$ periods up to nine loops \cite{modphi4}.

With present technology, periods of graphs with non-quasi-constant $c_2$-invariant cannot be calculated.
Hence, all periods we consider in this article come from graphs with quasi-constant $c_2$.

\subsection{Results}\label{sec:results}
In the subsequent analysis of $\phi^4$ periods we assume the standard transcendence conjecture that $\per\colon \Periods^{\motivic}\longrightarrow \Periods$ is an isomorphism. 
This means that we compute the real numbers $P(G)\in\Periods$ analytically in terms of, say, real MZVs and subsequently apply the coaction formula \eqref{eq:Goncoact} for motivic MZVs
(see Remark~\ref{rem:period-conjecture}).\footnote{This is well-defined only conjecturally, because there could be additional relations between MZVs which are not shared by motivic MZVs,
making the replacement $\mzv{\vect{n}} \mapsto \mzv[\motivic]{\vect{n}}$ ambiguous.}

We succeeded in calculating all 17 $\phi^4$ periods up to seven loops and 23 out of at most 31 $\phi^4$ periods at eight loops.
Beyond eight loops we were still able to calculate an increasing number of periods which however are out-counted by the quickly increasing number of all $\phi^4$ periods:
At nine loops we know 47 out of at most 134 periods, at ten loops we know 88 out of at most 846 periods and at eleven loops we know 125 out of at most 6300 periods.
The list of non-$\phi^4$ periods is also complete up to seven loops. At eight loops it was possible to calculate most non-$\phi^4$ periods.

\begin{figure}%
	\centering%
	\includegraphics[width=0.25\textwidth]{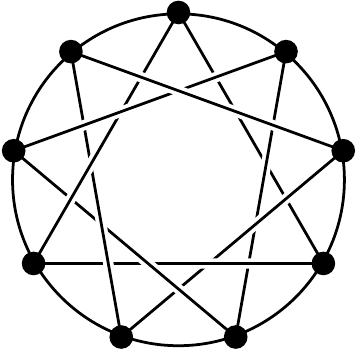}%
	\caption{The completion (see Definition~\ref{def:completion}) of the graph $P_{7,11}$ is the circulant $C^9_{1,3}$. Its period \eqref{eq:p711} evaluates to multiple polylogarithms at sixth roots of unity.}%
	\label{fig:P711}%
\end{figure}
One long sought-after period is $P_{7,11}$ in the nomenclature of \cite{Schnetz:Census} (see Figure \ref{fig:P711}; we use the name $P_{7,11}$ for the graph as well as for its period).
It is the first (conjectured) non-MZV $\phi^4$ period. It can be expressed in terms of multiple polylogarithms (MPLs) at sixth roots of unity.
To make the coaction easily visible, we express our results in a suitable \emph{$f$-alphabet} of non-commutative words, a construction which we will recall in Section~\ref{sec:falphabet}. It allows us to represent periods with abstract words (elements in a tensor algebra) such that the coaction simply becomes deconcatenation.
For example, motivic MZVs correspond to words in letters of odd weight $\geq 3$, see \cite{Brown:DecompositionMotivicMZV}:
\begin{equation}\label{eq:falphabet-MZV}%
	\MZV^{\motivic} 
	\isomorph
	\QQ\langle f_3, f_5, f_7, \cdots \rangle\otimes\QQ[\pi^2]
	.
\end{equation}
Deligne gave bases for MPLs at $N$th roots of unity in the cases $N\in\set{2,3,4,6,8}$ in \cite{Deligne:GroupeFondamental} (see Section~\ref{sec:deligne}) which lead to $f$-alphabets for these periods. These alphabets have additional letters (compared to MZVs), for example $f_1$ or even letters $f_{2k}$.

Because many%
\footnote{%
	All $\phi^4$ periods of loop order at most six are MZVs. 
	At higher loop orders we expect the number of MZV periods to grow at least exponentially, although the ratio of MZV periods over all periods is expected to approach zero when the loop order goes to infinity.
}
$\phi^4$ periods are MZVs it is convenient to use an $f$-alphabet in which the subspace of MZVs is spanned by the words in letters of odd weight $\geq 3$ as in \eqref{eq:falphabet-MZV}. 
Deligne's original basis does not have this property. We prove in Theorem~\ref{thm:falphabet-parity} that taking real parts if the weight plus the depth of the basis element is even and $\iu \times$ imaginary parts otherwise leads to a basis that embeds MZVs in the natural way.

Using the $f$-alphabet (letters $f^6_2,$ $f^6_3,$ $f^6_4,\ldots$) with respect to this modified Deligne basis (see Remark~\ref{rem:MZVoverQ}) we found the following result for $P_{7,11}$:
\begin{equation}\label{eq:p711}\begin{split}
	P_{7,11}
	&=-\lfrac{332262}{43}f^6_8f^6_3
	+\lfrac{54918}{55}f^6_6f^6_5
	+\lfrac{1134}{13}f^6_4f^6_7
	-\lfrac{1874502}{3485}f^6_2f^6_9
	\\&\quad
	+\numprint{5670}f^6_2f^6_3f^6_3f^6_3
	-\tfrac{\numprint{3216912825399005402331281812377062149}}{\numprint{14080217073343074027422017273458000}}\Big(\frac{\pi}{\sqrt{3}}\Big)^{11}.
\end{split}\end{equation}
Note that the $f$-alphabet is not fully canonical, see e.g.\ \eqref{P711N3} for an alternative representation.
The presence of large numerators and denominators reflects a certain lack of economy of the modified Deligne basis for expressing Feynman periods
(the coefficients are even larger in the $f$-alphabet with respect to Deligne's original basis).
The construction of a better adapted basis is discussed in \cite{Broadhurst:Aufbau}.

Note that \eqref{eq:p711} is consistent with Scenarios~1 and 2: The new letters $f^6_2$, $f^6_4$, $f^6_6$, $f^6_8$ appear only at the left-most position. 
On the right-hand side of the tensor product in $\Delta'P_{7,11}$ (which gives the non-trivial Galois conjugates of $P_{7,11}$) only odd letters and thus MZVs appear
(this is the point of Theorem~\ref{thm:falphabet-parity}).
In fact, all these MZVs are in the subspace $\PhiPeriods[6]$. Equation \eqref{eq:p711} is also consistent with 

\begin{thm}[{small graphs principle, \cite[Section~9.3]{Brown:FeynmanAmplitudesGalois}}]\label{thm:small-graphs-principle}
	Let $\RU_6$ be a primitive sixth root of unity. Then $\Li_2^{\motivic}(\RU_6)$ is not a Galois conjugate of any period in $\PhiPeriods^{\motivic}$.
In other words, $f^6_2$ cannot appear as the right-most letter in the $f$-alphabet representation.
\end{thm}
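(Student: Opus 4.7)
The plan is to follow Brown's \emph{small graphs principle} from \cite[Section~9.3]{Brown:FeynmanAmplitudesGalois}. Since $\PhiPeriods^{\motivic}$ is $\QQ$-spanned by the motivic periods $P^{\motivic}(G)$ of primitive log-divergent $\phi^4$ graphs and Galois conjugation is $\QQ$-linear, it suffices to show that $\Li_2^{\motivic}(\RU_6)$ is not a Galois conjugate of any single $P^{\motivic}(G)$. By iterated deconcatenation of $\Delta'$ on the $f$-alphabet, this is the same as ruling out $f^6_2$ as the rightmost letter of any word in the $f$-alphabet expansion of $P^{\motivic}(G)$.

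First, the coaction theorem \eqref{eq:coaction-theorem} places every Galois conjugate of $P^{\motivic}(G)$ inside the finite-dimensional space $\GraphPeriods^{\motivic}(G)$. Second, I would invoke the small graphs principle itself, which asserts that the weight-$w$ part of $\GraphPeriods^{\motivic}(G)$ is generated by motivic periods coming from graphs with only a small and bounded number of edges, produced from $G$ via the minor and subquotient operations appearing in \cite[Theorem~7.8]{Brown:FeynmanAmplitudesGalois}. Specialised to $w=2$, this reduces the question to a finite combinatorial check.

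The concluding step is an enumeration: list all $\phi^4$ graphs of sufficiently small edge count together with the relevant minor-closure, and compute their weight-$2$ motivic periods. The outcome is that every such period lies in $\QQ \cdot \mzv[\motivic]{2}$; since $\Li_2^{\motivic}(\RU_6)$ is linearly independent from $\mzv[\motivic]{2}$ (as seen from $\Delta' \mzv[\motivic]{2} = 0$ whereas $\Delta' \Li_2^{\motivic}(\RU_6) \neq 0$), it cannot arise, proving the theorem.

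The main obstacle is quantifying the small graphs principle with enough precision to obtain an explicit finite list of relevant small graphs, and simultaneously controlling the \emph{non-global} periods in $\GraphPeriods^{\motivic}(G)$ (see the footnote after \eqref{eq:generalized-periods}) so as to ensure that they too cannot secretly produce $\Li_2^{\motivic}(\RU_6)$. Once these two ingredients are in place, the verification at weight $2$ is a short finite calculation.
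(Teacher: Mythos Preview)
The paper does not supply its own proof of this statement; Theorem~\ref{thm:small-graphs-principle} is quoted as a result of Brown \cite[Section~9.3]{Brown:FeynmanAmplitudesGalois}, so there is no argument in the present article to compare yours against.

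Your outline follows Brown's strategy in broad strokes, and the reduction in your first two paragraphs is sound: linearity reduces to single graphs, the coaction theorem~\eqref{eq:coaction-theorem} confines all Galois conjugates of $P^{\motivic}(G)$ to $\GraphPeriods^{\motivic}(G)$, and the question becomes one of constraining the weight-$2$ graded piece. But what you have written is, by your own admission, a plan rather than a proof: the two ``obstacles'' you list at the end---making the small-graphs bound effective and controlling non-global periods---are precisely the substance of Brown's argument, and you have not carried out either of them.

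One point of framing to tighten: your enumeration step proposes to ``list all $\phi^4$ graphs of sufficiently small edge count together with the relevant minor-closure'' and verify that their weight-$2$ periods lie in $\QQ\cdot\mzv[\motivic]{2}$. But minors of $\phi^4$ graphs are in general not $\phi^4$, and the weight-$2$ part of $\GraphPeriods^{\motivic}(G)$ is governed by the low-weight cohomology of the graph hypersurface and its boundary strata, not by a naive enumeration of small $\phi^4$ graphs. The exclusion of $\Li_2^{\motivic}(\RU_6)$ (and likewise of $\log^{\motivic}(2)$, cf.\ the remark preceding Theorem~\ref{thm:small-graphs-principle}) ultimately rests on that cohomological input. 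Your $\Delta'$-criterion correctly separates $\Li_2^{\motivic}(\RU_6)$ from $\mzv[\motivic]{2}$, but it only becomes decisive once one knows those are the only weight-$2$ candidates.
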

Other examples of MPLs at sixth roots of unity are the eight and nine loop periods $P_{8,33}$ and $P_{9,136}=P_{9,149}$. All these periods have $c_2$-invariant $-z_3$ and their expressions in terms of the $f^6_\bullet$ look similar to \eqref{eq:p711} with very large numerators and denominators. They also obey Conjecture~\ref{con:coaction}. Explicit results are in the attached files.

Alternating sums (MPLs at $-1$) of weights $12$, $14$ and $15$ were found in $\phi^4$ periods of loop orders nine and ten, see Sections~\ref{sec:c20} and \ref{sec:z2}.
The nine loop weight drop period $P_{9,36}=P_{9,75}$ mixes weight $13$ (an MZV) and weight $12$. The weight $12$ part of $P_{9,36}=P_{9,75}$ is
\begin{equation}\label{eq:p93612}\begin{split}
	\sW_{12}P_{9,36}
	&=\sW_{12}P_{9,75}
	=p_{12}+Q_{12,4}
	\quad\text{where}\\
	Q_{12,4}
	&=\lfrac{8634368}{135}f^2_1f^2_5f^2_3f^2_3
	 -\lfrac{3899392}{135}f^2_1f^2_3f^2_5f^2_3
	 +\lfrac{458752}{27}f^2_1f^2_3f^2_3f^2_5
	\\&\quad
	 +\lfrac{222208}{38475}f^2_1f^2_3\pi^8
	 -\lfrac{71206701679851520}{59408350617}f^2_1f^2_{11}.
\end{split}\end{equation}
Here, $p_{12}\in\sW_{12}\PhiPeriods[8]$ is an MZV. Note that \eqref{eq:p93612} is consistent with Scenarios~1 and 2 and with the small graphs principle in complete analogy to \eqref{eq:p711}:
All non-trivial Galois conjugates are MZVs in $\PhiPeriods[8]^{\motivic}$ and $f_1^2$ never appears as right-most letter (a proven general restriction on $\phi^4$ periods which follows from
\cite[Theorem~9.4]{Brown:FeynmanAmplitudesGalois}).

Conjecturally, there exist no alternating sums of weight less than $12$ in $\PhiPeriods^{\motivic}$.

An alternating sum of weight 15 was found in $P_{9,67}$, a period with $c_2$-invariant $-z_2$. It is (conjecturally) one of the smallest graphs in $\phi^4$ with $c_2$-invariant $-z_2$ which
is not an MZV. One of its Galois conjugates is the weight 12 alternating sum
\begin{equation}\label{eq:Q12,5}\begin{split}
	Q_{12,5}
	&\defas 
	\lfrac{777728}{45}f^2_1f^2_5f^2_3f^2_3
	+\lfrac{990976}{45}f^2_1f^2_3f^2_5f^2_3
	-\lfrac{163072}{9}f^2_1f^2_3f^2_3f^2_5
	\\&\quad\ 
	-\lfrac{194432}{38475}f^2_1f^2_3\pi^8
	+\lfrac{9739832477359040}{25460721693}f^2_1f^2_{11}
\end{split}
\end{equation}
which is \emph{not} a rational multiple of $Q_{12,4}$ in \eqref{eq:p93612}. However, $Q_{12,5}$ was found in the weight 12 parts of several weight drop
\emph{non}-$\phi^4$ eight loop periods, e.g.\ in $P^\non4_{8,433}$.

So, for all known periods Scenario~2 is true. For Scenario~1 we miss $Q_{12,5}$ in the space of known $\phi^4$ periods.
Assuming Scenario~1 and combining all conjectures on the structure of $\phi^4$ periods (see Section~\ref{sec:phi4}) we can localize $Q_{12,5}$ in the period of a single ten loop graph: $P_{10,425}$, see Conjecture~\ref{con:L10}. Regretfully, the period of this graph cannot yet be calculated.
Scenario~1 holds for all known graphs of at most eight loops. This is explicit in Table~\ref{tab:periods} at the end of this article.
Text files with all known data are attached to this article and also described in Section~\ref{sec:data}.

Conjecturally, we know all mixed Tate periods in $\PhiPeriods$ up to weight $11$, see Section~\ref{sec:c20}. Assuming Scenario~1 we find the dimensions shown in Table~\ref{tab:dimensions}.
\begin{remark}\label{remark:hom}
Some (though only few) of the mixed Tate $\phi^4$ periods mix weights (see Conjecture~\ref{con:weightdrop}), so it is not clear that they span a weight homogeneous subspace of MPLs.
However, this turns out to be true for the subspace spanned by all periods that we were able to compute.
In Table~\ref{tab:dimensions} we assume that this holds for all mixed Tate $\phi^4$ periods of weight $\leq 12$ and we expect exceptions to weight homogeneity (if at all) only at high loop orders. 
Note that weight homogeneity is equivalent to stability under the action of the reductive part of the Galois group, which is isomorphic to $\QQ^{\times}$ and $\lambda\in\QQ^{\times}$ acts on a period of pure weight $w$ by multiplication with $\lambda^w$ \cite{Brown:NotesMotivicPeriods}.

Up to weight $12$, our data is also consistent with a $\QQ$-algebra structure of $\PhiPeriods$ (see Question~\ref{quest:alg}). This means that all products of known $\phi^4$ periods with total weight $\leq 12$ can be expressed as $\QQ$-linear combinations of known $\phi^4$ periods.
\end{remark}

\begin{table}
\begin{tabular}{lrrrrrrrrrrrrr}\toprule
weight&0&1&2&3&4&5&6&7&8&9&10&11&12\\\midrule
$\QQ-$dimension&1&0&0&1&0&1&1&1&2&2&3&5&8?\\
algebra generators&
1&0&0&1&0&1&0&1&1&1&1&3&5?\\
\bottomrule
\end{tabular}%
\caption{Conjectured dimensions of the mixed Tate subspace of $\PhiPeriods$ (periods expressible as MPLs) in different weights.{\protect\footnotemark} The last row is obtained by discarding products of lower-weight periods, like $\mzv[2]{3}$ which spans $\PhiPeriods$ in weight $6$. Details are given in Section~\ref{sec:c20}.}%
\label{tab:dimensions}%
\end{table}
\footnotetext{The weight is only a filtration on $\Periods^{\motivic}$, but becomes a grading in this very special case of polylogarithms. Note that our weights are half the Hodge-theoretic weights; see \cite{Brown:NotesMotivicPeriods} for details.}

In Section~\ref{sec:phi4} we give a series of conjectures on the structure of $\PhiPeriods$. By counter-examples we prove that some of these conjectures are false in $\LogPeriods$. This gives a remarkable difference between $\PhiPeriods$ and $\LogPeriods$. We conclude that there possibly exists a structure which is only present in $\PhiPeriods$. Mathematically it is not yet clear what mechanism distinguishes $\phi^4$ periods from periods of all graphs.
Finally we discuss potential counter-examples to Conjecture~\ref{con:coaction} in Sections~\ref{sec:L10} and \ref{sec:non-MPL}.

We close the introduction with the remark that the first three orders of the quantum electrodynamical contribution to the anomalous magnetic dipole moment of the electron are alternating sums which also have the word $f_1^2 f_3^2$ but not the word $f_3^2 f_1^2$ (in line with \eqref{eq:p93612}): The third order contribution to $(g-2)/2$ is given by \cite{LaportaRemiddi:g2alpha3}
\begin{equation}\label{eq:electron}%
	\lfrac{83}{72}f_3\pi^2-\lfrac{215}{24}f_5-\lfrac{350}{9}f^2_1f^2_3+\lfrac{511}{2160}\pi^4+\lfrac{139}{18}f_3+\lfrac{298}{9}f^2_1\pi^2+\lfrac{17101}{810}\pi^2+\lfrac{28259}{5184}.
\end{equation}
As the coaction theorem, Eq.~\eqref{eq:coaction-theorem}, also applies to Feynman periods with masses and momenta \cite{Brown:FeynmanAmplitudesGalois},
our findings related to the $\beta$-function of $\phi^4$ give hope that one might find similarly remarkable structures in other physical observables, such as $g-2$. 
Interestingly, recent computations \cite{AbreuBrittoGroenqvist:massive,AbreuBrittoDuhrGardi:cuts2cop} of the coaction of mass- and momentum dependent Feynman periods already led to a diagrammatic coaction formula for one-loop graphs \cite{Abreu:Amplitudes2015}.

\subsection*{Acknowledgements.}
Through various stages of this work, Erik Panzer was supported by Humboldt University, the CNRS via ERC grant 257638 and All Souls College, Oxford. 
This work has been done while Oliver Schnetz was visiting scientist at Humboldt University, Berlin.
Oliver Schnetz is supported by DFG grant SCHN~1240/2-1.
Both authors are grateful to Francis Brown for many extremely valuable discussions and comments on this draft. We also thank Cl\'{e}ment Dupont for pointing out the issues addressed in Section~\ref{sec:non-MPL} and Claire Glanois for helpful feedback on motivic MPLs at roots of unity.
Furthermore we are indebted to two very attentive referees for thoroughly reading an earlier version of this manuscript and giving many valuable suggestions.

Computations were performed on the computers beta, gamma, and delta of the Institute of Mathematics, Humboldt Universit\"{a}t zu Berlin using {\Maple} version 16.
Several figures were created using {\Axodraw} \cite{Vermaseren:Axodraw} and {\JaxoDraw} \cite{BinosiTheussl:JaxoDraw}.

\section{Iterated integrals and multiple zeta values}

\subsection{Iterated integrals}
In \cite{Chen} Chen developed a theory of iterated path integration on general manifolds.
Here, we need only the elementary one-dimensional case of a punctured sphere $\CC\setminus \Sigma$ for some finite set $\Sigma \subset \CC$.
Fix a weight $n\in\NN$, a path $\gamma\colon [0,1]\To\CC\setminus \Sigma$ from $a_0=\gamma(0)$ to $a_{n+1}=\gamma(1)$ and differential forms $\omega_i(z)\defas\dd z/(z-a_i)$ with $a_i \in \Sigma$ for $i\in\set{1,2,\ldots,n}$. 
Then, the iterated integral of the word $a_n\!\ldots a_1$ (considering $\Sigma$ as an alphabet) along $\gamma$ is defined by
\begin{equation}
	I(a_{n+1},a_n\ldots a_1,a_0)_\gamma
	=
	\int_{1>t_n>\cdots>t_1>0}\gamma^\ast\omega_n(t_n)\wedge\ldots\wedge\gamma^\ast\omega_1(t_1),
\end{equation}
where the integration simplex is endowed with the standard orientation and $\gamma^\ast\omega_i$ is the pullback of $\omega_i$ along $\gamma$.
Iterated path integrals have the following properties:
\begin{itemize}
	\defaxiom{empty}
		$I(a_1,a_0)_\gamma=1$ (by definition).
	\defaxiom{homotopy}
		$I(a_{n+1},w,a_0)_\gamma$ depends only on the homotopy class of $\gamma$.

	\defaxiom{constant}
		$I(a_0,w,a_0)_\gamma=0$ for the constant path $\gamma(t)=a_0$ and non-empty words $w$.

	\defaxiom{reverse}
		Path reversal: For the reversed path $\gamma^{-1}(t) = \gamma(1-t)$,
		\begin{equation*}
			I(a_{n+1},a_n\ldots a_1,a_0)_{\gamma^{-1}}
			=
			(-1)^nI(a_0,a_1\ldots a_n,a_{n+1})_\gamma
			.
		\end{equation*}
	
	\defaxiom{concat}
		Path concatenation:
	If $\gamma=\gamma_2 \concat \gamma_1$ is the composition of (first) $\gamma_1$ and (second) $\gamma_2$ meeting at $x=\gamma_1(1)=\gamma_2(0)\in\CC\setminus \Sigma$, then
	\begin{equation*}
		I(a_{n+1},a_n\ldots a_1,a_0)_\gamma
		=
		\sum_{k=0}^nI(a_{n+1},a_n\ldots a_{k+1},x)_{\gamma_2}I(x,a_k\ldots a_1,a_0)_{\gamma_1}
		.
	\end{equation*}

	\defaxiom{shuffle}
	Shuffle product:
	For $n=r+s$ let $\shuffles{r,s}\subset \perms{n}$ denote the $(r,s)$-shuffles
	\begin{equation*}
		\shuffles{r,s}
		=\setexp{\sigma\in\perms{n}}{%
			\sigma^{-1}(1)<\ldots<\sigma^{-1}(r)
			\ \text{and}\ 
			\sigma^{-1}(r+1)<\ldots<\sigma^{-1}(n)
		},
	\end{equation*}
	a subset of the group $\perms{n}$ of permutations of $\set{1,\ldots,n}$. Then,
	\begin{equation*}
		I(a_{n+1},a_n\ldots a_{r+1},a_0)_\gamma I(a_{n+1},a_r\ldots a_1,a_0)_\gamma
		=
		\sum_{\mathclap{\sigma\in\shuffles{r,s}}}I(a_{n+1},a_{\sigma(n)}\ldots a_{\sigma(1)},a_0)_\gamma
		.
	\end{equation*}

	\defaxiom{moebius}
	Chain rule: Every M\"{o}bius transformation $f$ maps a word $w$ by pull-back of the differential forms $\omega_i$ to a linear combination $f^\ast w$ of words in the alphabet $f^{-1}(\Sigma \cup \set{\infty})\setminus\set{\infty}$. The iterated integral transforms as
	\begin{equation*}
		I(f(a_{n+1}),w,f(a_0))_{f(\gamma)}
		=
		I(a_{n+1},f^\ast w,a_0)_\gamma
		,
	\end{equation*}
	where the iterated integral on the right-hand side is linearly extended to linear combination of words.
\end{itemize}
Although by definition one assumes $a_0$, $a_{n+1}\notin \Sigma$, it is possible by a limiting procedure to define iterated integrals for the singular cases $a_0$, $a_{n+1}\in \Sigma$.
See \cite{Panzer:PhD} for a detailed discussion of iterated integrals in this special setup.

We only consider straight paths and suppress the subscript $\gamma$.
As a special case we obtain the multiple polylogarithms (MPLs) from \cite{Goncharov:PolylogsArithmeticGeometry} in the form
\begin{equation}\label{eq:Li}
	(-1)^r I(z,\underbrace{0\ldots01}_{n_r}\ldots\underbrace{0\ldots01}_{n_1},0)
	= \Li_{n_r,\ldots, n_1}(z)
	= \sum_{k_r>\cdots>k_1\geq1}\frac{z^{k_r}}{k_r^{n_r}\!\cdots k_1^{n_1}}.
\end{equation}

By \eqref{eq:MZV}, iterated integrals over the letters $\Sigma=\set{0,1}$ are MZVs. In this article we also need MPLs at other roots of unity. 
In particular, we consider the three-letter alphabet $\Sigma = \set{0,1,\exp(2\pi\iu/N)}$ for $N\in\set{1,2,3,4,6}$. In the case of alternating sums $N=2$ a basis was first conjectured by D.~Broadhurst \cite{Broadhurst:IrreducibleEulerSums}. The more general case was treated by P.~Deligne in \cite{Deligne:GroupeFondamental}.

\subsection{Motivic iterated integrals}
One can define motivic iterated integrals $I^{\motivic}(\vect{a})$ which share the above axioms and evaluate to the ordinary iterated integrals under $\per\left( I^{\motivic}(\vect{a}) \right) = I(\vect{a})$, see \cite{Brown:SingleValued,Brown:MMZ,Glanois:Descents} for details.
While questions of transcendence and $\QQ$-linear independence are extremely hard for ordinary iterated integrals, they are (for some geometries) perfectly understood and proven in the motivic setup.
In particular, it is only possible to prove dimension formulae like Corollary~\ref{cor:dims} for motivic objects.
Moreover, strictly speaking, the coaction exists only for motivic iterated integrals. 
In order to get a well-defined coaction on ordinary iterated integrals, we need the conjectured injectivity of the evaluation map $\per$.
In the following we (tacitly) work with motivic iterated integrals.

A purely combinatorial formula for the Galois coaction on (motivic) iterated integrals was found by A.~Goncharov \cite{Goncharov:FundamentalGroupoids} in the de Rham setting (where $2\pi\iu=0$).
It was proved by F.~Brown in \cite{Brown:MMZ} that it extends to motivic iterated integrals:
\begin{align}\label{eq:Goncoact}
	&\Delta I(a_{n+1},a_n\!\ldots a_1,a_0)
	= 
	\\&
	\sum_{k=0}^n\;\sum_{i_{k+1}>\cdots>i_1>i_0}
		\prod_{p=0}^k I^{\dR}(a_{i_{p+1}},a_{i_{p+1}-1}\ldots a_{i_p+1},a_{i_p})
		\otimes I(a_{i_{k+1}},a_{i_k}\!\ldots a_{i_1},a_{i_0}),
	\nonumber%
\end{align}
where the sum is over indices satisfying $i_0=0$ and $i_{k+1}=n+1$. 
On the left-hand side of the tensor product, $I^\dR(\ldots)$ is the image of the projection of $I(\ldots)$ onto the factor algebra of iterated integrals modulo the ideal generated by $2\pi\iu$.\footnote{It is not clear if this picture of de Rham periods generalizes to more complicated Feynman graphs. According to \cite[Section~4.3]{Brown:NotesMotivicPeriods}, such a representation via a projection of motivic periods works only for \emph{separated} graphs \cite[Section~9.2]{Brown:FeynmanAmplitudesGalois}. It is yet unknown which graphs have this property.}

\begin{ex}\label{ex:53}
Consider $I(1,00001001,0)=\mzv{5,3}$ according to \eqref{eq:Li}.
The terms with $k=0$ and $k=n$ in the coaction are trivial, so that
\begin{equation*}
	\Delta\mzv{5,3}
	=1\otimes\mzv{5,3}+\mzv[\dR]{5,3} \otimes 1+\cored\mzv{5,3},
\end{equation*}
where $\cored\mzv{5,3}$ contains the summands with $1\leq k \leq 7$.
Note that iterated integrals of words with only one type of letters vanish: For all $m\geq 1$ and arbitrary $j$ and $k$,
\begin{equation}\label{eq:vanishing-I}
	I(a_j,\underbrace{0\ldots0}_{m},a_k)
	= I(a_j,\underbrace{1\ldots1}_{m},a_k)
	=0.
\end{equation}
This follows from axiom \axiom{constant} when $a_j=a_k$ and otherwise from $I(1,0\ldots0,0)=(1/m!)I(1,0,0)=0$ via \axiom{shuffle} and regularization (using \axiom{reverse} and \axiom{moebius} with $f(z)=1-z$ this extends to all combinations of $0\ldots0$ and $1\ldots1$ with $a_j,a_k\in\set{0,1}$).
Thus we must have $k\geq 2$ and the letters $a_{i_1},\ldots,a_{i_k}$ must contain at least one $0$ and at least one $1$ for $I(a_{i_{k+1}},a_{i_k}\!\ldots a_{i_1},a_{i_0})$ to be non-zero. This leaves at most one letter $1$ for the de Rham side, so only one of the de Rham factors can have a letter $1$. For the product to be non-zero, \eqref{eq:vanishing-I} dictates that all other factors must have the form $I^{\dR}(a_{i_{p+1}},a_{i_p})=1$. So in our case of depth 2, the coaction~\eqref{eq:Goncoact} boils down to
\begin{equation*}
	\cored I(a_{n+1},a_n\!\ldots a_1,a_0)
	= \sum_{\mathclap{0\leq j<k-1\leq n}} 
		I^{\dR}(a_{k},a_{k-1}\ldots a_{j+1},a_j)
		\otimes
		I(a_{n+1},a_n\!\ldots a_k a_j \ldots a_1,a_0)
	.
\end{equation*}
The only summands in this formula for $\mzv{5,3}$ that do not vanish due to \eqref{eq:vanishing-I} are $(j,k)=(0,4)$, $j=1$ with $5\leq k \leq 8$, and $j=2,3$ with $k=9$. Explicitly, this yields
\begin{equation}\begin{split}\label{eq:cored-zeta(5,3)}
	\cored \mzv{5,3}&=
	(I^{\dR}(0,100,1)+I^{\dR}(1,001,0)) \otimes I(1,00001,0)
	\\&+\phantom{(}I^{\dR}(0,0100,1) \otimes I(1,0001,0)
	\\&+(I^{\dR}(1,00001,0)+I^{\dR}(0,00100,1)) \otimes I(1,001,0)
	\\&+(I^{\dR}(1,000010,0)+I^{\dR}(0,000100,1)) \otimes I(1,01,0).
\end{split}\end{equation}
All of these iterated integrals are proportional to Riemann zeta values by \axiom{reverse} and
\begin{equation*}
	I(1,\underbrace{0\ldots0}_{n-m}1\underbrace{0\ldots0}_m,0)
	= (-1)^m \binom{n}{m} I(1,\underbrace{0\ldots0}_{n}1,0)
	= (-1)^{m+1} \binom{n}{m} \mzv{n+1},
\end{equation*}
which follows from \axiom{shuffle} and the regularization $I(1,0,0)=0$. For example,
\begin{align*}
	0 &= I(1,0,0) I(1,00001,0)
	   = 5 I(1,000001,0) + I(1,000010,0)
	\\
	  &= -5\mzv{6} + I(1,000010,0)
	  .
\end{align*}
After we rewrite \eqref{eq:cored-zeta(5,3)} using \axiom{reverse} and the formula from above, we get
\begin{equation*}
	\cored \mzv{5,3}
	= 3 \mzv[\dR]{4} \otimes \mzv{4}
	-5\mzv[\dR]{5} \otimes \mzv{3}
	-15 \mzv[\dR]{6} \otimes \mzv{2}
	=-5\mzv[\dR]{5}\otimes \mzv{3}
\end{equation*}
since $\mzv{4}=\pi^4/90$ and $\mzv{6}=\pi^6/945$ vanish in the de Rham quotient by $(2\pi\iu)^{\dR}=0$.
\end{ex}

\subsection{The Deligne basis}\label{sec:deligne}
In \cite{Brown:MMZ} F.~Brown proves that MZVs in 2s and 3s form a basis of (motivic) MZVs (the Hoffman basis).
In 1996 D.~Broadhurst conjectured a basis for alternating sums \cite{Broadhurst:IrreducibleEulerSums}. This conjecture was first proved by P.~Deligne \cite{Deligne:GroupeFondamental} (via a study of the motivic fundamental groupoid of $\PP^1\setminus\set{0,\pm 1,\infty}$ and showing that it generates the Tannakian category of mixed Tate motives over $\ZZ[\tfrac{1}{2}]$), who also considers MPLs at some other roots of unity (for alternative proofs based on \eqref{eq:Goncoact}, see C.~Glanois \cite{Glanois:Descents}). 
For $N \in \set{2,3,4,6}$ let us define the alphabets $X_N$ by
\begin{equation}
	X_N
	\defas\begin{cases}
		\set{1,3,5,7,\ldots} &\text{if $N=2$,}\\
		\set{1,2,3,4,\ldots} &\text{if $N=3,4$ and}\\
		\set{2,3,4,5,\ldots} &\text{if $N=6$.}\\
	\end{cases}
\end{equation}
Let $X^\ast_N$ denote the set of words in letters $X_N$ (the free monoid generated by $X_N$). 
The order $1 \succ 2 \succ 3 \succ \ldots$ on $X_N$ induces a lexicographical order on $X^{\ast}_N$.
A Lyndon word is a non-empty word $w\in X_N^{\ast}$ which is inferior to each of its strict right factors, i.e.\ for all factorizations $w=uv$ with $u,v \neq 1$, we find $w<v$.
\begin{thm}[P.~Deligne \cite{Deligne:GroupeFondamental}]
\label{thm:Deligne}%
	Let $N \in \set{2,3,4,6}$, $\RU_N=\exp(2\pi\iu/N)$, and $\MZV[N]$ be the $\QQ$-algebra generated by motivic iterated integrals in the letters $\set{0, 1, \RU_N}$.
	Let $\pipow{N}$ be 2 if $N=2$ and 1 otherwise. 
	Then, the algebra $\MZV[N]$ is freely generated by the \emph{Deligne basis}
	$
		\DeligneBasis[N]
		\defas 
		\set{(2\pi\iu)^{\pipow{N}}}
			\cup
			\setexp{\Li_w(\RU_N)}{w\in X^\ast_N,w\ \text{Lyndon}}
	$.
\end{thm}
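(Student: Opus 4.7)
The plan is to identify $\MZV[N]$ with the ring of motivic periods of the Tannakian category $\mathrm{MT}(\mathcal{O}_N)$ of mixed Tate motives over $\mathcal{O}_N = \ZZ[\RU_N, 1/N]$. The geometric heart of the argument is to show that for $N \in \set{2,3,4,6}$ the motivic fundamental groupoid of $\PP^1 \setminus (\set{0,\infty} \cup \mu_N)$, based at tangential basepoints at $0$, generates $\mathrm{MT}(\mathcal{O}_N)$ as a Tannakian category. The values $N \in \set{2,3,4,6}$ are singled out by the fact that $\mathcal{O}_N$ has class number one and a particularly simple group of units, so that every motivic extension class in $\mathrm{Ext}^1_{\mathrm{MT}(\mathcal{O}_N)}(\QQ(0),\QQ(n))$ is realised by one of the explicit periods $\Li^{\motivic}_n(\RU_N)$ or $(2\pi\iu)^{\pipow{N}}$.

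Assuming this Tannakian generation, the graded dimensions of $\MZV[N]$ coincide with those of the coordinate ring of the motivic Galois group, whose pro-unipotent radical $U$ has by Borel's theorem a free Lie algebra with $\dim_{\QQ} K_{2n-1}(\mathcal{O}_N)_{\QQ}$ generators in each weight $n \geq 1$. A direct K-theoretic computation identifies this number with $|X_N \cap \set{n}|$ for each $n$. The upper bound on $\dim \MZV[N]$ in each weight then follows by counting Lyndon words in $X_N$ via the Witt/necklace formula and combining with the single extra generator $(2\pi\iu)^{\pipow{N}}$ that tracks the weight grading. For the matching lower bound, i.e.\ algebraic independence of $\DeligneBasis[N]$, one uses the motivic coaction~\eqref{eq:Goncoact}: after passing to the associated graded of the coradical filtration, $\Delta$ degenerates to deconcatenation on the tensor algebra spanned by motivic single polylogarithms, and a theorem of Radford identifies Lyndon words as a transcendence basis of a free shuffle algebra. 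A triangular induction over the lexicographic order, using iterated derivations dual to the single Lyndon letters, then reduces algebraic independence of the $\Li^{\motivic}_w(\RU_N)$ to the classical independence of the weight-one (or weight-two, for $N=2$) generators together with $(2\pi\iu)^{\pipow{N}}$.

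The main obstacle is the Tannakian generation itself. Producing the surjection from the de Rham fundamental group of $\PP^1 \setminus (\set{0,\infty}\cup\mu_N)$ onto the motivic Galois group of $\mathrm{MT}(\mathcal{O}_N)$ requires Beilinson-Soul\'{e} vanishing for $\mathcal{O}_N$ (which is supplied by Borel's theorem), control of ramification at the primes dividing $N$ (which is why we must invert $N$), and a verification that no motivic extension class escapes the fundamental group. This last verification fails for general $N$ and is exactly where $N \in \set{2,3,4,6}$ enters: the rank patterns of $K_{2n-1}(\mathcal{O}_N)_{\QQ}$ in low weights are small enough to be exhausted by the explicit cyclotomic units, and this is what makes the match with Lyndon-word counts tight. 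Once this generation statement is in hand, the remaining K-theoretic count, the Lyndon-word enumeration, and the coaction-based independence argument are essentially formal.
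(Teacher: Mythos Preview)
The paper does not give a proof of this theorem; it is stated as a cited result of Deligne. What the paper does provide is a remark immediately following the statement, explaining how the Lyndon-word formulation for $N=3,4$ is extracted from Deligne's Th\'eor\`eme~6.1 (linear independence of the $\Li_w(\RU_N)$ over $\QQ[2\pi\iu]$): one identifies $\MZV[N]$ with the shuffle algebra $\QQ\langle 0,1\rangle$ via the map $(2\pi\iu)^k\Li_{n_r,\ldots,n_1}(\RU_N)\mapsto N^k\,0^{\shuffle k}\shuffle(\text{word})$, and then invokes Radford's theorem that Lyndon words freely generate a shuffle algebra. For $N=2,6$ the paper simply points to Deligne's Th\'eor\`emes~7.2 and~8.9.

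Your proposal is a sketch of the strategy underlying Deligne's original argument (Tannakian generation of $\mathrm{MT}(\mathcal{O}_N)$ by the motivic $\pi_1$, Borel's computation of $K_{2n-1}(\mathcal{O}_N)_{\QQ}$, Lyndon/Radford counting, and independence via the coaction). That outline is broadly correct, but it is not what the paper does: the paper takes Deligne's result as a black box and only supplies the short algebraic deduction (Radford) needed to pass from Deligne's linear-independence statement to the specific Lyndon-word generating set in the case $N=3,4$. One small inaccuracy in your sketch: the base ring is not uniformly $\ZZ[\RU_N,1/N]$. For $N=6$ one works over $\ZZ[\RU_6]=\ZZ[\RU_3]$ without inverting $6$; this is precisely why the alphabet $X_6$ starts at $2$ rather than $1$ (there is no weight-one generator because the unit group contributes nothing new), whereas for $N=2,3,4$ one does invert $N$ and picks up the weight-one letter.
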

\begin{remark}
	For $N=2,6$ this is stated as Th\'{e}or\`{e}me~7.2 and Th\'{e}or\`{e}me~8.9 in~\cite{Deligne:GroupeFondamental}, but a comment is due for $N=3,4$. In these cases, the relevant Th\'{e}or\`{e}me~6.1 says that the motivic iterated integrals $\Li_w(\RU_N)$, with $w$ ranging over all words $w \in \NN^\ast=X_N^\ast$, are linearly independent over $\QQ[2\pi\iu]$.
	Therefore, the combination of the relation \eqref{eq:Li} with $2\pi\iu/N = \log(\RU_N) = I(\RU_N,0,0)$ yields an isomorphism\footnote{%
		Note that, in contrast, this map is not surjective for $N=2,6$: If $N=2$, only words with an even number of $0$'s are generated; if $N=6$, no words containing consecutive $1$'s appear.
	},
	\begin{equation*}
		(2\pi\iu)^k \Li_{n_r,\ldots, n_1}(\RU_N)
		\mapsto
		N^k 0^{\shuffle k} \shuffle \Big(
			(-1)^r\underbrace{0\ldots01}_{n_r} \ldots \underbrace{0\ldots01}_{n_1}
		\Big)
		,
	\end{equation*}
	between $\MZV[N]$ and the shuffle algebra $\QQ\langle \set{0,1} \rangle$ of words in the letters $0$ and $1$.
	This map preserves the multiplication by \axiom{shuffle} and allows us to invoke Radford's theorem~\cite{Radford:BasisShuffleAlgebra} to pick an algebra basis in terms of Lyndon words.
	It is easy to see that the Lyndon words in the $\set{0,1}$-alphabet (with respect to the order $1 \succ 0$) are mapped by \eqref{eq:Li} to the Lyndon words $w \in \NN^{\ast}=X_N^{\ast}$ as specified in Theorem~\ref{thm:Deligne}.
\end{remark}
A shuffle algebra (see \eqref{eq:shuffle}) is a polynomial algebra freely generated by its Lyndon words \cite{Radford:BasisShuffleAlgebra}.
Therefore, $\MZV[N]$ is isomorphic to $\lin_{\QQ[(2\pi\iu)^{\pipow{N}}]}X_N^\ast$.%
\footnote{%
	As isomorphism one could map $\Li_w(\RU_N)$ to the word $w$ and extend to products in $\MZV[N]$ by shuffles in $\lin_{\QQ} X^\ast_N$. The subtle construction in the next subsection has the sole aim to lift the coalgebra structure in $\MZV[N]$ to deconcatenation of words in $X^\ast_N$.
}
By counting the words in each weight, we obtain the following dimensions:
\begin{cor}\label{cor:dims}
	Let $d_{N,n}$ be the dimension of the subspaces of $\MZV[N]$ at weight $n$.
	If $N \in \set{2,6}$ then $d_{N,n}$ is the Fibonacci sequence $d_{N,n+2} = d_{N,n+1} + d_{N,n}$ with $d_{N,0}=d_{N,1}=1$.
	If $N\in\set{3,4}$ then $d_{N,n}=2^n$.
\end{cor}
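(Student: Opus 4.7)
The plan is to read off the Hilbert series of $\MZV[N]$ directly from Theorem~\ref{thm:Deligne}. By the remark following that theorem, $\MZV[N]$ is isomorphic as a graded $\QQ$-vector space to $\QQ[(2\pi\iu)^{\pipow{N}}]\otimes\QQ\langle X_N^\ast\rangle$, where a letter $k\in X_N$ carries weight $k$, a word $w=k_r\cdots k_1\in X_N^\ast$ carries weight $k_1+\cdots+k_r$ (as forced by \eqref{eq:Li} together with the convention recorded in the footnote of Table~\ref{tab:dimensions} that $2\pi\iu$ has weight $1$ in our grading), and $(2\pi\iu)^{\pipow{N}}$ contributes weight $\pipow{N}$. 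Writing $D_N(t)=\sum_{n\geq 0} d_{N,n}t^n$ for the weight-graded Poincar\'e series, this decomposition yields
$$
D_N(t) \;=\; \frac{1}{1-t^{\pipow{N}}}\cdot\frac{1}{1-\sum_{k\in X_N}t^k},
$$
because a free monoid on letters of weights $k_1,k_2,\ldots$ has the standard word-counting generating function $(1-\sum_i t^{k_i})^{-1}$, and $\QQ[(2\pi\iu)^{\pipow{N}}]$ contributes $(1-t^{\pipow{N}})^{-1}$.

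All that remains is to simplify the right-hand side in each of the four cases. For $N=2$ one has $\pipow{2}=2$ and $\sum_{k\in X_2}t^k=t+t^3+t^5+\cdots=t/(1-t^2)$, so
$$
D_2(t)\;=\;\frac{1}{1-t^2}\cdot\frac{1-t^2}{1-t-t^2}\;=\;\frac{1}{1-t-t^2},
$$
whose coefficients obey $d_{2,n+2}=d_{2,n+1}+d_{2,n}$ with $d_{2,0}=d_{2,1}=1$, i.e.\ the Fibonacci numbers. For $N=6$ one has $\pipow{6}=1$ and $\sum_{k\in X_6}t^k=t^2/(1-t)$, so the factor $1-t$ cancels and $D_6(t)=1/(1-t-t^2)$ is again the Fibonacci series. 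For $N\in\set{3,4}$ one has $\pipow{N}=1$ and $\sum_{k\in X_N}t^k=t/(1-t)$, so $D_N(t)=1/(1-2t)=\sum_{n\geq 0}2^n t^n$ and therefore $d_{N,n}=2^n$.

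There is no substantive obstacle: once the graded identification is in place, the argument is pure bookkeeping with generating functions. The non-elementary input, namely the freeness of the polynomial algebra on Lyndon words (Radford's theorem \cite{Radford:BasisShuffleAlgebra}) combined with Deligne's Theorem~\ref{thm:Deligne}, is already available to us. The only care required is to match weight conventions between motivic iterated integrals and powers of $2\pi\iu$ correctly, but both are fixed by \eqref{eq:Li} and the aforementioned footnote.
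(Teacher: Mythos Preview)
Your proof is correct and follows exactly the approach the paper indicates: the paper states that the corollary follows ``by counting the words in each weight'' via the isomorphism $\MZV[N]\isomorph\lin_{\QQ[(2\pi\iu)^{\pipow{N}}]}X_N^\ast$, and your generating-function computation is precisely that count made explicit. The paper gives no further details, so your write-up is in fact more complete than the original.
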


\begin{remark}\label{rem:Deligne}\upshape
	In fact, Deligne proved more: If $N\in\set{2,3,4}$, then $\MZV[N]$ already contains all (motivic) iterated integrals in the letters $\set{0,\RU_N^k,k=0,\ldots,N-1}$.
In particular, $\MZV[N]$ is closed under complex conjugation (this also holds for $N=6$):
	\begin{equation}
		\MZV[N]
		= \ReTeil \MZV[N] \oplus \iu \ImTeil \MZV[N]
		= ( \MZV[N] \cap \RR ) \oplus (\MZV[N] \cap \iu\RR).
	\end{equation}
Furthermore, Deligne showed that the basis $\DeligneBasis[N]$ has minimal depth:
If we set the depth of $2\pi\iu$ to zero, then every $x \in \MZV[N]$ which is a linear combination of iterated integrals of depth $\leq d$ (at most $d$ letters are non-zero)
can be written in terms of the basis $\DeligneBasis[N]$ using only words of total depth $\leq d$. The isomorphism in the next subsection preserves depths (except for the case $N = 1$).
\end{remark}

\subsection{The $f$-alphabet}\label{sec:falphabet}%
In order to generalize \eqref{eq:falphabet-MZV}, let $\pipow{N}=2$ if $N=1,2$ and $\pipow{N}=1$ otherwise. Define the \emph{$f$-alphabets}
\begin{equation}
	F_1
	\defas \set{f_3,f_5,f_7,\ldots}
	\quad\text{and}\quad
	F_N
	\defas \setexp{f^N_k}{k \in X_N}
	\quad\text{for $N\in\set{2,3,4,6}$}.
\end{equation}
The main structure theorem for (motivic) MZVs is that for $N\in\set{1,2,3,4,6}$, the $\QQ$-algebra $\MZV[N]$ (where $N=1$ refers to ordinary MZVs) is isomorphic to \cite{Glanois:Descents,Deligne:GroupeFondamental,Brown:MMZ}
\begin{equation}
	\MZVf[N] \defas \MZVfDR[N] \otimes \QQ[(2\pi\iu)^{\pipow{N}}],
	\quad\text{where}\quad
	\MZVfDR[N] \defas \QQ\langle F_N \rangle = \bigoplus_{w\in F_N^{\ast}} \QQ w
\end{equation}
denotes the (finite) $\QQ$-linear combinations of words in the letters $F_N$.
The $\QQ$-algebra $\MZVf[N]$ is endowed with the shuffle product $\shuffle$, defined iteratively by
\begin{equation}\label{eq:shuffle}
	w \shuffle 1 = 1\shuffle w = w,
	\quad 
	av\shuffle bw=a(v\shuffle bw)+b(av\shuffle w),
\end{equation}
for words $v,w\in F_N^\ast$ and letters $a,b\in F_N$. In the notation of \axiom{shuffle} the closed formula for the shuffle product is
\begin{equation*}
	f_n\ldots f_{r+1}\shuffle f_r\ldots f_1
	=
	\sum_{\mathclap{\sigma\in\shuffles{r,n-r}}}f_{\sigma(n)}\ldots f_{\sigma(1)}.
\end{equation*}
Note that $\MZVfDR[N]$ is a free commutative algebra generated by the Lyndon words in $F_N^{\ast}$ \cite{Radford:BasisShuffleAlgebra}.
We define the \emph{weight} of $(2\pi\iu)^k f^N_{n_d}\cdots f^N_{n_1}$ as $k+n_1+\ldots+n_d$. The weight gives a grading on $\MZVf[N]$.
Furthermore, deconcatenation defines a coaction $\Delta\colon \MZVf[N] \longrightarrow \MZVfDR[N] \otimes \MZVf[N]$ of the Hopf algebra $\MZVfDR[N]$ on $\MZVf[N]$,
\begin{equation}\label{eq:coact-falphabet}%
	\Delta (w) = \sum_{w=uv} u\otimes v
	\quad\text{for words $w \in \MZVfDR[N]$ and}\quad
	\Delta(2\pi\iu) = 1\otimes2\pi\iu.
\end{equation}
Crucially, $\MZV[N] \isomorph \MZVf[N]$ are isomorphic as graded comodules, i.e.\ the weight gradings are compatible (the isomorphism as graded algebras alone already implies Corollary~\ref{cor:dims} but not the independence of the explicit generators given in Theorem~\ref{thm:Deligne}) and the formula \eqref{eq:Goncoact} for the coaction on $\MZV[N]$ simplifies to \eqref{eq:coact-falphabet} on $\MZVf[N]$. This is the reason why we present all results in the $f$-alphabet. The Galois conjugates, relevant for Conjecture~\ref{con:coaction}, are easy to read off in $\MZVf[N]$. 

An explicit construction of such an isomorphism $\psi_B\colon \MZV[N] \longrightarrow \MZVf[N]$ depends on the choice of an algebra basis $B$ of $\MZV[N]$ and is explained in detail in \cite{Brown:DecompositionMotivicMZV}. In short, the construction goes as follows:
First note that $\psi_B$ should be multiplicative, so we only need to specify $\psi_B(b)$ for basis elements $b \in B$. We always assume that $B$ contains $(2\pi\iu)^{\pipow{N}}$ and set $\psi_B( (2\pi\iu)^{\pipow{N}}) = (2\pi\iu)^{\pipow{N}}$, i.e.\ $\psi_B$ is linear over $\QQ[(2\pi\iu)^{\pipow{N}}]$.
Furthermore, for each weight $n\in X_N$ there should be precisely one \emph{primitive} basis element $b_n\in B$ of this weight. It is mapped to $\psi_B(b_n) \defas f_n^N$. By primitive we mean $\cored b_n=0$ for the twice reduced coaction (compare with \eqref{eq:coact-reduced})
\begin{equation}
	\cored x
	\defas
	\Delta x - 1 \otimes x - x^{\dR} \otimes 1,
	\label{eq:coact-twicereduced}%
\end{equation}
where $x\mapsto x^{\dR}$ denotes the projection $\MZV[N]^{\motivic} \longrightarrow \MZV[N]^{\dR}$ (or $\MZVf[N] \longrightarrow \MZVfDR[N]$) onto the quotient by $(2\pi\iu)^{\pipow{N}}$. Finally, the image $\psi_B(b)$ of all remaining (non-primitive) basis elements is recursively determined by the requirement that $\cored \psi_B (b) = (\psi_B \otimes \psi_B) \cored b$. This fixes $\psi_B(b)$ up to an element of the kernel $(2\pi\iu)^n\QQ \oplus f_n^N \QQ$ (resp.\ $(2\pi\iu)^n\QQ$ if $f_n^N$ does not exist) of $\cored$ if $b$ has weight $n$.
The convention is that the coefficients of $(2\pi\iu)^n$ and $f_n^N$ in $\psi_B(b)$ are zero.

For an arbitrary period $x\in\MZV[N]$, its image $\psi_B(x)\in\MZVf[N]$ can be computed by the \emph{decomposition algorithm} from \cite{Brown:DecompositionMotivicMZV}. It exploits the same recursion via $\cored$ and determines the coefficients of the primitives by an exact numeric algorithm.
This method is very efficient and we used a computer implementation which is part of the program \cite{Schnetz:HyperlogProcedures}.

Let us illustrate the procedure in the case of MZVs ($N=1$) where we choose for $B$ the basis given in the \emph{MZV-datamine} \cite{Datamine}.
This basis contains all odd Riemann zeta values, $\mzv{2n+1} \in B$, which are primitive by \eqref{eq:coact-singlezeta} and hence mapped to $\psi_B(\mzv{2n+1}) = f_{2n+1}$.
The first non-primitive basis element is $\mzv{5,3} \in B$. Using \eqref{eq:Goncoact} we find $\cored \mzv{5,3} = - 5\mzv[\dR]{5} \otimes \mzv{3}$, see Example~\ref{ex:53}. Applying $\psi$ to this equation shows that $\psi(\mzv{5,3}) = -5 f_5 f_3$ (the constraint from $\cored$ alone would allow us to add any rational multiple of $(2\pi\iu)^8$, but the convention is not to do so).

To lighten notation we identify elements of the $f$-alphabets $\MZVf[N]$ with their preimages in $\MZV[N]$, e.g.\ in \eqref{eq:p711}--\eqref{eq:electron} and in Section~\ref{sec:phi4}. The dependence on the basis $B$ is thereby suppressed (only $N$ can be read-off from the $f$-alphabet); our convention is to use the datamine for MZVs, Deligne's basis for $N=2$ and a modified Deligne basis (see Theorem~\ref{thm:falphabet-parity} and \eqref{eq:parity-basis-Q}) for the cases $N\geq 3$. These bases are very convenient as we will discuss in detail in Section~\ref{sec:parity}.
In Table~\ref{tab:53} we show the different $f$-alphabet expressions $\psi_B(\mzv{3})$ and $\psi_B(\mzv{5,3})$ for various bases $B$.
With the data in this table it is easy to check that the coaction on $\mzv{5,3}$ commutes with the isomorphism $\phi_B$.

\begin{remark}\upshape
	Although for $N>1$ the alphabets $F_N$ of the $f$-alphabet and $X_N$ of Deligne's basis are essentially the same, the isomorphism $\psi$ drastically changes the formula for the coaction and is hence a non-trivial map.
	An expression for periods in terms of multiple polylogarithms (Deligne's basis) can in fact become very lengthy even in cases where the $f$-alphabet seems relatively simple (compare \cite[Equation~(5.1.10)]{Panzer:PhD} with \eqref{eq:p711}).
\end{remark}

\subsection{The derivation $\delta_m$}
For some practical applications the coaction is unnecessarily complicated. In the $f$-alphabet it is clear that we obtain the same information on a word $w$ by clipping off its leftmost letter \cite{Brown:DecompositionMotivicMZV}.
\begin{defn}
	For $N\in\set{1,2,3,4,6}$ and $m\in X_N$ the linear map
$ \delta_m\colon\MZVf[N]\longrightarrow\MZVf[N] $
	is defined as follows: Given any element $w(2\pi\iu)^{k\pipow{N}}$ of $\MZVf[N]$ ($w\in F_N^\ast$, $k\in\NN_0$), set
\begin{equation}
	\delta_m\big(w(2\pi\iu)^{k\pipow{N}}\big)
	\defas \begin{cases}
		u(2\pi\iu)^{k\pipow{N}} & \text{if $w=f^N_mu$ begins with $f^N_m$, and}\\
			0 & \text{otherwise.}
	\end{cases}
\end{equation}
\end{defn}
By \eqref{eq:shuffle}, the map $\delta$ is a derivation for the shuffle product
\begin{equation}
	\delta_m(v\shuffle w)
	= (\delta_m v)\shuffle w+v\shuffle(\delta_m w).
\end{equation}
By slight abuse of notation we also write $\delta_m$ for the analogous derivation on $\MZVfDR$ as well as for $\psi^{-1}\circ\delta_m\circ\psi$ acting on $\MZV[N]$ and on $\MZV[N]^{\dR}$.

Note that the map $\delta_m\colon \MZV[N]\longrightarrow \MZV[N]$ depends on the choice of an algebra basis in $\MZV[N]$.
A formula that reduces the calculation of $\delta_m$ on weight $n$ iterated integrals ($n>m$) to lower weights has significantly less terms than the full coaction \eqref{eq:Goncoact}:
\begin{multline}\label{eq:delta}
	\delta_m I(a_{n+1},a_n\ldots a_1,a_0)
	= \\
	\sum_{k=0}^{n-m}
	(\delta_m I^{\dR}(a_{k+m+1},a_{k+m}\ldots a_{k+1},a_k))
		I(a_{n+1},a_n\ldots a_{k+m+1}a_k\ldots a_1,a_0).
\end{multline}
Note that the iterated integral on the left-hand side of the tensor product has weight $m$, so that $\delta_m$ just extracts the coefficient of $f_m^N$ in
$\psi(I^{\dR}(a_{k+m+1},a_{k+m},\ldots a_{k+1},a_k))$. Formula~\eqref{eq:delta} is hence equivalent to the formula given in \cite{Brown:DecompositionMotivicMZV}.

Conjecture~\ref{con:coaction} can be reformulated using the derivation $\delta_m$:
\begin{equation*}
	\delta_m \PhiPeriods^{\motivic}
	\subseteq\PhiPeriods^{\motivic}
	\quad\text{and}\quad
	\delta_m\PhiPeriods[n]^{\motivic}
	\subseteq\LogPeriods[n-1]^{\motivic}
	\quad\text{for all $m$}.
\end{equation*}
This point of view is used in Table~\ref{tab:periods} to verify the coaction conjecture up to eight loops.

\section{The parity basis}\label{sec:parity}
For all $N\in\set{2,3,4,6}$, the $\QQ$-algebra $\MZV$ of MZVs is a natural sub-algebra of $\MZV[N]$.
This becomes particularly important in $\phi^4$ theory where all periods up to six loops are MZVs.
In the $f$-alphabet MZVs are described by the letters $F_1$ of odd weight greater than or equal to three \eqref{eq:falphabet-MZV}. 
Consistently, letters of these weights exist in the $f$-alphabet of $\MZV[N]$ for all $N\in\set{2,3,4,6}$.
It is natural to ask if we can find algebra bases $B_N$ for $\MZV[N]$ such that under $\psi_B$, words in odd letters $f_{2n+1}^N$ with $n\geq 1$ in combination with even powers of $2\pi\iu$ correspond to MZVs: $\psi_{B_N}^{-1} (\MZVf) = \MZV$ for $\MZVf \subset \MZVf[N]$.

This is the case for Deligne's basis of alternating sums ($N=2$) but it is not the case for Deligne's bases of $\MZV[N]$ for $N\in\set{3,4,6}$. For example, we find that
\begin{equation*}
	\psi_{\DeligneBasis[6]}^{-1}(f^6_5 f^6_3)
	=-\lfrac{5}{162}\mzv{5,3}+\lfrac{18426589}{4100362560}\pi^8+\lfrac{125}{8748}\mzv{5}\pi^3\iu-\frac{950}{31}\Li_8(\xi_6)
\end{equation*}
is not an MZV. Conversely, MZVs are mapped to expressions which include odd powers of $2\pi\iu$ and even weight letters (see Table~\ref{tab:53}).
\begin{table}\centering
\renewcommand{\arraystretch}{1.2}
\begin{tabular}{r>{$}l<{$}>{$}l<{$}>{$}l<{$}}
	\toprule
	basis $B$ & \psi_B(\mzv{3}) & \psi_B(\mzv{5}) & \psi_B(\mzv{5,3}) \\
	\midrule
	datamine $\MZV$ &
		f_3
	&
		f_5
	&
		-5f_5f_3\\
	Deligne $\MZV[2]$ &
		-\tfrac{4}{3} f_3^2
	&
		-\tfrac{16}{15} f_5^2
	&
		-\lfrac{64}{9}f^2_5f^2_3+\lfrac{22319}{19391400}(\pi\iu)^8\\
	Deligne $\MZV[3]$ &
		-\tfrac{9}{4} f_3^3 - \tfrac{1}{18} (\pi\iu)^3
	&
		-\tfrac{81}{40} f_5^3 + \tfrac{1}{180} (\pi\iu)^5
	&
		-\lfrac{729}{32}f^3_5 f^3_3-\lfrac{9}{16} f^3_5(\pi\iu)^3\\
	&&&-\,\lfrac{3645}{188} f^3_8-\lfrac{131}{197400}(\pi\iu)^8\\
	Deligne $\MZV[4]$ &
		-\tfrac{32}{3} f_3^4 - \tfrac{1}{3}(\pi\iu)^3
	&
		-\tfrac{512}{15} f_5^4 + \tfrac{1}{9}(\pi\iu)^5
	&
		-\lfrac{16384}{9} f^4_5 f^4_3-\lfrac{512}{9} f^4_5(\pi\iu)^3\\
	&&&-\,\lfrac{81920}{117}f^4_8-\lfrac{521123}{1474200}(\pi\iu)^8\\
	Deligne $\MZV[6]$&
		3 f_3^6 + \tfrac{5}{54} (\pi\iu)^3
	&
		\tfrac{54}{25} f_5^6 - \tfrac{17}{2700} (\pi\iu)^5
	&
		-\lfrac{162}{5}f^6_5 f^6_3-f^6_5(\pi\iu)^3\\
	&&&-\,\lfrac{30780}{31} f^6_8+\lfrac{6265163}{42184800}(\pi\iu)^8\\
	parity $\MZV[3]$ &
		-\tfrac{9}{8} f_3^3
	&
		-\tfrac{81}{80} f_5^3
	&
		-\lfrac{729}{128}f^3_5f^3_3+\lfrac{1651}{1776600}(\pi\iu)^8\\
	parity $\MZV[4]$ &
		-\tfrac{16}{3} f_3^4
	&
		-\tfrac{256}{15} f_5^4
	&
		-\lfrac{4096}{9}f^4_5f^4_3+\lfrac{18901}{4422600}(\pi\iu)^8\\
	parity $\MZV[6]$ &
		\tfrac{3}{2} f_3^6
	&
		\tfrac{27}{25} f_5^6
	&
		-\lfrac{81}{10}f^6_5f^6_3-\lfrac{1356289}{84369600}(\pi\iu)^8\\
	\bottomrule
\end{tabular}
	\caption{The $f$-alphabet expressions of $\psi(\mzv{3})$, $\psi(\mzv{5})$, and $\psi(\mzv{5,3})$ with respect to the datamine basis \cite{Datamine} of MZVs,
 Deligne's bases $\DeligneBasis[N]$ of $\MZV[N]$ from Theorem~\ref{thm:Deligne} and the parity bases $\ParityBasis[N]$ of $\MZV[N]$ from Corollary~\ref{cor:parity-even-Deligne}.}%
	\label{tab:53}%
\end{table}

One way to obtain the desired natural embedding $\MZVf \subset \MZVf[N]$ as the image under $\psi_{B_N}$ of $\MZV \subset \MZV[N]$ is to choose a basis $B_N$ which contains an algebra basis of $\MZV$, e.g.\ the datamine basis. However, it is very difficult to explicitly write down suitable generators of $\MZV[N]$ viewed as an algebra over $\MZV$. This problem of \emph{Galois descents} is studied in \cite{Glanois:PhD,Glanois:Descents}, but only for $N=2$ an explicit algebra basis for $\MZV$ in $\MZV[N]$ has been given. Here, we take a different approach to achieve the desired embedding
$\MZVf \subset \MZVf[N]$: We exploit the parity under complex conjugation $z\mapsto \conjugate{z}$. Let us first review the situation in depth one:
\begin{lem}\label{lem:depth1-2pii}
	For $n, N \in \NN$ and all $N$\textsuperscript{th} roots of unity $\RU_N$ such that $(n,\RU_N)$ $\neq$ $(1,1)$, either the real- or the imaginary part of $\Li_n(\RU_N)$ is a rational multiple of $(2\pi\iu)^n$:
	\begin{equation}
		\Li_n(\RU_N) + (-1)^n \Li_n(\RU_N^{\ast})
		\in (2\pi\iu)^n \QQ.
		\label{eq:depth1-2pii}%
	\end{equation}
\end{lem}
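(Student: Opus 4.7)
My plan is to reduce the identity to the Hurwitz Fourier series for the Bernoulli polynomials (equivalently, the Jonqui\`ere inversion formula for polylogarithms): for integer $n \geq 1$ and $x \in (0,1)$,
\begin{equation*}
	\Li_n(e^{2\pi\iu x}) + (-1)^n \Li_n(e^{-2\pi\iu x})
	= -\frac{(2\pi\iu)^n}{n!}\, B_n(x),
\end{equation*}
where $B_n \in \QQ[x]$ is the $n$-th Bernoulli polynomial. For an $N$-th root of unity $\RU_N = e^{2\pi\iu k/N} \neq 1$ one has $\RU_N^{\ast} = \RU_N^{-1} = e^{-2\pi\iu k/N}$, so substituting $x = k/N \in \QQ\cap(0,1)$ makes $B_n(k/N) \in \QQ$ and thus the right-hand side lies in $(2\pi\iu)^n\,\QQ$. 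This proves \eqref{eq:depth1-2pii} in all cases with $\RU_N \neq 1$.

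The remaining case $\RU_N = 1$ with $n \geq 2$ reduces the left-hand side of \eqref{eq:depth1-2pii} to $(1+(-1)^n)\zeta(n)$. This vanishes for odd $n$ and, for even $n = 2m$, lies in $(2\pi\iu)^{2m}\,\QQ$ by Euler's evaluation $\zeta(2m) \in \pi^{2m}\QQ$. The excluded pair $(n,\RU_N)=(1,1)$ is precisely the logarithmic singularity of $\Li_1$ at $1$, which is why the lemma must omit it.

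The main work of the argument is the Hurwitz formula itself, which is classical but demands careful attention to branches of $\log$; this bookkeeping is the only real obstacle. If one prefers a self-contained proof, it suffices to observe that the holomorphic function $g_n(z) \defas \Li_n(z) + (-1)^n \Li_n(1/z)$ (which agrees with the expression in \eqref{eq:depth1-2pii} for $|z|=1$) satisfies $z g_n'(z) = g_{n-1}(z)$ together with the explicit base case $g_1(z) = \log(-1) - \log z$. Induction then expresses $g_n$ as a weight-$n$ polynomial in $\log z$ with coefficients in the appropriate graded pieces of $\QQ[2\pi\iu]$, with integration constants pinned down by $g_n(1) = (1+(-1)^n)\zeta(n) \in (2\pi\iu)^n\,\QQ$; evaluating at $\log \RU_N \in 2\pi\iu\,\QQ$ lands in $(2\pi\iu)^n\,\QQ$ as required.
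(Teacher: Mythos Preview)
Your proof is correct and essentially the same as the paper's. The paper cites Lewin's inversion formula
\[
	\Li_n(-x) + (-1)^n \Li_n(-1/x)
	= -\frac{(2\pi\iu)^n}{n!}B_n\!\left( \frac{1}{2} + \frac{\log x}{2\pi\iu} \right),
\]
which is exactly your Jonqui\`ere/Hurwitz identity after the substitution $-x = e^{2\pi\iu t}$; and the paper's parenthetical remark that ``an inductive proof is to verify that both sides \ldots\ solve $x\,\partial_x f_n(x)=f_{n-1}(x)$ and to check the limits $n=1$ and $x=-1$'' is precisely your self-contained argument via $z\,g_n'(z)=g_{n-1}(z)$. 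Your separate treatment of $\RU_N=1$ via Euler's evaluation of $\zeta(2m)$ is the only extra detail, and it is implicitly contained in the paper's formula at $x=-1$.
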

\begin{proof}
This is a special case of the inversion formula \cite[Equation (7.20)]{Lewin:PolylogarithmsAssociatedFunctions},
	\begin{equation}\label{parodd}
		\Li_n(-x) + (-1)^n \Li_n(-1/x)
		= -\frac{(2\pi\iu)^n}{n!}B_n\left( \frac{1}{2} + \frac{\log x}{2\pi\iu} \right),
		\quad\text{if}\ 
		\abs{x}\leq1,
	\end{equation}
where the Bernoulli polynomials $B_n(x)$ are defined by $\frac{t e^{xt}}{e^t-1} = \sum_{n=0}^{\infty} \frac{t^n B_n(x)}{n!}$
and log denotes the principle branch of the logarithm. (An inductive proof is to verify that both sides define functions $f_n(x)$ that solve $x \partial_x f_n(x) = f_{n-1}(x)$ and to check the limits $n=1$ and $x=-1$.)
\end{proof}

\begin{lem}\label{lem:depth1-zeta}
	For $\RU_N \defas \exp(2\pi\iu/N)$ and any integer $n \geq 2$, we have the relations
\begin{align*}
	\Li_n(\RU_2)
		&= (2^{1-n}-1)\mzv{n},&
	2 \ReTeil \Li_n(\RU_3)
		&= (3^{1-n}-1)\mzv{n},\\
	2 \ReTeil \Li_n(\RU_4)
		&= (4^{1-n}-2^{1-n})\mzv{n},&
	2 \ReTeil \Li_n(\RU_6)
		&= (3^{1-n}-1)(2^{1-n}-1)\mzv{n},\\
	\ImTeil \Li_n(\RU_6)
		&= (2^{1-n}+1)\ImTeil \Li_n(\RU_3)
	.
\end{align*}
\end{lem}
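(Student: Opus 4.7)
The plan is to derive all five identities from the classical distribution relation for the polylogarithm, which follows immediately from summing geometric series of roots of unity termwise in the series expansion $\Li_n(z) = \sum_{k \geq 1} z^k/k^n$. Namely, for every integer $N \geq 1$ and $\abs{z}\leq 1$,
\begin{equation*}
	\sum_{j=0}^{N-1}\Li_n(\RU_N^j z)
	= N^{1-n}\Li_n(z^N),
\end{equation*}
because $\sum_{j=0}^{N-1}\RU_N^{jk}$ equals $N$ if $N \divides k$ and $0$ otherwise. I would state this once and then evaluate it at $z=1$.

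Setting $z=1$ gives $\sum_{j=0}^{N-1}\Li_n(\RU_N^j) = N^{1-n}\mzv{n}$. For $N=2$ this is the first identity, $\Li_n(\RU_2) = (2^{1-n}-1)\mzv{n}$. For $N=3$, pairing $\RU_3^2 = \conjugate{\RU_3}$ with $\RU_3$ yields $2\ReTeil\Li_n(\RU_3) = (3^{1-n}-1)\mzv{n}$. For $N=4$, one uses $\RU_4^2 = \RU_2$ together with the already established value of $\Li_n(\RU_2)$ to isolate $2\ReTeil\Li_n(\RU_4) = (4^{1-n}-2^{1-n})\mzv{n}$. For $N=6$, the powers of $\RU_6$ encompass all $N$\textsuperscript{th} roots of unity for $N \divides 6$, so pairing complex conjugates and substituting the $N=1,2,3$ values gives
\begin{equation*}
	2\ReTeil\Li_n(\RU_6) = (6^{1-n}-3^{1-n}-2^{1-n}+1)\mzv{n} = (3^{1-n}-1)(2^{1-n}-1)\mzv{n},
\end{equation*}
where the factorisation is a trivial algebraic step.

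The final identity about the imaginary parts at $\RU_3$ and $\RU_6$ I would obtain by reapplying the distribution relation, this time at $N=2$ and $z=\RU_6$: since $-\RU_6 = \RU_6^4 = \conjugate{\RU_3}$ and $\RU_6^2 = \RU_3$, the relation reads $\Li_n(\RU_6) + \Li_n(\conjugate{\RU_3}) = 2^{1-n}\Li_n(\RU_3)$. Taking imaginary parts and using $\ImTeil\Li_n(\conjugate{\RU_3}) = -\ImTeil\Li_n(\RU_3)$ gives $\ImTeil\Li_n(\RU_6) = (2^{1-n}+1)\ImTeil\Li_n(\RU_3)$.

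There is essentially no real obstacle here; the only thing to be careful about is the convention for which primitive root of unity is denoted $\RU_N$ and the bookkeeping of which powers of $\RU_6$ coincide with which lower roots, but these are mechanical verifications. No convergence issue arises since $n \geq 2$ ensures absolute convergence at the relevant unit arguments.
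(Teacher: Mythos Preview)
Your proof is correct and follows essentially the same route as the paper: both rely on the polylogarithm multiplication (distribution) formula $N^{1-n}\Li_n(z^N)=\sum_{j}\Li_n(\RU_N^{j}z)$, specialised at $z=1$ for the four real-part identities and at $N=2$, $z=\RU_6$ for the imaginary-part identity. The only cosmetic difference is that the paper cites Lewin for this formula, whereas you derive it directly from the power-series expansion.
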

\begin{proof}
	These follow from the multiplication formula \cite[Equation (7.41)]{Lewin:PolylogarithmsAssociatedFunctions}
	\begin{equation*}
		N^{1-n}	\Li_n(x^N) 
		= \sum_{k=1}^N \Li_n(x \RU_N^k),
		\quad\text{if $\abs{x}\leq 1$}.
	\end{equation*}
	For example, setting $x=\RU_6$ and $N=2$ we find $2^{1-n}\Li_n(\RU_3) = \Li_n(\RU_3^\ast) + \Li_n(\RU_6)$ which implies the last two equations from the previous ones
 (see also \cite[Equations (7.43) and (7.47)]{Lewin:PolylogarithmsAssociatedFunctions}).
\end{proof}
This observation shows that in depth one we have a very simple embedding of MZVs into $\MZV[N]$ by taking the real parts of $\Li_{n}(\RU_N)$.
In order to generalize these results to arbitrary depths, we introduce a `parity' operation.
\begin{defn}\label{def:parity}
	Let $X\subset\ZZ$ be an alphabet of integers and $\QQ\langle X\rangle$ the $\QQ$ vector space spanned by words with letters in $X$.
	We define
	\begin{equation}\label{eq:parity}
		\parity{n_d\ldots n_1}
		\defas (-1)^{n_d+\ldots+n_1+d}n_d\ldots n_1
	\end{equation}
	for words $n_d\ldots n_1$ in $X^\ast$ and extend this \emph{parity} map linearly to all of $\QQ\langle X\rangle$.
An element $w\in\QQ\langle X\rangle$ has even (odd) parity if $\parity{w}=w$ ($\parity{w}=-w$).
\end{defn}

The following parity theorem for MZVs was proved in \cite{IharaKanekoZagier:DerivationDoubleShuffle,Tsumura:CombinatorialEulerZagier,Brown:DepthGraded}:
\begin{thm}\label{thm:parity}
	If the word $n_d\ldots n_1$ has odd parity (and $n_d \geq 2$), then $\mzv{n_d,\ldots,n_1}$ is a rational linear combination of MZVs of depth $<d$:
	\begin{equation*}
		\mzv{n_d,\ldots,n_1}
		\equiv 0 \mod D_{\leq d-1} \MZV
		\quad\text{if $n_d+\ldots+n_1+d$ is odd.}
		\label{eq:MZV-parity}%
	\end{equation*}
\end{thm}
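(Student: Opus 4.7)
The plan is to pass to motivic MZVs and prove the stronger statement that $\zeta^{\motivic}(n_d,\ldots,n_1)\in D_{\leq d-1}\MZV^{\motivic}$ whenever $n_1+\ldots+n_d+d$ is odd; the classical statement then follows by applying $\per$. The advantage of the motivic setting is that the depth filtration is rigorously defined and compatible with the coaction $\Delta$ of Section~\ref{sec:falphabet}, so that congruences modulo lower depth can be manipulated unambiguously. I would induct on the weight $n = n_1+\ldots+n_d$, the base cases (small depth, or weight $\leq$ a few) being either vacuous or direct verifications against Euler's classical reduction of $\zeta(n,1)$, $\zeta(n,m)$ for $n+m$ odd, etc.

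The main engine is the regularized double shuffle relation of Ihara--Kaneko--Zagier. For an auxiliary MZV $\alpha$, the product $\alpha\cdot \zeta^{\motivic}(n_d,\ldots,n_1)$ admits two expansions: the shuffle product $\alpha \shuffle \zeta^{\motivic}(n_d,\ldots,n_1)$ inherited from the iterated-integral axiom \axiom{shuffle}, and the stuffle product $\alpha\ast\zeta^{\motivic}(n_d,\ldots,n_1)$ coming from the nested-sum representation \eqref{eq:MZV}. Equating the two and running this with $\alpha=\zeta^{\motivic}_{\mathrm{reg}}(1)$ (regularized via the IKZ correction series $\rho$ relating shuffle- and stuffle-regularization), then extracting the coefficient of the formal variable $T$, yields a family of relations modulo $D_{\leq d-1}\MZV^{\motivic}$. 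Iterating these and using the induction hypothesis to kill any deeper undesirable terms, I expect to arrive at the reversal congruence
\[
  \zeta^{\motivic}(n_d,\ldots,n_1) \;\equiv\; (-1)^{n_1+\ldots+n_d+d-1}\,\zeta^{\motivic}(n_1,\ldots,n_d) \pmod{D_{\leq d-1}\MZV^{\motivic}}.
\]

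To close the argument I would invoke a second reversal-type congruence obtained from the M\"{o}bius transformation $z\mapsto 1-z$ via \axiom{moebius}, combined with path reversal \axiom{reverse}, applied to the iterated-integral representation $\zeta^{\motivic}(n_d,\ldots,n_1)=(-1)^d\,I^{\motivic}(1;0^{n_d-1}1\cdots 0^{n_1-1}1;0)$. Tracking the depth under this involution (using the fact that $0\leftrightarrow 1$ swaps the roles of ``$n_i-1$ zeros'' and ``separating ones'' and redistributes depth) yields a second congruence which, modulo $D_{\leq d-1}\MZV^{\motivic}$, relates $\zeta^{\motivic}(n_d,\ldots,n_1)$ to $\zeta^{\motivic}(n_1,\ldots,n_d)$ with sign of the opposite parity. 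Adding the two congruences gives $\bigl(1-(-1)^{n+d}\bigr)\,\zeta^{\motivic}(n_d,\ldots,n_1)\equiv 0$, which annihilates the depth-$d$ piece precisely when $n_1+\ldots+n_d+d$ is odd.

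The main obstacle is the combinatorial bookkeeping in Step~2: the stuffle $\alpha\ast(n_d,\ldots,n_1)$ produces many summands that a priori sit in depth~$d$, and extracting the net depth-$d$ contribution requires careful tracking of collisions (when two indices stuff together and drop depth by one) and of entries $n_i=1$ where shuffle-regularization of $\zeta^{\motivic}(1)$ interacts nontrivially with the stuffle. A secondary difficulty is the sign reconciliation in Step~3: one must check that the $(-1)^d$ from path reversal combines with the distribution of zeros in $0^{n_i-1}1$ to yield exactly $(-1)^{n+d}$ and not some other sign. Both of these subtleties are resolved in Ihara--Kaneko--Zagier \cite{IharaKanekoZagier:DerivationDoubleShuffle} and in Brown's depth-graded motivic framework \cite{Brown:DepthGraded}, which is what I would follow.
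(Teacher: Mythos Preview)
The paper does not supply its own proof of this theorem; it simply cites \cite{IharaKanekoZagier:DerivationDoubleShuffle,Tsumura:CombinatorialEulerZagier,Brown:DepthGraded} and then uses the result as a black box inside the proof of Proposition~\ref{genpar}. Your overall strategy---produce two depth-$d$ congruences whose signs disagree precisely when $n+d$ is odd---is a legitimate route (it is essentially the ``two antipodes'' argument), but the details you give do not implement it.

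The concrete gap is Step~3. Path reversal \axiom{reverse} combined with the substitution $z\mapsto 1-z$ applied to $I(1;0^{n_d-1}1\cdots 0^{n_1-1}1;0)$ yields the \emph{exact} MZV duality relation, and the dual index has depth $n-d$, not $d$. There is no way to read this, even modulo $D_{\leq d-1}$, as a relation between $\zeta(n_d,\ldots,n_1)$ and $\zeta(n_1,\ldots,n_d)$; when $n-d\geq d$ the dual sits in depth $\geq d$ and nothing collapses. The M\"obius transformation that carries the weight-dependent sign you need is the \emph{inversion} $f(z)=1/z$: since $f^{\ast}\omega_0=-\omega_0$ and $f^{\ast}\omega_1=\omega_1-\omega_0$, one has $f^{\ast}w\equiv(-1)^{n-d}w$ modulo lower depth. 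This is exactly the mechanism in the paper's proof of Proposition~\ref{genpar}; note, however, that that proof \emph{uses} Theorem~\ref{thm:parity} as input (for the piece $I(1,v,0)$ at $z=1$) rather than re-deriving it, so the inversion argument for MZVs themselves still requires separate work at the tangential basepoints.

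A smaller issue concerns the sign in Step~2. A reversal congruence coming from stuffle relations alone carries a sign depending only on the depth $d$; the weight $n$ has no way to enter, since the quasi-shuffle product never sees the individual $\omega_0$'s. The weight-dependent sign comes from the shuffle side (equivalently, from counting $\omega_0$'s under inversion). It is the \emph{mismatch} between a $d$-dependent sign and an $n$-dependent sign that ultimately produces the factor $1-(-1)^{n+d}$. Your final target formula is correct, but neither of your two steps as described actually delivers its half of the argument.
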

Here, we define the depth filtration as
$
	D_{\leq d} \MZV 
	\defas 
	\lin_{\QQ}\setexp{\mzv[k]{2}\mzv{n_r,\ldots,n_1}}{r \leq d}
$;
in particular, note that $\mzv{2}$ has zero depth. Moreover, the depth filtration is multiplicative: $(D_{\leq r} \MZV)\cdot(D_{\leq s}\MZV)\subseteq D_{\leq r+s} \MZV$ (analogously for $\MZV[N]$ below).

In fact, Theorem~\ref{thm:parity} also holds for multiple polylogarithms at arbitrary roots of unity \cite{Panzer:Parity}. However, for the purpose of this paper we only need the following very special case, for which we will give a self-contained proof:
\begin{prop}[Generalized parity]\label{genpar}%
	For $N \in\set{2,3,4,6}$ let $D_{\leq d} \MZV[N]$ denote the subspace of $\MZV[N]$ spanned by all $(2\pi\iu)^{k\pipow{N}}\Li_{\vect{m}}(\RU_N)$ with $\depth(\vect{m})\leq d$
	($\pipow{N}$ is 2 for $N=2$ and 1 otherwise).
	Then
	\begin{equation}
		\Li_{\vect{n}}(\RU_N^{\ast})-\Li_{\parity{\vect{n}}}(\RU_N)
		\equiv 0
		\mod D_{\leq d-1} \MZV[N]
	\end{equation}
	holds for arbitrary words $\vect{n} \in \NN^{\ast}$ (where we extend $\Li_\bullet$ by linearity).
\end{prop}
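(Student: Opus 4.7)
The plan is to prove the congruence by induction on the depth $d$, with the M\"obius involution $f(t) = 1/t$ — which coincides on the unit circle with complex conjugation and maps $\RU_N^\ast = \RU_N^{-1}$ to $\RU_N$ — playing the leading role. The base case $d = 1$ is precisely Lemma~\ref{lem:depth1-2pii}, since $(2\pi\iu)^n\QQ \subseteq D_{\leq 0}\MZV[N]$; the degenerate case $n=1,\RU_N=1$ is handled by regularizing $\Li_1(1) = 0$.

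For the inductive step, write $\Li_{\vect{n}}(\RU_N^\ast) = (-1)^d I(\RU_N^\ast, w, 0)_\gamma$ with $w = 0^{n_d - 1}1 \cdots 0^{n_1-1}1$ and $\gamma$ the straight segment from $0$ to $\RU_N^\ast$. Axiom~\axiom{moebius} applied to $f(t) = 1/t$, together with the computation $f^\ast\omega_0 = -\omega_0$ and $f^\ast\omega_1 = -\omega_0 + \omega_1$, yields
\begin{equation*}
	I(\RU_N^\ast, f^\ast w, 0)_\gamma = I(\RU_N, w, \infty)_{f(\gamma)}
	\quad\text{with}\quad
	f^\ast w = \sum_{S \subseteq \{1,\dots,d\}} (-1)^{|\vect{n}| - |S|}\, w_S,
\end{equation*}
where $w_S$ is the word obtained from $w$ by replacing every $1$ not indexed by $S$ with a $0$. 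The top summand $S = \{1,\dots,d\}$ contributes $(-1)^{|\vect{n}|+d}\,w$, while all other $w_S$ contain fewer than $d$ letters equal to $1$ and hence, via \eqref{eq:Li}, represent MPLs of depth strictly less than $d$. It then remains to compare $I(\RU_N, w, \infty)_{f(\gamma)}$ with $I(\RU_N, w, 0) = (-1)^d\Li_{\vect{n}}(\RU_N)$. Applying axiom~\axiom{concat} to decompose $f(\gamma)$ at $0$ (with tangential regularization at $\infty$) produces
\begin{equation*}
	I(\RU_N, w, \infty) = \sum_{w = w''w'} I(\RU_N, w'', 0) \cdot I(0, w', \infty),
\end{equation*}
in which the sole deconcatenation with $w' = \emptyset$ returns $I(\RU_N, w, 0)$, and every other term has $|w''| < |w|$ (hence depth strictly less than $d$) multiplied by a shuffle-regularized factor $I(0, w', \infty)$ in the $\set{0,1}$-alphabet.

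The main obstacle is to verify that every strict deconcatenation lands in $D_{\leq d-1}\MZV[N]$ rather than only in $D_{\leq d}\MZV[N]$: the regularized factors $I(0, w', \infty)$ must absorb their apparent depth into powers of $2\pi\iu$, so that the combination with $I(\RU_N, w'', 0)$ drops to depth $<d$ overall. This is the technical heart of the argument and follows from the shuffle and stuffle relations together with the tangential base-point formalism of \cite{Panzer:PhD}; once established, applying the inductive hypothesis to the residual $\Li_{\vect{n}'}(\RU_N^\ast)$-contributions (reached from the $|S|<d$ summands by axiom~\axiom{reverse}) completes the induction. The essential combinatorial mechanism is the letter-by-letter iteration of the depth-one inversion identity~\eqref{parodd}.
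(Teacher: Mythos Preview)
Your approach diverges from the paper's at the very first step. The paper splits the path from $0$ to $\RU_N^\ast$ at the point $1$ --- the fixed point of the inversion $f(z)=1/z$ --- so that $f$ maps the piece $I(\RU_N^\ast,u,1)$ cleanly to $I(\RU_N,f^\ast u,1)$, while the remaining piece $I(1,v,0)$ is an MZV to which Theorem~\ref{thm:parity} applies directly. Reassembling via \axiom{concat} then gives the result with no induction on $d$ and no integrals touching $\infty$. You instead apply $f$ to the entire path, sending the base point $0$ to $\infty$, and then attempt to split the resulting integral $I(\RU_N,w,\infty)$ at~$0$.

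This creates a genuine gap. First, the image path $f(\gamma)=\{s\RU_N: s\geq 1\}$ does not pass through $0$, so your concatenation at $0$ is not a direct application of \axiom{concat}; it requires deforming to a different path together with tangential base points at both $0$ and $\infty$, and you do not discuss the homotopy class or the monodromy this may introduce. Second, and more seriously, your assertion that the factors $I(0,w',\infty)$ ``absorb their apparent depth into powers of $2\pi\iu$'' is false as stated: for instance $I(0,011,\infty)=-\mzv{3}$, which is not a rational multiple of a power of $2\pi\iu$. What you actually need is the weaker bound that $I(0,w',\infty)$ has MZV-depth $\leq d'-1$ (where $d'$ is the number of letters $1$ in $w'$), so that its product with $I(\RU_N,w'',0)\in D_{\leq d-d'}\MZV[N]$ lands in $D_{\leq d-1}\MZV[N]$. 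This bound may well hold, but establishing it amounts to the MZV parity theorem (Theorem~\ref{thm:parity}) in disguise: the integrals $I(0,w',\infty)$ are combinations of MZVs whose depth must be reduced by exactly the mechanism of Theorem~\ref{thm:parity}. You have not avoided that input, only hidden it behind the phrase ``follows from shuffle and stuffle relations together with the tangential base-point formalism.'' The paper's route, by splitting at the fixed point of $f$, invokes Theorem~\ref{thm:parity} once and transparently, and never needs $\infty$ at all.
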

\begin{proof}
We use \eqref{eq:Li}. With \axiom{concat} we split the integration path at 1, such that
	\begin{equation*}
		I(\RU_N^{\ast},w,0)=\sum_{w=uv} I(\RU_N^{\ast},u,1)I(1,v,0)
	\end{equation*}
for the word $w = \omega_0^{n_d-1}\omega_1\!\cdots\omega_0^{n_1-1}\omega_1$. 
For terms with $u$ ending in $1$ or $v$ beginning with $1$, the iterated integrals in this formula are defined via shuffle-regularization. Explicitly, there are unique ways to write $u = \sum_k \omega_1^k \shuffle u_k$ and $v = \sum_k \omega_1^k \shuffle v_k$ such that all non-vanishing $u_k$ ($v_k$) end (begin) with $\omega_0$. Then (see e.g.\ \cite[Lemma~3.3.18]{Panzer:PhD})
	\begin{equation*}
		I(\RU_N^{\ast},u,1)
		= \sum_k \frac{\log^k \RU_N^{\ast}}{k!} I(\RU_N^{\ast},u_k,1)
		\quad\text{and}\quad
		I(1,v,0)
		= I(1,v_0,0)
		.
	\end{equation*}
	Note that every word in $u_k$ has precisely $k$ letters $\omega_1$ less than $u$ and recall that $\log \RU_N^{\ast} = -(2\pi\iu)/N$ has depth zero, hence $I(\RU_N^{\ast},u,1) \equiv I(\RU_N^{\ast},u_0,1)$ where equivalence means modulo lower depth.
	The inversion $f(z)=z^{-1}$ in \axiom{moebius} transforms $f^{\ast} \omega_0 = -\omega_0$ and $f^{\ast} \omega_1 = \omega_1 - \omega_0$ such that $f^{\ast}(u_0) \equiv \parity{u_0}$ and $I(\RU_N^{\ast},u_0,1)\equiv I(\RU_N,\parity{u_0},1) \equiv I(\RU_N,\parity{u},1)$.

	From Theorem~\ref{thm:parity} we know that $I(1,v,0)=I(1,v_0,0) \equiv I(1,\parity{v_0},0) = I(1,\parity{v},0)$ and thus conclude $I(\RU_N^{\ast},w,0) \equiv
\sum_{\parity{w}=\parity{u}\hspace{1pt}\parity{v}} I(\RU_N,\parity{u},1) I(1,\parity{v},0)= I(\RU_n,\parity{w},0)$ via \axiom{concat}.
\end{proof}
This result tells us that we only need the real (imaginary) parts of Deligne's basis elements $\Li_{\vect{n}}(\RU_{N})$ when $\vect{n}$ has even (odd) parity.
Because $2\pi\iu$ has weight $1$ and depth $0$, this is consistent with the odd parity of
\begin{equation}
	\parity{2\pi\iu} \defas -2\pi\iu.
	\label{eq:parity-ipi}%
\end{equation}
We set $\parity{\Li_{\vect{n}}}=\Li_{\parity{\vect{n}}}$ and extend by linearity.
\begin{cor}[Parity basis]\label{cor:parity-even-Deligne}
	The set	$\ParityBasis[N] \defas \setexp{b+\parity{b}^\ast}{b\in \DeligneBasis[N]}$ is an algebra basis of $\MZV[N]$ for $N\in\set{2,3,4,6}$.
\end{cor}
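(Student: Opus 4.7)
The plan is to leverage Theorem~\ref{thm:Deligne}, which identifies $\MZV[N]$ with the polynomial algebra freely generated by $\DeligneBasis[N]$, and show that replacing each generator $b$ by $b+\parity{b}^\ast$ is triangular with respect to the depth filtration $D_{\leq\bullet}\MZV[N]$ of Proposition~\ref{genpar}. A standard filtered-to-graded lifting argument then upgrades $\ParityBasis[N]$ into a second free system of algebra generators.

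First I would evaluate $b+\parity{b}^\ast$ on each $b\in\DeligneBasis[N]$. For $b=(2\pi\iu)^{\pipow{N}}$, the convention $\parity{2\pi\iu}=-2\pi\iu$ from \eqref{eq:parity-ipi} combined with $(2\pi\iu)^\ast=-2\pi\iu$ gives $\parity{b}^\ast=b$, so $b+\parity{b}^\ast=2b$ and this element has depth zero. For a Lyndon generator $b=\Li_w(\RU_N)$ of weight $|w|$ and depth $d=d(w)$, Definition~\ref{def:parity} gives $\parity{b}=(-1)^{|w|+d}b$, while motivic complex conjugation acts by $b^\ast=\Li_w(\RU_N^\ast)$ (the letters $0$ and $1$ in the underlying iterated integral are self-conjugate). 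Hence
\begin{equation*}
b+\parity{b}^\ast = \Li_w(\RU_N)+(-1)^{|w|+d}\Li_w(\RU_N^\ast).
\end{equation*}
Applying Proposition~\ref{genpar}, which yields $\Li_w(\RU_N^\ast)\equiv(-1)^{|w|+d}\Li_w(\RU_N)$ modulo $D_{\leq d-1}\MZV[N]$, the cross-term is congruent to $\Li_w(\RU_N)$, and we obtain the key congruence
\begin{equation*}
b+\parity{b}^\ast \equiv 2b \pmod{D_{\leq d-1}\MZV[N]}
\end{equation*}
for every Lyndon Deligne generator.

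To conclude, I would invoke the depth-minimality of $\DeligneBasis[N]$ recorded in Remark~\ref{rem:Deligne}: since the depth filtration is multiplicative and the Deligne basis is compatible with it, the associated graded algebra $\mathrm{gr}\,\MZV[N]=\bigoplus_d D_{\leq d}\MZV[N]/D_{\leq d-1}\MZV[N]$ is a polynomial algebra freely generated by the images of $\DeligneBasis[N]$. By the congruence above, each parity-basis element $b+\parity{b}^\ast$ has principal symbol $2b$ in $\mathrm{gr}\,\MZV[N]$, so the images of $\ParityBasis[N]$ are also free polynomial generators of the associated graded. The standard lifting procedure then recovers each Deligne generator $b$ of depth $d$ as $\tfrac12(b+\parity{b}^\ast)$ plus a polynomial in parity-basis elements of depth strictly less than $d$, so $\ParityBasis[N]$ generates $\MZV[N]$; algebraic independence follows by the same triangular argument applied to any hypothetical polynomial relation among elements of $\ParityBasis[N]$ (its top-depth part would descend to a polynomial relation among free generators of $\mathrm{gr}\,\MZV[N]$). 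The only non-routine input is Proposition~\ref{genpar}, which was proved above; the triangular-lifting step for filtered polynomial algebras is standard and poses no substantial obstacle.
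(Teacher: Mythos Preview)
Your proposal is correct and follows essentially the same strategy as the paper: both arguments hinge on the congruence $b+\parity{b}^\ast\equiv 2b\pmod{D_{\leq d-1}\MZV[N]}$ obtained from Proposition~\ref{genpar}, together with the depth-minimality of $\DeligneBasis[N]$ from Remark~\ref{rem:Deligne}. The paper packages the triangular lifting as a direct induction on depth and deduces independence by a cardinality count ($\#D_{\leq d}\ParityBasis[N]=\#D_{\leq d}\DeligneBasis[N]$), whereas you phrase the same triangularity via the associated graded $\mathrm{gr}\,\MZV[N]$; these are two presentations of one argument.
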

\begin{proof}
	We prove inductively over the depth that the algebra $\QQ[D_{\leq d}\ParityBasis[N]]$ generated by $D_{\leq d}\ParityBasis[N]\defas\setexp{w\in\ParityBasis[N]}{\depth(w)\leq d}$ contains $D_{\leq d}\MZV[N]$.
	The case $d=0$ is trivial:
	$\set{2(2\pi\iu)^{\pipow{N}}} = D_{\leq 0}\ParityBasis[N]$. 
	Then Proposition~\ref{genpar} shows (for all $b \in D_{\leq d} \DeligneBasis[N]$) that
\begin{equation*}
	2b
	= \big(b+\parity{b}^\ast\big)
	+ \big(b-\parity{b}^\ast\big)
	\in D_{\leq d} \ParityBasis[N] + D_{\leq d-1} \MZV[N]
	.
\end{equation*}
Assuming inductively that $\QQ[D_{\leq d-1} \ParityBasis[N]] \supset D_{\leq d-1} \MZV[N]$, this proves $D_{\leq d} \DeligneBasis[N] \subset \QQ[D_{\leq d} \ParityBasis[N]]$. By Remark~\ref{rem:Deligne} we thus know that $D_{\leq d}\ParityBasis[N]$ generates $D_{\leq d} \MZV[N]$.
Independence of $D_{\leq d}\ParityBasis[N]$ follows from $\#D_{\leq d}\ParityBasis[N]=\#D_{\leq d}\DeligneBasis[N]$.
\end{proof}

The proof of Corollary~\ref{cor:parity-even-Deligne} shows that $\ParityBasis[N]$ is an algebra basis of minimum depth (see Remark~\ref{rem:Deligne}).
Using Definition~\ref{def:parity} and \ref{eq:parity-ipi} we define the parity map on $\MZVf[N]$. 

Now, we can prove our main theorem:

\begin{thm}\label{thm:falphabet-parity}
	For $N \in \set{2,3,4,6}$ let $\psi \colon \MZV[N] \longrightarrow \MZVf[N]$ denote the isomorphism into the $f$-alphabet with respect to the parity basis $\ParityBasis[N]$.\footnote{In fact, the proof shows that Theorem~\ref{thm:falphabet-parity} holds with respect to an arbitrary algebra basis $B$ of $\MZV[N]$ as long as it contains only real and imaginary elements ($B \subset \RR \cup \iu\RR$).} Then
	\begin{equation}
		\psi\left( x^{\ast} \right)
		= \parity{\psi(x)}
		\quad\text{for all}\quad
		x \in \MZV[N].
		\label{eq:conjugation-under-iso}%
	\end{equation}
Furthermore, let $\MZVf \subset \MZVf[N]$ be the sub-algebra of words in letters of odd weights $\geq3$ times even powers of $2\pi\iu$. Then
\begin{equation}\label{eq:MZVU}
	\psi\left(\MZV\right) = \MZVf.
\end{equation}
\end{thm}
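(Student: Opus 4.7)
The theorem decomposes into two assertions: the intertwining relation \eqref{eq:conjugation-under-iso} and the identification \eqref{eq:MZVU}. I would first establish \eqref{eq:conjugation-under-iso} and then deduce \eqref{eq:MZVU} from it via an inductive use of the reduced coaction.

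To prove \eqref{eq:conjugation-under-iso}, extend parity to all of $\MZVf[N]$ by setting $\parity{2\pi\iu}=-2\pi\iu$ and demanding multiplicativity. Since $u\shuffle v$ is a sum of words of common weight $|u|+|v|$ and common depth $d(u)+d(v)$, parity is a shuffle algebra automorphism of $\MZVf[N]$; complex conjugation is manifestly an algebra automorphism of $\MZV[N]$. Therefore $x\mapsto\psi(x^{\ast})$ and $x\mapsto\parity{\psi(x)}$ are both algebra homomorphisms, and it suffices to check their agreement on the algebra basis $\ParityBasis[N]$. On the generator $2(2\pi\iu)^{\nu_N}$ both sides evaluate to $2(-1)^{\nu_N}(2\pi\iu)^{\nu_N}$. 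On a generator $b'=b+\parity{b}^{\ast}$ with $b=\Li_w(\RU_N)$ for a Lyndon word $w=(n_1,\ldots,n_d)$, the identity $\parity{b}=(-1)^{|w|+d}b$ gives $(b')^{\ast}=(-1)^{|w|+d}b'$. The matching statement $\parity{\psi(b')}=(-1)^{|w|+d}\psi(b')$ will follow from the depth-preservation of $\psi$ for $N\in\set{2,3,4,6}$ recorded in Remark~\ref{rem:Deligne}: every word in the support of $\psi(b')$ then has depth $d$ and total weight $|w|$, and hence carries the parity sign $(-1)^{|w|+d}$.

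For \eqref{eq:MZVU}, I prove $\psi(\MZV)\subseteq\MZVf$ by induction on weight. The base case of weight zero is trivial, and $\MZV$ has no nontrivial weight-one component. For $x\in\MZV$ of weight $n\geq2$, $x$ is real, so by \eqref{eq:conjugation-under-iso} the image $\psi(x)$ has even parity. Since $\MZV$ is motivically closed under the coaction, $\cored x=\sum_i y_i^{\dR}\otimes z_i$ with $y_i,z_i\in\MZV$ of weights strictly less than $n$; by induction $(\psi\otimes\psi)\cored x\in\MZVfDR\otimes\MZVf$. Because $\MZVf$ is stable under deconcatenation, I can pick $y\in\MZVf$ with $\cored y=\cored\psi(x)$, so that
\begin{equation*}
	\psi(x)-y\in\ker\cored=(2\pi\iu)^n\QQ\oplus f^N_n\QQ.
\end{equation*}
The parities of $(2\pi\iu)^n$ and $f^N_n$ are $(-1)^n$ and $(-1)^{n+1}$, respectively, so even parity of $\psi(x)-y$ kills precisely the summand not belonging to $\MZVf$ (the $(2\pi\iu)^n$-term for odd $n$, or the $f^N_n$-term for even $n$), completing the induction. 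To upgrade the inclusion to equality, I invoke the motivic structure theorem: $\dim_{\QQ}\MZV_n=\dim_{\QQ}\MZVf_n$ in every weight $n$ because both generating series equal $1/(1-t^2-t^3)$; injectivity of $\psi$ then forces $\psi(\MZV)=\MZVf$.

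The principal technical point is the depth-preservation assertion used to verify \eqref{eq:conjugation-under-iso} on the generators $b'$: without it, $\psi(b')$ could mix terms of different depths and hence different parities, obstructing the clean sign comparison. Once that is granted from Remark~\ref{rem:Deligne}, the remaining work is careful parity bookkeeping on the kernel of $\cored$ in the inductive step.
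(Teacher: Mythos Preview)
Your argument for \eqref{eq:MZVU} is correct and essentially identical to the paper's: induction on weight, closure of $\MZV$ under $\cored$, and a parity analysis of the primitive kernel.

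Your argument for \eqref{eq:conjugation-under-iso}, however, has a genuine gap. You reduce correctly to checking the relation on basis elements $b'$, and for primitives this is fine. For a non-primitive $b'$ of depth $d$ you assert that ``every word in the support of $\psi(b')$ has depth $d$'' and justify this by the depth-preservation of Remark~\ref{rem:Deligne}. But that remark is a \emph{filtration} statement: it gives $\psi(b')\in D_{\leq d}\MZVf[N]$, not depth-homogeneity. A term such as $f^N_m(2\pi\iu)^{n-m}$ (depth~$1$) is not excluded for $d\geq 3$; the convention in the construction of $\psi$ only kills the pure primitives $f^N_n$ and $(2\pi\iu)^n$, not such mixed depth-$1$ terms. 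So from depth preservation alone you cannot conclude $\parity{\psi(b')}=(-1)^{|w|+d}\psi(b')$.

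The paper closes exactly this gap by an induction on weight through $\cored$: for a non-primitive $b'$ one has $b'^\ast=\pm b'$, hence $\psi(b'^\ast)$ and $\parity{\psi(b')}$ agree after applying $\cored$ (by the inductive hypothesis on the lower-weight tensor factors), and they therefore differ by an element of $\ker\cored=\QQ(2\pi\iu)^n\oplus\QQ f^N_n$, which vanishes by the normalisation of $\psi$ on non-primitives. This argument uses only that $b'\in\RR\cup\iu\RR$, which is why the footnote can drop any reference to depth. Your approach can be repaired by inserting precisely this inductive step in place of the depth-homogeneity claim.
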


\begin{proof}
	We first prove \eqref{eq:conjugation-under-iso} by induction over the weight $\weight{x}$ of $x$. If $\weight{x}=0$ then the statement is trivial ($x \in \QQ$).
	Now consider a primitive $b\in \ParityBasis[N]$ of weight $n$. If $b$ is proportional to $(2\pi\iu)^{\pipow{N}}$ then \eqref{eq:conjugation-under-iso} is true by \eqref{eq:parity-ipi}.
Otherwise $b=\Li_n(\RU_N)+\Li_{\parity{n}}(\RU_N^\ast)$, so that $b^\ast=(-1)^{n+1}b$. Because $\psi(b)=f^N_n$ we have $\parity{\psi(b)}=(-1)^{n+1}\psi(b)$ and
\eqref{eq:conjugation-under-iso} holds.

For the remaining (non-primitive) basis elements $b \in \ParityBasis[N]$, we find by induction (using Sweedler's notation for the reduced coaction)
\begin{equation*}
	\cored \psi(b^\ast)
	= (\psi\otimes\psi)(\cored b^\ast)=\sum_b \psi(b_1^\ast) \otimes \psi(b_2^\ast)
	=\sum_b \parity{\psi(b_1)} \otimes \parity{\psi(b_2)}
	=\cored \parity{\psi(b)}.
\end{equation*}
This already implies $\psi(\conjugate{b}) = \parity{\psi(b)}$, because the kernel of $\cored$ is spanned by powers of $2\pi\iu$ and the letters $F_N$, all of which occur with coefficient zero in both $\psi(b)$ and $\psi(\conjugate{b})$ by the construction of $\psi$ (note $\conjugate{b}=\pm b$).
Hence, Equation~\eqref{eq:conjugation-under-iso} holds for all $x \in B$. Because parity is a homomorphism for the shuffle product, $\parity{v\shuffle w}=\parity{v}\shuffle\parity{w}$, Equation~\eqref{eq:conjugation-under-iso} extends to all $x\in\MZV[N]$.

For \eqref{eq:MZVU} it suffices to show that $\psi(\MZV)\subseteq\MZVf$. Then, the claim follows by comparing dimensions of given weights.
We again proceed by induction over weight. The weights 0 and 1 are trivial. Let $x\in\MZV$ have weight $n\geq2$.
The reduced coaction $\cored$ closes on MZVs. By induction we have $\cored\psi(x)=(\psi\otimes\psi)(\cored x)\in\MZVfDR\otimes\MZVf$, where $\MZVfDR$ is spanned by words in letters of odd weight $\geq3$. Therefore $\psi(x)\in p+\MZVf$ for some primitive $p$ of weight $n$, i.e.\  $p \in \QQ f^N_n\oplus\QQ(2\pi\iu)^n$ (if $f^N_n$ exists).
Because $x$ is real we obtain from \eqref{eq:conjugation-under-iso} that $\parity{p}=p$. If $n$ is even this implies $p\in\QQ(2\pi\iu)^n\in\MZVf$.
If $n$ is odd then $p\in\QQ f^N_n\in\MZVf$.
\end{proof}

\begin{remark}\label{rem:inclusions}
Using the coaction and Lemma~\ref{lem:depth1-zeta} we obtain the inclusions $\MZV\subset\MZV[2]\subset\MZV[4]$ and $\MZV\subset\MZV[6]\subset\MZV[3]$.
The second part of the proof of Theorem~\ref{thm:falphabet-parity} extends to these cases: $\psi_{\ParityBasis[3]}(\MZV[6])=\MZVf[6] \subset \MZVf[3]$ means that elements of $\MZVf[3]$ represent numbers in $\MZV[6]$ precisely when they are free of $f_1^3$. Analogously we have $\psi_{\ParityBasis[4]}(\MZV[2]) = \MZVf[2] \subset \MZVf[4]$.
\end{remark}
\begin{remark}\label{rem:MZVoverQ}
	Feynman periods \eqref{eq:Feynman-period} are defined over $\QQ$, see \eqref{eq:psi}. But when $N\in\set{3,6}$, numbers in $\MZV[N]$ with odd parity are imaginary
and not defined over $\QQ$. So instead of e.g.\ $2\pi\iu$, we should consider a rational multiple of
	\begin{equation*}
		\int_0^1 \frac{\dd x}{1-x+x^2}
		= \frac{2\pi}{9}\sqrt{3}.
	\end{equation*}
	Therefore, if $N=3,6$ we consider the real subspace $\ReTeil \left( \MZV[N] \otimes \QQ(\RU_N) \right)$ with the basis
	\begin{equation}
		\set{\frac{\pi}{\sqrt{3}}}
		\cup \setexp{2\ReTeil \Li_{\vect{n}}(\RU_N)}{\parity{\vect{n}}=\vect{n}}
		\cup \setexp{-2\sqrt{3} \ImTeil  \Li_{\vect{n}}(\RU_N)}{\parity{\vect{n}}=-\vect{n}}
		.
		\label{eq:parity-basis-Q}%
	\end{equation}
In Eqs.~\eqref{eq:p711}, \eqref{P711N3} and in Table~\ref{tab:periods} we conveniently used the $f$-alphabet with respect to this basis. This means that we replaced
$\sqrt{3}\iu f^3_{2n}$ and $\sqrt{3}\iu f^6_{2n}$ in the $f$-alphabet with respect to the modified Deligne basis by the letters $f^3_{2n}$ and $f^6_{2n}$. Letters with odd weight
remain unchanged.
\end{remark}

\section{Methods}\label{sec:methods}
For a long time it was not known how to calculate Feynman periods. Now we have several methods at hand which, when combined, suffice to compute all periods up to seven loops, most graphs with eight loops and several graphs with nine, ten, or eleven loops. The method has four building blocks:
\begin{enumerate}
	\item parametric integration,
	\item graphical functions,
	\item generalized single-valued hyperlogarithms and
	\item the decomposition algorithm.
\end{enumerate}
Parametric integration was developed by F.~Brown \cite{Brown:TwoPoint,Brown:PeriodsFeynmanIntegrals} and implemented by the first author \cite{Panzer:PhD,Panzer:HyperInt}. Graphical functions were defined in \cite{Schnetz:gf}.
The theory of generalized single-valued hyperlogarithms is presently developed by the second author \cite{Schnetz:GeneralizedSV} (first examples are in \cite{ChavezDuhr}).
The decomposition algorithm was suggested by F.~Brown \cite{Brown:DecompositionMotivicMZV} and implemented by the second author \cite{Schnetz:HyperlogProcedures}.

For most periods we need all four building blocks in the following order: Firstly, calculate a graphical function of low weight by parametric integration (using the representation given in \cite{GolzPanzerSchnetz:GfParam}). Secondly, derive differential equations for graphical functions of higher weights. Thirdly, solve the differential equations by single-valued integration in the space of generalized single-valued hyperlogarithms. Fourthly, use the decomposition algorithm to reduce the result to a basis of $\MZV[N]$.

For many graphs a subset of the building blocks suffice:
Lots of periods (like the zig-zag series) can be calculated with the theory of graphical functions alone \cite{BrownSchnetz:ZigZag}.

On the other hand, $P_{7,11}$ (see Figure~\ref{fig:P711}) was calculated in terms of iterated integrals using parametric integration alone \cite[Section~5.1.3]{Panzer:PhD}. The result was evaluated to \numprint{5000} significant digits and the rational coefficients for its representation in the parity basis $\ParityBasis[6]$ were provided by PSLQ \cite{FergusonBaileyArno:PSLQ}, exhausting \numprint{3000} digits.
Note that assuming Scenario~1 leads to an ansatz of the type~\eqref{eq:p711} where the $f^6_n$ are calculated by the decomposition algorithm. So in practice, the constraints from Conjecture~\ref{con:coaction} vastly reduce the dimension of the $\QQ$-vector space which (conjecturally) must contain the period. In the case of $P_{7,11}$ this means that PSLQ needed only 400 digits to identify the rational coefficients in \eqref{eq:p711}. 
Explicit representations of $P_{7,11}$ in terms of iterated integrals are given in \cite[Equation~(5.1.10)]{Panzer:PhD}, the attached files and most beautifully and concisely in \cite{Broadhurst:Aufbau}.

\begin{remark}\label{rem:period-conjecture}\upshape
	F.~Brown recently defined motivic Feynman periods $I^{\motivic}_G$ for \eqref{eq:Feynman-period} such that $\per(I^{\motivic}_G) = P(G)$ in \cite{Brown:FeynmanAmplitudesGalois} and proved the weaker version \eqref{eq:coaction-theorem} of our Conjecture~\ref{con:coaction}.\footnote{In our case of mass- and momentum free Feynman integrals, the underlying geometric construction goes back to \cite{BEK}.}
	The coaction is well-defined on $I^{\motivic}_G$, which opens up the possibility to avoid transcendence conjectures by working with motivic amplitudes throughout.

	However, our methods are currently only analytic. For example, we only know\footnote{We denote with $P_{6,2}$ both the corresponding graph $G$ and its period $P(G)$.}
	\begin{equation*}
		\per\left( I^{\motivic}_{P_{6,2}} \right)
		= P_{6,2}
		= \tfrac{1063}{9} \mzv{9} + 8 [\mzv{3}]^3
		= \per\left(  \tfrac{1063}{9} \mzv[\motivic]{9} + 8 [\mzv[\motivic]{3}]^3 \right)
	\end{equation*}
	from \cite{Schnetz:Census} and compute the coaction of $P_{6,2}$ via the right-hand side and \eqref{eq:Goncoact}, assuming the (conjectural) injectivity of $\per$. This applies to all of our results. One could avoid this conjecture if one could show
	$
		I^{\motivic}_{P_{6,2}}
		= \tfrac{1063}{9} \mzv[\motivic]{9} + 8 [\mzv[\motivic]{3}]^3
	$
	directly.

	Some of our methods (parametric and single-valued integration) should in principle admit a lift to the motivic level, but this would require considerable efforts and it is not clear how far this can be pushed in practice (considering that even the geometry of very simple graphs like $\zz{n}$ is only partially understood \cite{Doryn:ThesisCNTP,BrownDoryn:Framings}).
	
	Other methods, most notably the completion from Theorem~\ref{thm:completion}, seem to be extremely difficult to understand motivically (not even the $c_2$-invariant is known to adhere to completion, see Conjecture~\ref{con:completion}).
\end{remark}

\section{The structure of $\phi^4$ periods}\label{sec:phi4}
\subsection{Completion}
One of the fundamental properties of $\phi^4$ periods is their invariance under completion.
\begin{defn}\label{def:completion}
Let $G$ be a $\phi^4$ graph. Then, the completion $\completed{G}$ of $G$ is the $4$-regular graph (every vertex has four edges) that is obtained from $G$ by adding a vertex $\,'\infty'$ and four edges from vertices which have less than four edges to $\infty$ (see Figure~\ref{fig:zig-zag-completions}).
\end{defn}
\begin{figure}
	\centering%
	$\completed{\zz{5}}=\Graph[0.85]{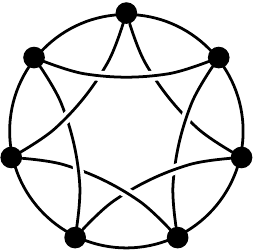}$ \qquad
	$\completed{\zz{6}}=\Graph[0.85]{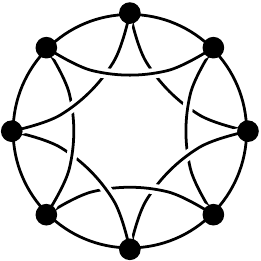}$%
	\caption{The completions of the zig-zag graphs $\zz{n}$ of Figure~\ref{fig:zig-zags} are circulants $\completed{\zz{n}}=C_{1,2}^{\,n+2}$.}%
	\label{fig:zig-zag-completions}%
\end{figure}
\begin{thm}[Definition and Theorem 2.2 in \cite{Schnetz:Census}]\label{thm:completion}%
	Let $\completed{G}$ be the completion of a {\plogdiv} $\phi^4$ graph $G$ and $v_1$, $v_2$ arbitrary vertices of $\completed{G}$.
Then $\completed{G}\setminus v_1$ and $\completed{G}\setminus v_2$ are {\plogdiv} $\phi^4$ graphs with equal period,
\begin{equation}
	P\left(\completed{G}\setminus v_1\right)
	=
	P\left(\completed{G}\setminus v_2\right).
\end{equation}
\end{thm}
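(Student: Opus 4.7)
The plan is to pass from the parametric representation \eqref{eq:Feynman-period} to a position-space integral in $\RR^4$ on which the $4$-regularity of $\completed{G}$ translates into full conformal invariance, making the symmetry $P(\completed{G}\setminus v_1)=P(\completed{G}\setminus v_2)$ manifest.

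First I would combine Schwinger parametrisation $1/p^2=\int_0^\infty \dd\alpha\, e^{-\alpha p^2}$ of each propagator with a Gaussian integration over the loop momenta (equivalently, Fourier transforming to position space) to rewrite \eqref{eq:Feynman-period} applied to $\completed{G}\setminus v_1$ in the form
\begin{equation*}
	P(\completed{G}\setminus v_1)
	= c_{h_G} \cdot |y_1-y_2|^{8}\int_{(\RR^4)^{V(\completed{G})\setminus\set{v_1,v_2}}} \prod_{w\neq v_1,v_2} \dd^4 x_w \prod_{e=\set{u,u'}\in E(\completed{G})} \frac{1}{(x_u-x_{u'})^2},
\end{equation*}
where $v_2\in V(\completed{G})$ is any chosen auxiliary vertex, $x_{v_1}=y_1$ and $x_{v_2}=y_2$, and $c_{h_G}$ is a universal constant depending only on $h_G$ (a power of $\pi$). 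Absolute convergence of this integral follows from the sub-divergence condition in \eqref{eq:defnprimdiv}, and the $|y_1-y_2|^{8}$ prefactor is dictated by dimensional counting: the $4$-regularity of $\completed{G}$ forces $2|E(\completed{G})|=4|V(\completed{G})|$, so the position-space integrand has total scaling dimension $-8$.

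Second, I would exploit that for $4$-regular $\completed{G}$ the displayed integrand is invariant under the conformal group of $S^4$. Under the inversion $\phi(x)=x/|x|^2$ (and its translates and dilations) one has
\begin{equation*}
	\dd^4 x \longmapsto \frac{\dd^4 x}{|x|^8}
	\quad\text{and}\quad
	\frac{1}{(x-y)^2} \longmapsto \frac{|x|^2|y|^2}{(x-y)^2}.
\end{equation*}
At every vertex of $\completed{G}$ the integrand carries exactly four propagator factors of the form $1/(x-\cdot)^2$, so the resulting factor $|x|^8$ cancels the Jacobian $|x|^{-8}$ and the integrand is invariant. Therefore, for any two vertices $v_1,v_2\in V(\completed{G})$ we can apply a M\"{o}bius transformation of $S^4$ that swaps $y_1$ with $y_2$ (for instance the inversion in the sphere of centre $(y_1+y_2)/2$ and radius $|y_1-y_2|/2$), which identifies the displayed expression for $\completed{G}\setminus v_1$ with the analogous one for $\completed{G}\setminus v_2$. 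A separate and straightforward combinatorial check, using $4$-regularity of $\completed{G}$ together with \eqref{eq:defnprimdiv} for $\completed{G}\setminus v_1$, shows that $\completed{G}\setminus v_2$ is itself {\plogdiv}, so its period is indeed defined by \eqref{eq:Feynman-period}.

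The main obstacle will be the rigorous justification of the first step (the precise relation between the parametric and the position-space integrals) together with the careful handling of the conformal change of variables, in particular at the point of $S^4$ we effectively send to spatial infinity. Both issues can be addressed by regularisation: one first places all vertices at generic finite positions, performs the change of variables, and then takes the limit; the primitive log-divergence of both $\completed{G}\setminus v_1$ and $\completed{G}\setminus v_2$ guarantees absolute convergence of all intermediate integrals, so no boundary contributions can arise.
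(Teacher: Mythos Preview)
The paper does not give its own proof of this statement; it is quoted verbatim from \cite{Schnetz:Census}.  Your strategy---pass to position space and exploit the conformal invariance of the $4$-regular completion---is exactly the one used there, so conceptually you are on the right track.

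However, the integral you wrote down is divergent.  With only two vertices $v_1,v_2$ of $\completed{G}$ held fixed, consider the region where all integrated vertices collapse to $y_1$.  The relevant subgraph is $\gamma=\completed{G}\setminus v_2$, which has $N_\gamma=2|V(\completed{G})|-4$ edges and $|V(\completed{G})|-2$ integrated vertices, so the superficial degree is $4(|V|-2)-2(2|V|-4)=0$: a genuine logarithmic divergence.  This is no accident---$\completed{G}\setminus v_2$ is itself a decompletion and therefore exactly log-divergent by hypothesis.  (The same happens at $y_2$.)  So your claimed identity $P(\completed{G}\setminus v_1)=c_{h_G}\cdot(\text{that integral})$ cannot hold, and no Schwinger/Fourier manipulation starting from the convergent parametric integral \eqref{eq:Feynman-period} will produce it.

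The fix, and what is actually done in \cite{Schnetz:Census}, is to fix \emph{three} vertices of $\completed{G}$: place $v_1$ at $\infty$ (so its four propagators become $1$) and two further vertices $v_2,v_3$ at finite points, say $0$ and a unit vector.  The resulting integral is precisely the position-space representation of $P(\completed{G}\setminus v_1)$ with two of its vertices fixed, and now all sub-collapses are strictly convergent by \eqref{eq:defnprimdiv}.  Conformal invariance of the $4$-regular integrand (your computation of the Jacobian cancellation is correct) together with the $3$-transitivity of the M\"{o}bius group on $S^4$ then shows that this number is independent of which vertex is sent to $\infty$, giving $P(\completed{G}\setminus v_1)=P(\completed{G}\setminus v_2)$.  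Your remark about the manifest $v_1\leftrightarrow v_2$ symmetry of the two-vertex formula, and the swap by a single inversion, becomes unnecessary once the correct three-vertex expression is in place.
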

A completed graph $\completed{G}$ hence represents an equivalence class of $\phi^4$ graphs with equal period.
We define $P(\completed{G}) \defas P\left(\completed{G} \setminus v  \right)$ for any vertex $v$ of $\completed{G}$. One can also define completion for non-$\phi^4$ graphs if one introduces \emph{edges with negative weights}.

\subsection{The product identity}
There exists a product formula for completed $\phi^4$ graphs. If a completed {\plogdiv} $\phi^4$ graph $\completed{G}$ can be split by removing three vertices, it is called \emph{reducible} and its period factorizes (see Figure~\ref{fig:product}).
\begin{figure}
	\centering
	$\Graph[0.7]{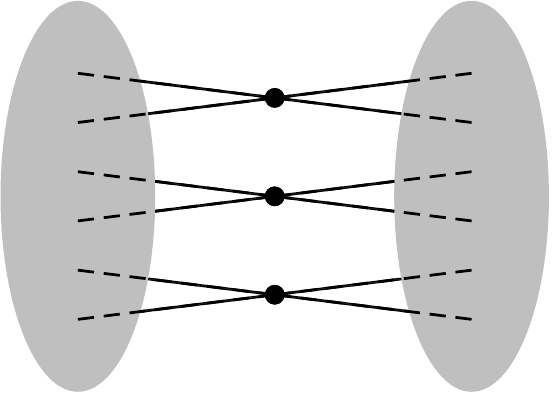} \quad = \quad \Graph[0.7]{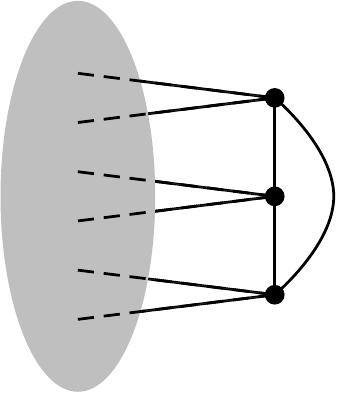} \quad \times \quad \rotatebox[origin=c]{180}{$\Graph[0.7]{factor}$}$
	\caption{A vertex cut of size three factorizes the period of a graph.}%
	\label{fig:product}%
\end{figure}
\begin{thm}[Theorem 2.1 in \cite{Schnetz:Census}]\label{prod}
Every reducible completed {\plogdiv} $\phi^4$ graph $\completed{G}$ is the gluing of two completed {\plogdiv} $\phi^4$ graphs $\completed{G_1}$ and $\completed{G_2}$ on triangle faces followed by the removal of the triangle edges. Its period is the product
\begin{equation}\label{eq:period-product}
	P(\completed{G})
	=P(\completed{G_1})P(\completed{G_2}).
\end{equation}
\end{thm}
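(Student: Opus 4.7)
The theorem has two distinct contents that I would prove separately: first, the purely combinatorial claim that every reducible completed {\plogdiv} $\phi^4$ graph $\completed{G}$ arises as the gluing described in the statement; second, the analytic claim that the period multiplies along this decomposition. The combinatorial step is the technical heart of the argument; the factorization then follows cleanly from the position-space representation of the period.

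\textbf{Step 1: combinatorial decomposition.} Let $S=\{v_1,v_2,v_3\}$ be a minimal $3$-vertex cut of $\completed{G}$, separating $\completed{G}\setminus S$ into vertex sets $H_1\sqcup H_2$ with no edges between them. For each $v_j$ write $a_j+b_j+c_j=4$, where $a_j,b_j$ count edges from $v_j$ to $H_1,H_2$ and $c_j$ counts edges inside $S$. I would first reduce to the case $c_j=0$ (so that $S$ is independent in $\completed{G}$): any edge inside $S$ can be rerouted through a neighbouring vertex, or ruled out using minimality of the cut and $4$-regularity. Next, I would derive the balance $a_j=b_j=2$ from \eqref{eq:defnprimdiv}: if some $a_j\geq 3$ then the induced subgraph of $\completed{G}\setminus v$ on $H_1\cup S$, for any $v\in H_2$, carries too many edges relative to its loop number to satisfy the strict inequality $N_\gamma>2h_\gamma$; symmetrically $a_j\leq 1$ forces some other $a_k\geq 3$ via $\sum a_j=\sum b_j=6$. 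Given balance, define $\completed{G_i}$ to be the subgraph of $\completed{G}$ induced on $H_i\cup S$ with the three triangle edges on $S$ adjoined. Every vertex in $H_i$ keeps degree $4$, and each $v_j$ obtains degree $a_j+2=4$, so $\completed{G_i}$ is $4$-regular. For {\plogdiv}, I would verify \eqref{eq:defnprimdiv} for $\completed{G_i}\setminus v$ by observing that each proper subgraph either lives entirely on the $H_i$ side of $\completed{G}$ and inherits $N_\gamma>2h_\gamma$ directly, or contains the full triangle on $S$, for which the edge/loop count is handled by direct counting.

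\textbf{Step 2: factorization of the period.} I would use the dual position-space representation. For a completed $4$-regular {\plogdiv} graph $\completed{H}$, Möbius invariance in $\mathbb{R}^4$ allows us to write
\begin{equation*}
	P(\completed{H})
	= \int \prod_{v\in V(\completed{H})\setminus\{u_0,u_1,u_\infty\}} \frac{\dd^4 x_v}{\pi^2}
	\prod_{\{u,w\}\in E(\completed{H})} \frac{1}{(x_u-x_w)^2},
\end{equation*}
after gauge-fixing three vertex positions to $0$, a unit vector, and $\infty$, with the standard regularized interpretation of propagators incident to the vertex at infinity. Choosing $u_0,u_1,u_\infty$ to be the cut vertices $v_1,v_2,v_3$ of $\completed{G}$ and using the fact that $\completed{G}$ has no edge between $H_1$ and $H_2$, the integrand splits as a product of a factor $F_1$ depending only on $(x_v)_{v\in H_1}$ and a factor $F_2$ depending only on $(x_v)_{v\in H_2}$. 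Absolute convergence (inherited from the {\plogdiv} condition) justifies Fubini, giving $P(\completed{G})=(\int F_1)(\int F_2)$. In $\completed{G_i}$ the added triangle edges connect only the fixed points $v_1,v_2,v_3$, so they contribute only constants that are absorbed into the chosen normalization. Hence $\int F_i=P(\completed{G_i})$ in the same gauge, which yields \eqref{eq:period-product}.

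\textbf{Main obstacle.} The technically delicate part is Step~1, and within it the balance property $a_j=b_j=2$ together with the independence of $S$. Neither is a formal consequence of $4$-regularity alone; they genuinely rely on the strict subcriticality in \eqref{eq:defnprimdiv}, and one must be careful not to assume the triangle structure that is being constructed. Step~2 is conceptually clean in position space but requires justifying the representation and the regularization of propagators at the vertex at infinity; a purely parametric alternative would establish a product identity for the Kirchhoff polynomials $\Psi_{\completed{G}}$, $\Psi_{\completed{G_1}}$, $\Psi_{\completed{G_2}}$ using the matrix-tree theorem, which is more elementary but significantly less transparent than the conformal argument.
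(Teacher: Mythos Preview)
The paper does not supply its own proof; the result is quoted as Theorem~2.1 of \cite{Schnetz:Census}. Your two-step strategy---combinatorics of the $3$-cut followed by the conformal position-space factorisation with the three cut vertices fixed at $0$, a unit vector, and $\infty$---is exactly the approach of that reference, and your Step~2 is correct as written.

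Step~1, however, contains a real gap. Your proposed mechanism for the independence of $S$ (``rerouting'' an internal edge, or invoking only minimality and $4$-regularity) does not work: $c_j=0$ is not a consequence of those properties alone but genuinely requires \eqref{eq:defnprimdiv}. The correct argument applies $N_\gamma>2h_\gamma$ to the induced subgraph on $H_i\cup S$ for \emph{both} sides $i=1,2$, yielding $\sum_j a_j+\sum_j c_j\leq 6$ and $\sum_j b_j+\sum_j c_j\leq 6$; combined with $\sum_j(a_j+b_j+c_j)=12$ this forces $\sum_j c_j=0$ and $\sum_j a_j=\sum_j b_j=6$. More seriously, your derivation of the individual balance $a_j=2$ fails: the subgraph on $H_1\cup S$ that you propose only constrains the \emph{sum} $\sum_j a_j$, not any single $a_j$, so it cannot exclude splittings like $(a_1,a_2,a_3)=(3,2,1)$. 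To rule out $a_j\geq 3$ one must instead test the induced subgraph on $H_1\cup\{v_j\}$: once $\sum_j a_j=6$ is known, this subgraph has $N_\gamma=2n_1-3+a_j$ edges on $n_1+1$ vertices, and $N_\gamma>2h_\gamma$ forces $a_j\leq 2$.
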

Note that reducible graphs have weight drop, because $P(G)=P(G_1)P(G_2)$ has weight $\leq (2h_{G_1}-3) + (2h_{G_2}-3) = 2 h_G - 4 < 2 h_G-3$ due to $h_G=h_{G_1}+h_{G_2}-1$.

By the product identity it is clear that the $\ZZ$-span of periods of completed graphs with at least two edge-disjoint triangles forms a ring.%
\footnote{%
	The property of having two edge-disjoint triangles is stable under gluing on triangle faces. Graphs with only one triangle each glue to a triangle-free graph.
}
Up to seven loops the only graph which does not have two edge-disjoint triangles is the complete bipartite graph $K_{4,4}$.
The period of $K_{4,4}$ was conjectured in \cite{BK:KnotsAndNumbers} and is now proven ($P_{6,4}$ in \cite{Schnetz:Census}).
It is possible to express $P(K_{4,4})$ as integer linear combination of periods of graphs with at least two edge-disjoint triangles:
\begin{equation}
	P(K_{4,4})
	= 12P_4P_3-16P_{6,3}.
\end{equation}
Hence, the ring generated by $\phi^4$ periods up to seven loops is contained in the $\ZZ$-span of all $\phi^4$ periods.
It is unlikely that similar equations exist for the $\phi^4$ periods of the eight loop graphs without triangles and with modular $c_2$-invariants.
Still, the following questions remain open:
\begin{quest}\label{quest:alg}
	Do $\phi^4$ periods span a ring over $\ZZ$? Is $\PhiPeriods$ a (free?) $\QQ$-algebra?
\end{quest}
Because we expect that any possible obstruction that prevents $\PhiPeriods$ from being a free commutative $\QQ$-algebra appears at high loop order (compared to the loop orders considered in this article) it is presently impossible to seriously test Question~\ref{quest:alg} (see also Table~\ref{tab:dimensions} and Remark~\ref{remark:hom}).

Note that the coaction is a homomorphism with respect to multiplication,
$
	\Delta(x y) = \Delta(x)\Delta(y),
$
thus if we assume a positive answer to the second question it is sufficient to test Conjecture~\ref{con:coaction} on irreducible graphs.
In the case of generalized Feynman periods $\AllGraphPeriods$, the multiplicative structure is already proven:
\begin{thm}[{\cite[Proposition~7.9]{Brown:FeynmanAmplitudesGalois}}]
	$\AllGraphPeriods^{\motivic}$ is a $\QQ$-algebra.
\end{thm}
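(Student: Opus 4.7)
The goal is to prove closure of $\AllGraphPeriods^{\motivic}$ under multiplication; $\QQ$-linearity and the unit are clear. The plan is to reduce the algebra property to the geometric statement that, for every pair of graphs $G_1, G_2$, there exists a graph $G = G(G_1, G_2)$ with $\GraphPeriods^{\motivic}(G_1)\cdot\GraphPeriods^{\motivic}(G_2) \subseteq \GraphPeriods^{\motivic}(G)$, since $\AllGraphPeriods^{\motivic}$ is by construction the sum of all $\GraphPeriods^{\motivic}(G)$.

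The natural candidate for $G$ is the disjoint union $G_1 \sqcup G_2$ (or, if connectedness is required, the wedge of $G_1$ and $G_2$ at a single vertex). Both realise the key factorisation
\begin{equation*}
	\Psi_G(x,y) = \Psi_{G_1}(x)\,\Psi_{G_2}(y),
\end{equation*}
which follows from the behaviour of \eqref{eq:psi} under $0$- and $1$-sums of graphs. Combinatorially one has $h_G = h_{G_1}+h_{G_2}$, the Schwinger parameter space splits as $\A^{N_G} = \A^{N_{G_1}}\times\A^{N_{G_2}}$, and $X_G = (X_{G_1}\times\A^{N_{G_2}})\cup(\A^{N_{G_1}}\times X_{G_2})$. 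First I would verify the algebra property on the naive level of \eqref{eq:generalized-periods}: given convergent $q_i/\Psi_{G_i}^{k_i}$, the product of the two integrals, rewritten projectively and renormalised by fixing a single coordinate, becomes an integral of the same shape for $G$ with numerator $q_1 q_2$ and denominator $\Psi_G^{k_1+k_2}$. This handles the subspace of \emph{global} periods.

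The motivic upgrade is a Künneth argument. The positive chain $\sigma_G$ splits, up to the projective normalisation, as $\sigma_{G_1}\times\sigma_{G_2}$, and the singular locus of the integrand is contained in the union of the two pull-backs of $X_{G_i}$. Hence the relative cohomology underlying $\GraphPeriods^{\motivic}(G)$ receives a canonical map from the tensor product of the corresponding relative cohomologies of $G_1$ and $G_2$; the product $\alpha\beta$ of motivic periods is realised on the left-hand side of this map and therefore lies in $\GraphPeriods^{\motivic}(G)$.

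The main obstacle I expect is the last step in the presence of non-global periods, since they are not accessible through the explicit integrands of \eqref{eq:generalized-periods} that made the first step easy. One must pass to a suitable iterated blow-up on which the boundary divisors and integration chain become compatible with the product structure, verify that the external Künneth map factors through the blown-up geometry defining $G$, and transport this identification through the comparison between the Betti and de Rham realisations. This bookkeeping, rather than any single conceptual difficulty, is the technical content carried out in \cite[Proposition~7.9]{Brown:FeynmanAmplitudesGalois}.
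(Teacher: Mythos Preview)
The paper does not contain a proof of this statement. It is quoted verbatim as a result from \cite[Proposition~7.9]{Brown:FeynmanAmplitudesGalois} and no argument is given or even sketched in the present article; the theorem is invoked only to contrast the (proven) algebra structure on $\AllGraphPeriods^{\motivic}$ with the open Question~\ref{quest:alg} about $\PhiPeriods$. Hence there is nothing in this paper to compare your proposal against.

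Your sketch is a plausible outline of the argument one would expect in Brown's paper (take $G=G_1\sqcup G_2$ or the one-vertex join, use $\Psi_G=\Psi_{G_1}\Psi_{G_2}$, and upgrade via a K\"unneth-type statement on the motivic level), and indeed this is essentially the mechanism behind Proposition~7.9 there. But any assessment of correctness or of agreement with ``the paper's own proof'' must refer to \cite{Brown:FeynmanAmplitudesGalois}, not to the present article.
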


\subsection{$c_2$-conjectures}
The $c_2$-invariant of any connected graph with at least three vertices was defined in Definition~\ref{def:c2}.
It is conjectured that the $c_2$-invariant is an invariant of the period of the graph:
\begin{con}[Conjecture 5 in \cite{K3phi4}]\label{con:periodc2}
	If $P(G_1)=P(G_2)$ for two {\plogdiv} graphs $G_1$ and $G_2$, then $c_2(G_1)=c_2(G_2)$.
\end{con}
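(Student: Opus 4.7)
The plan is to take a motivic approach, viewing both $P(G)$ and the $c_2$-invariant as shadows of a common underlying object attached to the graph, namely (a suitable piece of) the relative cohomology of the complement $\A^{N_G} \setminus X_G$ of the graph hypersurface, which realizes the Feynman period via the Bloch--Esnault--Kreimer construction, and whose étale realization controls point counts by the Grothendieck--Lefschetz trace formula. If one can show that both invariants factor through this common motive, then $P(G_1)=P(G_2)$ forces agreement of the motives and hence of their point counts modulo $q^3$.

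First, I would lift the equality $P(G_1)=P(G_2)$ to the motivic level. Assuming the standard period conjecture (the injectivity of $\per$), one obtains $I^{\motivic}_{G_1} = I^{\motivic}_{G_2}$ in $\Periods^{\motivic}$. Since a motivic period is a triple consisting of a Betti cycle, an algebraic de Rham class, and a comparison isomorphism, this equality severely constrains the de Rham and Betti data carried by the two graphs. Second, I would extract $c_2(G)$ from this motivic data by reinterpreting Definition~\ref{def:c2} via the Lefschetz trace formula: $[X_G]_q \bmod q^3$ is governed by the Frobenius action on the low-weight piece of the $\ell$-adic cohomology of the graph hypersurface, and the key bridging step is to identify this piece with the étale realization of the same motive whose Betti/de Rham realization produces the Feynman integral. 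The framing results of Brown--Doryn, which already pinpoint $c_2$ as a specific framing of the graph motive, seem to be the right technical input here; one would combine them with denominator reduction to show that the relevant class on the $\ell$-adic side is determined by the motivic period.

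The hard part will be closing the gap between the motivic period and the $c_2$-invariant. The motivic period sees only the cohomology class of one specific differential form paired with one specific integration cycle, while $c_2$ sees the whole graph hypersurface; it is conceivable that two graphs have numerically equal periods arising from genuinely different motivic classes whose Betti pairings happen to coincide, and ruling this out rigorously — rather than by transcendence heuristics — is the core difficulty. A related obstacle is that the construction of motivic Feynman periods in \cite{Brown:FeynmanAmplitudesGalois} is not yet known to behave well under all of the classical operations used for periods (for example, completion of Theorem~\ref{thm:completion} is not even known to preserve $c_2$, as noted in Remark~\ref{rem:period-conjecture}), so even the very first step of lifting $P(G_1)=P(G_2)$ to a motivic equality that sees enough cohomological information is non-trivial. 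Absent progress on these structural questions, one could at least attempt the weaker statement that the $c_2$-invariant depends only on the weight-graded motivic period, which already suffices for the numerical predictions tested in the rest of the paper.
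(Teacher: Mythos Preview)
The statement you are attempting to prove is labelled \texttt{con:periodc2} and is explicitly recorded in the paper as a \emph{conjecture} (Conjecture~5 of \cite{K3phi4}); the paper offers no proof, only supporting evidence and the remark that even the weaker completion statement (Conjecture~\ref{con:completion}) remains open. There is therefore no ``paper's own proof'' to compare your proposal against.

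What you have written is a research outline rather than a proof, and you have in fact identified its own gaps. Two of them are fatal in the present state of knowledge. First, even granting the period conjecture, the equality $I^{\motivic}_{G_1}=I^{\motivic}_{G_2}$ in $\Periods^{\motivic}$ is an equality of \emph{numbers} (matrix coefficients), not of the underlying motives $\motive{G_1}$ and $\motive{G_2}$; two non-isomorphic motives can share a period, so one cannot conclude that the $\ell$-adic realizations---and hence the Frobenius traces governing $[X_G]_q$---agree. Second, the link you invoke from Brown--Doryn identifies $c_2$ with a specific framing of the graph motive, but it does not show that this framing is determined by the single pairing $\langle [\Psi_G^{-2}\Omega],[\sigma]\rangle$ that gives the Feynman period; indeed, the space $\GraphPeriods^{\motivic}(G)$ of \eqref{eq:generalized-periods} is typically much larger than the line spanned by $I^{\motivic}_G$. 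Your own observation that completion (Theorem~\ref{thm:completion}) is not known to preserve $c_2$ already shows that the current motivic machinery cannot bridge period identities and $c_2$ identities. Until these structural questions are resolved, the argument cannot be completed, which is precisely why the paper records the statement as a conjecture.
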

A weaker version of Conjecture~\ref{con:periodc2} is the following `completion conjecture'.
So far, it has been tested on all {\plogdiv} $\phi^4$ graphs up to loop order $8$.
\begin{con}[Conjecture 4 in \cite{K3phi4}]\label{con:completion}
	Graphs with identical completion have identical $c_2$.
\end{con}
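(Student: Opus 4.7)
The strategy is to lift the proof of Theorem~\ref{thm:completion}—which at the period level uses a projective change of variables in the parametric integral to swap $v_1$ for $v_2$—to a statement about point counts of the graph hypersurfaces modulo $q^3$. Writing $G_i=\overline{G}\setminus v_i$, the goal is to show $[X_{G_1}]_q\equiv[X_{G_2}]_q\pmod{q^3}$, so that after dividing by $q^2$ and reducing mod $q$ we obtain $c_2(G_1)_q=c_2(G_2)_q$ as required.

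The first step is to realize both $\Psi_{G_1}$ and $\Psi_{G_2}$ as minors of a common Laplacian-type matrix attached to $\overline{G}$. By the matrix-tree theorem $\Psi_{G_i}=\det M^{(v_i)}$, where $M^{(v_i)}$ is obtained from the weighted Laplacian of $\overline{G}$ by deleting the row and column indexed by $v_i$ and setting the edge variables incident to $v_i$ to zero. Dodgson--Jacobi type identities among minors of $L_{\overline{G}}$ then yield a polynomial relation between $\Psi_{G_1}$ and $\Psi_{G_2}$ whose ``error terms'' are supported on the (at most) eight edges adjacent to $v_1$ or $v_2$. The second step is to use this relation, combined with a fibration argument that stratifies $\FF_q^{N_G}$ by the values of these boundary edge variables, to express $[X_{G_1}]_q-[X_{G_2}]_q$ as a sum of point counts of lower-dimensional auxiliary varieties, each of which must then be shown to vanish modulo $q^3$.

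The main obstacle is precisely this last step. Although the projective change of variables that proves $P(G_1)=P(G_2)$ transports the $G_1$-integrand onto the $G_2$-integrand exactly, point counts over $\FF_q$ are not birational invariants, so the period argument cannot be transported naively. One has to identify the exceptional loci that appear in the $\FF_q$-level comparison and show that their contributions are of order $q^3$; this requires fine simultaneous control over the Dodgson polynomials centred at both $v_1$ and $v_2$, which is not provided by any known formalism. A more structural alternative would be a motivic lift: establish that the sub-motives of $X_{G_1}$ and $X_{G_2}$ controlling $c_2$ coincide, yielding Conjecture~\ref{con:completion} and Theorem~\ref{thm:completion} from a single construction. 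Either route pushes against the current limits of arithmetic and motivic techniques for graph hypersurfaces (compare the discussion in Remark~\ref{rem:period-conjecture}), which is presumably why the statement remains at the level of a conjecture checked only up to eight loops.
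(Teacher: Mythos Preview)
The paper gives no proof of this statement: it is explicitly stated as a conjecture (Conjecture~4 in \cite{K3phi4}) and the paper only remarks that it has been verified numerically on all primitive log-divergent $\phi^4$ graphs up to eight loops. There is therefore nothing in the paper to compare your proposal against.

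Your proposal is also not a proof, and to your credit you say so yourself: you outline a plausible strategy (compare $[X_{G_1}]_q$ and $[X_{G_2}]_q$ modulo $q^3$ via Dodgson identities and a fibration over the boundary edges), identify the crucial gap (point counts are not birational invariants, so the projective change of variables behind Theorem~\ref{thm:completion} does not transfer, and no known control on the exceptional loci yields the required $q^3$ vanishing), and conclude that this is why the statement remains conjectural. That assessment is accurate and matches the state of affairs reflected in the paper. But as a ``proof proposal'' it should be labelled as a discussion of obstructions, not as a proof; the strategy as written does not close, and you have not supplied the missing ingredient.
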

By slight abuse of notation we define the $c_2$-invariant of a completed {\plogdiv} graph $\completed{G}$ as the $c_2$-invariant of $\completed{G}\setminus v$ for any vertex $v$ in $\completed{G}$ (assuming Conjecture~\ref{con:completion}).%
\footnote{%
	While the $c_2$-invariant from Definition~\ref{def:c2} makes sense also for completed graphs $\completed{G}$, in general it differs from our convention $c_2(\completed{G}) \defas c_2(\completed{G}\setminus v)$.
}
In the subsequent sections we will see how the $c_2$-invariants of {\plogdiv} graphs affect the period. The $c_2$-invariants of {\plogdiv} $\phi^4$ graphs up to ten loops are contained in the attached files.

\subsection{The ancestor}
Consider the situation that two triangles in a graph $G$ meet in an edge $e$.
A double triangle reduction of $G$ replaces a vertex of $e$ by a crossing such that a single triangle emerges, see Figure~\ref{fig:doubletriangle}.
\begin{figure}[ht]
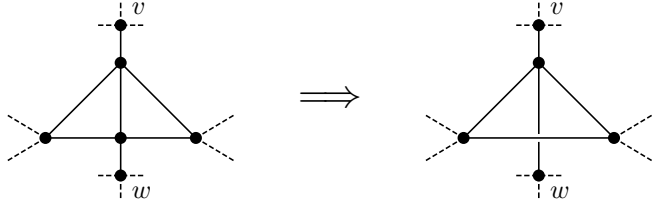

	\centering
	$\Graph{doubletri}$ \qquad \scalebox{1.5}{$\Longrightarrow$} \qquad $\Graph{triangle}$%
	\caption{Double triangle reduction: Replace the joint vertex of two attached triangles by a crossing. Vertices $v$ and $w$ are not allowed to be identical.}%
	\label{fig:doubletriangle}%
\end{figure}
\begin{thm}[Propositions 2.2--2.4 in \cite{Schnetz:Census}]
	Any maximum sequence of product reductions (see Figure~\ref{fig:product}) and double triangle reductions (see Figure~\ref{fig:doubletriangle}) of a completed {\plogdiv} $\phi^4$ graph leads to the same graph which has completed {\plogdiv} components.
\end{thm}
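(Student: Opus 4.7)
The plan is to establish this as a confluence-plus-termination result for the rewriting system whose rules are product reduction and double triangle reduction acting on (multi-)sets of completed {\plogdiv} $\phi^4$ graphs. The claim really has three ingredients that need to be separated: (a) every sequence of reductions terminates, (b) each reduction preserves the completed {\plogdiv} $\phi^4$ property on each resulting component, and (c) the final graph is independent of the order of reductions. By Newman's lemma, once (a) is in hand, (c) follows from local confluence, i.e.\ from checking that any two reductions applicable to the same graph can be completed to a common descendant.

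For termination (a), I would introduce a well-founded measure that strictly decreases under each rewriting step. A natural choice is the lexicographic pair $(h_G, N_G)$, or even just the number of vertices: a product reduction strictly decreases the total loop number of the largest component (because $h_G=h_{G_1}+h_{G_2}-1$ implies $h_{G_i}<h_G$ whenever both factors are non-trivial, which is guaranteed since each completed {\plogdiv} $\phi^4$ graph has at least five vertices), while a double triangle reduction strictly decreases the number of vertices by one while keeping $\completed{G}$ a $4$-regular graph. Termination then follows by well-foundedness.

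Preservation (b) requires checking that after either reduction, each output component is again $4$-regular and satisfies the edge/loop count conditions of \eqref{eq:defnprimdiv}. For product reduction this is essentially the content of Theorem~\ref{prod}: the two pieces $\completed{G_1}$, $\completed{G_2}$ obtained by splitting along a $3$-vertex cut and closing the cut by triangles are again completed {\plogdiv} $\phi^4$ graphs, and $4$-regularity at the cut vertices is restored precisely by the new triangle edges. For double triangle reduction one verifies directly from Figure~\ref{fig:doubletriangle} that $4$-regularity is preserved at every vertex (the central vertex is removed, and the two endpoints $v,w$ each lose one edge to that vertex but gain one through the crossing), the loop number drops by one consistently with $N_{\completed{G}}=2h_{\completed{G}}+2$, and no new strict subgraph of loop order equal to its edge count is created, as any potentially divergent subgraph would have pulled back to a divergent subgraph of the ancestor graph before the move.

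Local confluence (c) is carried out by a finite case analysis based on how the supports of two applicable reductions overlap. Disjoint reductions commute trivially. Two product reductions whose $3$-vertex cuts meet can be rearranged into a sequence of product reductions along the refinement of both cuts; this uses that $3$-cuts in a $4$-regular graph form a laminar-like family modulo trivial factors. Two double triangle reductions whose double-triangle configurations are vertex-disjoint commute; when they share the common edge, or share the apex of one triangle, a direct picture-check (a handful of cases) shows that both orders produce the same graph, with at most one auxiliary double triangle move inserted to reconcile them. The mixed case of a product reduction and a double triangle reduction again reduces to verifying that after exchanging the two, the $3$-vertex cut is either unchanged or replaced by an equivalent cut on the reduced graph. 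I expect this case analysis to be the main technical obstacle: it is not deep, but it is the kind of thing where one must be careful to enumerate all positional relationships (shared edges, shared vertices, one structure contained in the other). Once all overlaps are verified to close up, Newman's lemma gives a unique normal form, which by (b) is a (multi-)set of completed {\plogdiv} $\phi^4$ graphs, establishing the theorem.
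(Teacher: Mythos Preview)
The paper does not prove this statement; it is quoted verbatim as Propositions~2.2--2.4 of \cite{Schnetz:Census} and used without further argument. So there is no in-paper proof to compare against, and your proposal should be measured against the original reference rather than this article.

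That said, your strategy is the right one and matches the spirit of the arguments in \cite{Schnetz:Census}: termination is clear (vertex count strictly drops under either move), preservation of the completed {\plogdiv} $\phi^4$ property is a direct check, and uniqueness of the normal form is established there by verifying that product splits and double triangle reductions commute up to reordering. Your invocation of Newman's lemma packages this cleanly. The one place to be careful is the overlap analysis for two double triangle reductions sharing vertices or edges, and the interaction of a double triangle with a $3$-vertex cut; in \cite{Schnetz:Census} these are handled by explicit case checks, and you correctly flag this as the only genuinely laborious part. Nothing in your outline is wrong, but be aware that the ``laminar-like'' behaviour of $3$-cuts you allude to is not automatic and is exactly what Proposition~2.4 of \cite{Schnetz:Census} establishes.
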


\begin{defn}[Assuming Conjecture~\ref{con:completion}]\label{defanc}
The graph (possibly with several components) that is obtained from a completed {\plogdiv} $\phi^4$ graph $\completed{G}$ by a maximum sequence of product reductions and double triangle reductions is the `ancestor' $\anc(\completed{G})$ of $\completed{G}$.
The graph $\completed{G}$ is a `descendant' of $\anc(\completed{G})$.
The product of the periods of the components of $\anc(\completed{G})$ is the period $P$ of $\anc(\completed{G})$.

If $P(\anc(\completed{G}))\in\MZV[N][\RU_N]$ for some $N$ (which implies that $P(\anc(\completed{G}))$ is mixed Tate) then the `weight drop' of $\anc(\completed{G})$ is---in the case that $\anc(\completed{G})$ has one component---$2h_{\anc(\completed{G})\setminus v}-3$ minus the maximum weight of $P(\anc(\completed{G}))$ (where $v$ is any vertex in $\anc(\completed{G})$).
Factorizations increase the weight drop by 1, so that the weight drop of a general ancestor is the sum of the weight drops of its components plus the number of components minus 1.
Accordingly, the $c_2$-invariant of $\anc(\completed{G})$ is zero if $\anc(\completed{G})$ has more than one component.
Otherwise it is the $c_2$-invariant of $\anc(\completed{G})\setminus v$.
\end{defn}
For example, the zig-zag graphs in Figure~\ref{fig:zig-zag-completions} allow for the longest sequence of double triangle reductions which reduce them all the way down to $\anc(\completed{\zz{n}})=\completed{\zz{3}} = K_5$, which has weight drop $0$ because $P(\zz{3})=6\mzv{3}$ has weight $3=2\cdot 3-3 = 2 h_{\zz{3}}-3$. Its $c_2$-invariant is $c_2(\completed{\zz{3}})=c_2(\zz{3})=-1$.
In contrast, the graph $L_8=P_{8,16}$ from Figure~\ref{fig:ladders} has an ancestor $\anc(P_{8,16})=K_5^3$ with $3$ components (so $c_2(K_5^3)=0$) and therefore a weight drop of $2$ (the components $K_5$ themselves have no weight drop).
\begin{remark}
	The ancestor is (conjecturally) a refinement of the $c_2$-invariant: For any \plogdiv\ graph $G$ the $c_2$-invariant equals the $c_2$-invariant of the ancestor of its completion $\completed{G}$ in the sense of Definition~\ref{defanc},
\begin{equation}\label{c2anc}
	c_2(G)=c_2(\anc(\completed{G})).
\end{equation}
Assuming Conjecture~\ref{con:completion}, this follows from \cite[Corollary~34]{K3phi4} and \cite[Proposition~36]{BrownYeats:WD}.
\end{remark}
The ancestor predicts the maximum weight of a $\phi^4$ period:
\begin{con}\label{con:maxweight}
	Assume $P(G)\in\MZV[N][\RU_N]$ for some $N$ and a {\plogdiv} $\phi^4$ graph $G$ with completion $\completed{G}$.
	Then, the weight drop of $\anc(\completed{G})$ exists\footnote{This means $P(\anc(\completed{G})) \in \MZV[N][\RU_N]$ for some $N$, according to our Definition~\ref{defanc}. Note that the weight filtration exists on all Feynman periods \cite{Brown:FeynmanAmplitudesGalois}, so Conjecture~\ref{con:maxweight} could in principle be formulated without this restriction to periods that are MPLs. However, we have no supporting data since all known $\phi^4$ periods are MPLs.} and it equals the weight drop of $P(G)$, i.e. it equals $2h_G-3$ minus the maximum weight of $P(G)$.
\end{con}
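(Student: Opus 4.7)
By the completion theorem (Theorem~\ref{thm:completion}), $P(G) = P(\completed{G}\setminus v)$ for any vertex $v$, so the weight drop of $G$ depends only on the completion class; I may thus work directly with $\completed{G}$. By definition, $\anc(\completed{G})$ arises from $\completed{G}$ by a finite sequence of product reductions and double triangle reductions. The plan is to induct on the length of such a sequence, showing that each individual reduction preserves the weight drop (in the bookkeeping convention of Definition~\ref{defanc}). Together with the easy base case (the ancestor equals itself with zero remaining reductions) this yields the statement.

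\textbf{Product reductions.} If a 3-vertex cut splits $\completed{G}$ as $\completed{G_1}$ glued to $\completed{G_2}$, Theorem~\ref{prod} gives $P(\completed{G}) = P(\completed{G_1})\,P(\completed{G_2})$, and the loop numbers satisfy $h = h_1 + h_2 - 1$. Assuming inductively that the maximum weights of the factors match those predicted by their ancestors, I would argue that the weight of the product equals the sum of the weights of its factors: in the free polynomial algebra structure of $\MZV[N]$ on Lyndon generators, a product of two non-zero elements of pure maximal weights $w_1$ and $w_2$ has pure maximal weight $w_1+w_2$, with no accidental cancellations to lower weight (this uses the motivic period conjecture only up to the extent it is needed for weight gradedness on MPLs). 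Plugging into $2h-3 = 2(h_1-1) + 2(h_2-1) - 1 + 2$ produces the extra $+1$ in the weight drop, matching the convention in Definition~\ref{defanc}.

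\textbf{Double triangle reductions.} This is the main obstacle. If $\completed{G}$ reduces to $\completed{G'}$ via one double triangle move then $h_{\completed{G}} = h_{\completed{G'}} + 1$, so the upper bound $2h-3$ shifts by $2$; the conjecture demands maxweight$(P(\completed{G})) = $ maxweight$(P(\completed{G'})) + 2$, and moreover $P(\completed{G}) \in \MZV[N][\RU_N]$ if and only if $P(\completed{G'}) \in \MZV[N][\RU_N]$. My approach would be via graphical functions \cite{Schnetz:gf}: excise the double triangle as a subgraph with three external attachment points and represent $P(\completed{G})$ as an integral involving its graphical function $f_{\mathrm{dt}}$. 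A local identity, obtained from integration-by-parts/conformal inversion on the triangle, should give $f_{\mathrm{dt}}(z) = \varphi(z)\, f_{\mathrm{st}}(z)$ where $f_{\mathrm{st}}$ is the single triangle graphical function and $\varphi$ is a universal weight-$2$ factor (independent of the rest of $\completed{G}$). Substituting back would express $P(\completed{G})$ as a weight-$2$ dressing of $P(\completed{G'})$, yielding both preservation of the MPL space and the exact weight shift by $2$.

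\textbf{Main obstacle.} The hard part is establishing the precise weight-$2$ identity for the double triangle move. Existing work \cite{BrownYeats:WD,K3phi4} establishes the relevant one-sided weight-drop inequality via the $c_2$-invariant: if the ancestor has extra weight drop (i.e.\ multiple components, or further double triangles), the period loses at least that much weight. To upgrade this inequality to the equality claimed by Conjecture~\ref{con:maxweight}, one must rule out further unexpected cancellations at maximal weight. This probably requires a fully motivic analysis in the framework of \cite{Brown:FeynmanAmplitudesGalois} that tracks the full graphical function through the reduction, rather than just its residue modulo lower weight. An additional subtlety is verifying that membership in the MPL space $\MZV[N][\RU_N]$ really transfers along the reduction with the same $N$, which presumably requires identifying the geometry of the double triangle subintegral with $\mathbb{P}^1 \setminus \{0,1,\RU_N,\infty\}$ uniformly in the ambient graph.
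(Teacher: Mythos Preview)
The statement you are attempting to prove is a \emph{Conjecture} in the paper, not a theorem: the authors do not provide any proof, and it is presented as an empirical observation supported by their explicit computations of periods up to eleven loops. So there is no ``paper's own proof'' to compare against.

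Your proposal is really a strategy sketch with an honestly acknowledged gap, and that gap is genuine. The product-reduction step is fine: Theorem~\ref{prod} plus the weight grading on motivic MPLs does give the exact weight bookkeeping you describe. The double-triangle step, however, does not go through as written. There is no known identity of the form $f_{\mathrm{dt}}(z)=\varphi(z)\,f_{\mathrm{st}}(z)$ with a universal weight-$2$ factor $\varphi$; double triangle reduction is not a local move that simply peels off a fixed weight-$2$ integral from the period. What is actually proven in \cite{BrownYeats:WD,K3phi4} is invariance of the $c_2$-invariant under double triangle reduction, which is a congruence statement about point counts modulo $q^3$ and gives at best the one-sided implication (weight drop of the ancestor $\Rightarrow$ weight drop of the descendant, in the cases treated there). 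Upgrading this to the two-sided \emph{equality} of weight drops, and simultaneously to preservation of the ambient number space $\MZV[N][\RU_N]$, is precisely the open content of the conjecture. Your graphical-function heuristic would need a motivic lift of the double-triangle move that controls the top-weight piece exactly, and no such result is currently available; this is why the authors state it as a conjecture rather than a theorem.
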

\begin{figure}
	\centering
	\includegraphics[width=0.42\textwidth]{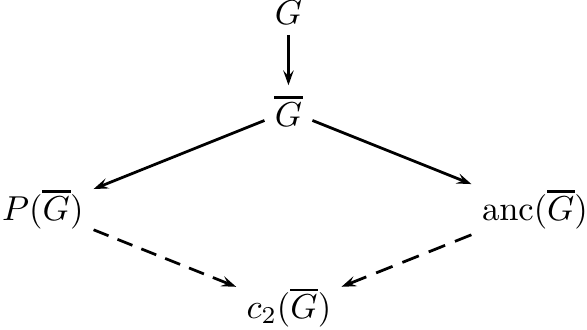}%
	\caption{The relations between a {\plogdiv} $\phi^4$ graph $G$, its completion $\completed{G}$, the period $P(\completed{G})$, the ancestor $\anc(\completed{G})$ and the $c_2$-invariant $c_2(\completed{G})$. The two dashed maps are conjectural.}%
	\label{fig:relations}%
\end{figure}
Note that two graphs with equal period may have different ancestors. The relations between the graph, its completion, the period, the ancestor and the $c_2$-invariant is depicted in Figure~\ref{fig:relations}. Going down the diagram reduces the number of objects at a given loop order. Note that the existence of the $c_2$-invariant for completed graphs and the relation to the period depend on Conjectures~\ref{con:completion} and \ref{con:periodc2}.
The ancestors of all $\phi^4$ graphs up to eleven loops are contained in the attached files.

\subsection{$c_2$-invariant $-1$}
It is conjectured in \cite{modphi4} that {\plogdiv} $\phi^4$ graphs have $c_2$-invariant $-1$ if and only if their ancestor is the complete graph with five vertices $K_5$ ($=P_3$ in \cite{Schnetz:Census}).
Their periods are conjectured to be MZVs:
\begin{con}\label{con:c2-1}
	If $G$ is a {\plogdiv} graph with $c_2$-invariant $-1$, then $P(G)$ is an MZV of weight $2h_G-3$.
\end{con}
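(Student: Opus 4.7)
\medskip

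\noindent\textbf{Proof proposal.} The plan is to reduce the statement to an inductive analysis of descendants of $K_5$ under the two elementary operations of Figures~\ref{fig:product} and \ref{fig:doubletriangle}. By the conjectured characterisation recalled just before the statement (from \cite{modphi4}), a {\plogdiv} $\phi^4$ graph $G$ satisfies $c_2(G)=-1$ if and only if $\anc(\completed{G})=K_5$. Granting this characterisation, the $\phi^4$ case reduces to the claim that every descendant of $K_5$ (obtained by successively gluing $K_5$-components on triangle faces and inserting double triangles) has an MZV period of weight exactly $2h_G-3$. For the non-$\phi^4$ case one invokes Conjecture~\ref{con:periodc2}: a {\plogdiv} $G$ with $c_2(G)=-1$ should have the same period as some {\plogdiv} $\phi^4$ graph with the same $c_2$, and one has to verify that the weight of that shared period in the $\phi^4$ representative still matches $2h_G-3$ when read off from the non-$\phi^4$ side (note this is non-trivial because the loop numbers can differ).

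The inductive skeleton then runs as follows. For the base case, $\anc(\completed{G})=K_5$ means the ancestor has exactly one component with $P(K_5)=P(\zz{3})=6\mzv{3}$, which is MZV of weight $3=2\cdot 3-3$. Reducible steps are incompatible with $c_2=-1$: if $\completed{G}$ factors via Theorem~\ref{prod}, then by Definition~\ref{defanc} the ancestor has at least two components and $c_2(\anc(\completed{G}))=0$; so one only needs to handle the double triangle insertion, which is the inverse of the reduction of Figure~\ref{fig:doubletriangle} and increases the loop number by~$1$. The core technical step is to show that if $G$ is obtained from $G'$ by inserting a double triangle, with $P(G')$ MZV of weight $2h_{G'}-3$, then $P(G)$ is MZV of weight $2h_{G'}-1$. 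The natural tool is the machinery of graphical functions from \cite{Schnetz:gf,Schnetz:GeneralizedSV,GolzPanzerSchnetz:GfParam}: the double triangle is the simplest non-trivial appending operation, and inserting it corresponds, at the level of graphical functions, to two successive single-valued integrations in a three-point configuration. Each such integration raises the weight by exactly one and preserves the property of lying in the single-valued MZV-algebra. Iterating from $K_5$ along any sequence of double triangle insertions then gives both the MZV property and the tight weight count $2h_G-3$, with no weight drop.

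The hard part will be the foundational step, namely establishing the conjectural $\anc(\completed{G})=K_5$ characterisation of $c_2(G)=-1$ for $\phi^4$ graphs; this is an independent open problem whose proof is likely to require combinatorial input on the denominator reduction or Chevalley--Warning reductions underlying the point-count $[X_G]_q$ (see \cite{Schnetz:Fq,modphi4}). A secondary obstacle is the sharp weight tracking in the inductive step: showing that no hidden weight drop occurs under double triangle insertion is equivalent, via Definition~\ref{defanc} and \cite{BrownYeats:WD}, to controlling the integration-by-parts cancellations that could otherwise remove the leading depth; this is tractable for individual insertions but genuinely delicate in iterated applications. Finally, extending the argument from $\phi^4$ to arbitrary {\plogdiv} graphs requires Conjecture~\ref{con:periodc2} together with a mild strengthening that records how weights transform between period-equal graphs of different loop orders, something currently only verified experimentally.
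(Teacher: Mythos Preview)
The statement you are addressing is labelled a \emph{Conjecture} in the paper, and the paper offers no proof of it; it is supported purely by the computed periods (all $c_2=-1$ graphs up to eight loops). So there is no paper proof to compare against, and your proposal should be read as a strategy toward an open problem rather than as a candidate proof.

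That said, your outline has two genuine gaps beyond the conjectural ingredients you already flag. First, the reduction of the non-$\phi^4$ case to the $\phi^4$ case via Conjecture~\ref{con:periodc2} does not work: that conjecture says $P(G_1)=P(G_2)\Rightarrow c_2(G_1)=c_2(G_2)$, whereas you would need the (much stronger and unsupported) converse that every {\plogdiv} graph with $c_2=-1$ shares its period with some $\phi^4$ graph of the same $c_2$. Nothing in the paper or the cited literature suggests this.

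Second, and more seriously, the inductive step is not available. Double triangle reduction preserves the $c_2$-invariant (this is the content of \cite[Corollary~34]{K3phi4}), but it does \emph{not} preserve, or relate in any known way, the period $P(G)$. There is no formula expressing $P(G)$ in terms of $P(G')$ when $G'$ is the double triangle reduction of $G$; the graphical-function argument you sketch would control a three-point function, not the scalar period, and the passage from the former to the latter is exactly where the difficulty lies. In particular, the zig-zag theorem (Theorem~\ref{thm:zig-zag}) was not proved by iterating double triangles from $K_5$ in this way; it required a separate, graph-specific analysis. So even granting $\anc(\completed{G})=K_5$, your induction cannot get off the ground with current tools.
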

Up to eight loops the periods of all graphs with $c_2$-invariant $-1$ are known.

\subsection{$c_2$-invariant $0$}\label{sec:c20}
Graphs with $c_2$-invariant $0$ are weight drop graphs (by definition).
If a weight drop graph $G$ has a period in $\MZV[N]$ for some $N$ then it is conjectured that the weight of $P(G)$ is $\leq 2h_G-4$.
More precisely, the results in $\phi^4$ theory are consistent with the following conjecture:
\begin{con}\label{con:weightdrop}
	Let $G$ be a \plogdiv\ $\phi^4$ graph with $c_2$-invariant $0$. 
	If $P(G)$ is mixed Tate, then either $P(G)$ is of pure weight $2h_G-4$ or $P(G)$ mixes (some) weights between $h_G+2$ and $2h_G-5$.
\end{con}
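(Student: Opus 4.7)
The conjecture bundles two separate bounds on the weight of $P(G)$ under the hypotheses that $G$ is a \plogdiv\ $\phi^4$ graph with $c_2(G)=0$ and $P(G)$ mixed Tate: an upper bound of $2h_G-4$ (the `weight drop' bound, already established in special cases in \cite{BrownYeats:WD}) and, when $P(G)$ is not of pure weight, a lower bound forcing every weight appearing to lie in the interval $[h_G+2,\,2h_G-5]$. I would treat these two halves with different tools and combine them at the end.

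For the upper bound I would induct on $h_G$. By the identity \eqref{c2anc}, the hypothesis $c_2(G)=0$ passes to $\anc(\completed{G})$. Either that ancestor is already disconnected, in which case Theorem~\ref{prod} together with the inductive hypothesis applied to each factor yields the weight of $P(G)$ at most $(2h_{G_1}-3)+(2h_{G_2}-3)=2h_G-4$, using $h_G=h_{G_1}+h_{G_2}-1$ as noted immediately after Theorem~\ref{prod}; or the ancestor is a single connected \plogdiv\ $\phi^4$ graph with vanishing $c_2$. In the latter case I would invoke the Brown--Yeats denominator reduction of \cite{BrownYeats:WD}: vanishing $c_2$ produces an additional linear reduction in the parametric representation \eqref{eq:Feynman-period} which lowers the top weight by one. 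The remaining task is to verify that this reduction applies without obstruction in the mixed-Tate setting, if necessary by lifting the argument to the motivic space $\GraphPeriods^{\motivic}(G)$ so that \eqref{eq:coaction-theorem} guarantees closure of the relevant operations.

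For the lower bound I would work dually through the motivic coaction. Scenario~2 of Conjecture~\ref{con:coaction} predicts that each derivation $\delta_m$ sends $\PhiPeriods[h_G]^{\motivic}$ into $\LogPeriods[h_G-1]^{\motivic}$. Iterating these derivations on the $f$-alphabet representation of $P(G)$ strips one letter at a time while strictly decreasing the loop order. A hypothetical component of weight $w<h_G+2$ would, after at most $h_G-2$ such steps, land in some $\LogPeriods[n]^{\motivic}$ with $n\leq 1$, where the letters that actually arise are very few (only those realized by trivial or one-loop graphs). Combining this with the small graphs principle (Theorem~\ref{thm:small-graphs-principle}) to exclude $f_2^N$ as a right-most letter, and with the parity result Theorem~\ref{thm:falphabet-parity} to control the appearance of even-weight letters, should then force a contradiction.

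The main obstacle is precisely this lower bound. The upper bound rests on fairly concrete machinery (the product identity plus Brown--Yeats), but the lower bound relies on the still conjectural Scenario~2 together with a sharp, loop-order-indexed classification of which low-weight MPLs actually occur in $\LogPeriods[n]^{\motivic}$. Without at least a proven fragment of Conjecture~\ref{con:coaction} closing on mixed Tate subspaces, together with inductive control of the bottom of the weight spectrum of $\LogPeriods[n]^{\motivic}$, I see no way to obtain the value $h_G+2$ rigorously. Granting the coaction conjecture, however, the explicit bound $h_G+2$ should follow from a careful loop-order induction driven by the product identity and the small-graphs exclusions.
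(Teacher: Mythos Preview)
The statement you are trying to prove is labelled in the paper as a \emph{Conjecture} (Conjecture~\ref{con:weightdrop}), not a theorem; the paper offers no proof at all, only experimental support. So there is nothing in the paper to compare your argument against, and any correct argument would constitute a new result.

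Your proposal, however, has genuine gaps. The most concrete error is in the upper-bound step: you write that if $\anc(\completed{G})$ is disconnected, then Theorem~\ref{prod} factorizes $P(G)$. This is false. The product identity applies only when $\completed{G}$ itself has a $3$-vertex cut; the ancestor can be disconnected purely through double triangle reductions, which change the period (the ladder $L_8=P_{8,16}$ is irreducible yet $\anc(\completed{L_8})=K_5^3$). The relation between the weight drop of $P(G)$ and that of the ancestor is exactly the content of Conjecture~\ref{con:maxweight}, which is open. Likewise, the Brown--Yeats mechanism from \cite{BrownYeats:WD} is only known to give the weight drop ``in some cases'' (as the paper says in Section~\ref{sec:c2}); invoking it for an arbitrary $c_2=0$ graph is not justified.

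For the lower bound you explicitly assume Scenario~2 of Conjecture~\ref{con:coaction}, so at best you would obtain a conditional implication, not a proof. Even granting Scenario~2, your sketch does not pin down the number $h_G+2$: iterating $\delta_m$ reduces weight by $m\geq 3$ (for MZVs) per step, not by $1$, so the counting ``after at most $h_G-2$ steps'' does not match, and you have no classification of the bottom of $\LogPeriods[n]^{\motivic}$ to contradict. In short, both halves of your argument rest on unproved conjectures or misapplied theorems, and the statement remains open.
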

\begin{figure}
	\centering
	${L_8} = \Graph[0.9]{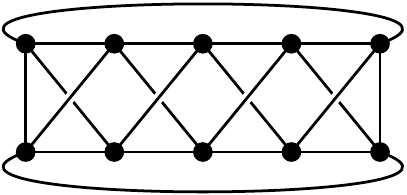}$
	\qquad
	${L_{10}} = \Graph[0.9]{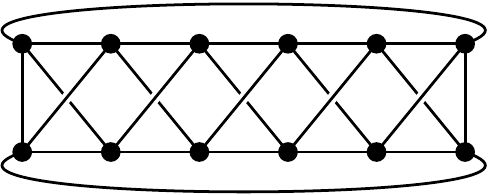}$
	\caption{The ladder $L_8=P_{8,16}$ provides the weight $10$ transcendental $Q_{10}$ in \cite{Schnetz:Census}.
	The ladder $L_{10}=P_{10,425}$ conjecturally provides the weight $12$ transcendental $Q_{12,5}$ (see Section~\ref{sec:L10}).}%
	\label{fig:ladders}%
\end{figure}
The majority of graphs with known period have $c_2$-invariants 0 or $-1$.
The first graphs with double weight drop are triple products of the (completed) graph $K_5$. They have seven loops and period $(6\mzv{3})^3$.
The first non-product graph with double weight drop is the eight loop graph $L_8=P_{8,16}$ which mixes weights $10$ and $11$ (see Figure~\ref{fig:ladders}).
It provides the weight $10$ MZV transcendental $Q_{10}$ in \cite{Schnetz:Census}. Note that $Q_{10}$ is absent at seven loops.
Because $Q_{10}\mzv{3}$ appears in several eight loop $\phi^4$ periods the existence of $P_{8,16}$ is vital for $\delta_3$ to close on $\phi^4$ periods (in Scenario~1).
We see this fact by the appearance of $P_{8,16}$ in the right column of Table~\ref{tab:periods} at the end of this article.

Note that by the above conjecture, $\phi^4$ periods in some $\MZV[N]$ of weight $\leq 11$ can only appear in $\PhiPeriods[7]$ or in multiple weight drop graphs up to nine loops.
Because $\PhiPeriods[7]$ is known and, assuming Conjecture~\ref{con:maxweight}, all multiple weight drop $\phi^4$ periods of at most nine loops are known, we conjecture that we know all mixed Tate $\phi^4$ transcendentals up to weight $11$ (see Table~\ref{tab:dimensions}).
Conjecturally, they are spanned by products of the Riemann zeta values $\mzv{3}$, $\mzv{5}$, $\mzv{7}$, $\mzv{9}$, $\mzv{11}$ and the MZVs $Q_8$, $Q_{10}$, $Q_{11,2}$ given in \cite[Tables 3a and 3b]{Schnetz:Census}, plus the period $P_{7,11}\in\sqrt{3}\ImTeil\MZV[6]$ given in \eqref{eq:p711}. Note that these elements generate a free subalgebra of $\Periods^{\motivic}$.

\begin{lem}[Follows from Scenarios 1 and 2]\label{lem:alternate}
Let $G$ be a {\plogdiv} $\phi^4$ graph of minimal loop order such that $P(G)\in\MZV[2]\setminus\MZV$ is an alternating sum but not an MZV. Then the period has weight drop, i.e.\ the weights of $P(G)$ are $\leq 2h_G-4$.
\end{lem}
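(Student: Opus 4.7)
The plan is a proof by contradiction. Suppose on the contrary that the weight-$(2h_G-3)$ component of $P(G)$ is not in $\MZV$. By Theorem~\ref{thm:falphabet-parity}, the failure of a value in $\MZV[2]$ to lie in $\MZV$ is detected precisely by the presence of the letter $f^2_1$ in its $f$-alphabet expansion. Thus some word $w=a_r\cdots a_1$ of weight $2h_G-3$ in $\psi(P(G))$ carries an $f^2_1$ at some position $k$.

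The first move is to rule out the two extremal positions of $f^2_1$ in $w$. The rightmost position $k=1$ is forbidden by the strong small graphs principle \cite[Theorem~9.4]{Brown:FeynmanAmplitudesGalois} (of which Theorem~\ref{thm:small-graphs-principle} is a corollary): $f^2_1$ is never the rightmost letter of a $\phi^4$ period in the $f$-alphabet. The leftmost position $k=r$ is forbidden by a weight argument via Scenario~2: if $f^2_1$ were leftmost with non-vanishing combined coefficient in the weight-$(2h_G-3)$ part of $\psi(P(G))$, then $\delta_1 P(G)$ would carry a non-zero component of pure weight $2h_G-4$; but Scenario~2 forces $\delta_1 P(G)\in\LogPeriods[h_G-1]^{\motivic}$, whose weights are bounded by $2(h_G-1)-3=2h_G-5$, a contradiction.

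One is thus reduced to the interior case $1<k<r$. I would then isolate the Galois conjugate $v=f^2_1\,a_{k-1}\cdots a_1$ arising from the deconcatenation split $w=(a_r\cdots a_{k+1})\cdot v$. This $v$ appears on the motivic side of a tensor summand in $\Delta' P(G)$, so by Scenario~1 we have $v\in\PhiPeriods^{\motivic}$, and by Scenario~2 the same $v$ lies in $\LogPeriods[h_G-1]^{\motivic}$. Since $v$ begins with $f^2_1$ it is not an MZV, so $v\in\MZV[2]^{\motivic}\setminus\MZV^{\motivic}$. Granting the expected compatibility of the two loop filtrations, $\PhiPeriods\cap\LogPeriods[n]=\PhiPeriods[n]$, we conclude $v\in\PhiPeriods[h_G-1]^{\motivic}$. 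Writing $v=\sum_i c_iP(G_i)$ over $\phi^4$ graphs with $h_{G_i}\leq h_G-1$, and using that MZVs form a $\QQ$-subspace, at least one summand $P(G_i)$ must itself lie in $\MZV[2]\setminus\MZV$, contradicting the minimality of $h_G$.

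Two subtleties must be handled. First, cancellations in $\Delta'$: the tensor summand could in principle cancel against other contributions sharing the same prefix $u=a_r\cdots a_{k+1}$, and one should pick $w$ lexicographically maximal among words of $\psi(P(G))$ to ensure a non-vanishing $v$. Second, and the main obstacle: the conclusion rests on the loop-grading compatibility $\PhiPeriods\cap\LogPeriods[n]=\PhiPeriods[n]$, which is not literally part of Scenarios~1 and~2. The paper itself notes in the discussion around $P_{8,17}$ and $P_{8,16}$ that Galois conjugates of $\phi^4$ periods may land in $\LogPeriods[h_G-1]\setminus\PhiPeriods[h_G-1]$; turning the outline into a fully rigorous argument hence requires this natural grading consistency as an auxiliary hypothesis or an alternative structural input (e.g.\ via the $c_2$-invariant of Section~\ref{sec:c2}).
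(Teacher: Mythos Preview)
Your argument reaches the same contradiction as the paper's but by a longer route. The paper does not split into three cases by the \emph{position} of $f^2_1$ inside $w$; it looks only at the \emph{leftmost letter} of a weight-$(2h_G-3)$ word $w$ containing $f^2_1$. If that letter is $f^2_1$, the weight argument via Scenario~2 (your case $k=r$) gives the contradiction directly: $\delta_1 P(G)$ has a nonzero weight-$(2h_G-4)$ component, but $\LogPeriods[h_G-1]$ has top weight $2h_G-5$. If the leftmost letter is $f^2_n$ with $n\geq 3$, then $\delta_n P(G)\in\PhiPeriods$ by Scenario~1 and still contains a word with $f^2_1$ (namely $\delta_n w$) of strictly lower weight; this is the descent the paper invokes as ``contradicts the minimality of $P(G)$''. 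In particular the small-graphs principle is not used: your rightmost case $k=1$ is subsumed by the dichotomy on the first letter and adds an ingredient the paper does not need.

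Your first subtlety (cancellations in $\Delta'$) is not a genuine issue and needs no lexicographic fix. In the $f$-alphabet, $\delta_n$ simply strips the leading $f^2_n$ from each basis word beginning with it; since distinct words have distinct tails, $\delta_n w$ survives in $\delta_n P(G)$ with the same nonzero coefficient it had in $P(G)$. The same observation shows that your ``Galois conjugate $v$'' should really be read as ``the Galois conjugate corresponding to the de Rham prefix $a_r\cdots a_{k+1}$'', which is a sum of tails containing $v$ among them; the conclusion $\notin\MZV$ still follows.

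Your second subtlety is well spotted and in fact applies to the paper's own proof. When the paper says the lower-weight element $\delta_n P(G)\in\PhiPeriods\cap(\MZV[2]\setminus\MZV)$ ``contradicts the minimality of $P(G)$'', Scenario~1 alone gives no bound on the loop orders of the $\phi^4$ graphs spanning $\delta_n P(G)$; linking lower weight to lower loop order tacitly uses something like $\PhiPeriods\cap\LogPeriods[n]=\PhiPeriods[n]$, exactly the compatibility you flag. The paper treats the lemma as a heuristic consequence of Scenarios~1 and~2 (note the framing ``Follows from''), and neither argument is fully rigorous without this extra assumption.
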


\begin{proof}[Proof by contradiction]
Assume that $P(G)$ has weight $2h_G-3$ (no weight drop).
Consider the $f$-alphabet expression $\psi_{\DeligneBasis[2]}(G)$ for $P(G)$.
By Theorem~\ref{thm:falphabet-parity} it must contain a word $w$ with the letter $f^2_1$. If the leftmost letter $f^2_n$ of $w$ is not $f^2_1$, we get a contradiction: $P(G)$ would have a non-MZV Galois conjugate $\delta_nP(G)\in\MZV[2]\setminus\MZV$ ($\delta_n w$ contains the letter $f^2_1$) of lower weight. This contradicts the minimality of $P(G)$, because $\delta_nP(G)\in\PhiPeriods$ by Scenario~1.
Therefore $n=1$ and $\delta_1P(G)\neq0$. The weight of $\delta_1P(G)$ is $2h_G-4$,
but Scenario~2 implies $\delta_1P(G)\in\LogPeriods[h_G-1]$ which has maximum weight $2h_G-5$.
\end{proof}

This observation is consistent with the known data: The first alternating sum is $Q_{12,4}$ \eqref{eq:p93612} with weight $12$ in the double weight drop nine loop period $P_{9,36}=P_{9,75}$.
The single weight drop period $P_{9,108}=P_{9,111}$ is an alternating sum of weight $14$.
Assuming Scenario~1 the alternating sum $Q_{12,5}$ \eqref{eq:Q12,5} is expected in the triple weight drop graph $L_{10}$, see Section~\ref{sec:L10}.
The alternating sum $Q_{12,5}$ was found in several non-$\phi^4$ weight drop graphs with eight loops (e.g.\ in $P^\non4_{8,433}$).

Conjecture~\ref{con:weightdrop} is false for non-$\phi^4$ graphs. There exist {\plogdiv} non-$\phi^4$ graphs $G$ with $c_2$-invariant $0$ which mix all weights from 6 to $2h_G-4$ (e.g.\ the seven loop graph $P^\non4_{7,17}$). In general, compared to non-$\phi^4$ periods, $\phi^4$ periods very rarely mix weights.

\subsection{$c_2$-invariant $-z_2$}\label{sec:z2}
We were able to calculate ten $\phi^4$ periods of seven, eight, and nine loops with $c_2$-invariant $-z_2$ (see e.g.\ \cite{Schnetz:Census} for $P_{7,8}$ and \cite{Broadhurst:Bristol2011,Broadhurst:Radcor2013,Panzer:LL2014} for $P_{7,9}$).
\begin{con}\label{con:c2-z2}
	If $G$ is a {\plogdiv} graph with $c_2$-invariant $-z_2$, then $P(G)\in\MZV[2]$ of weight $2h_G-3$.
\end{con}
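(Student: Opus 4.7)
The plan is to split the conjecture into two parts: the \emph{weight claim} that $P(G)$ is of pure weight $2h_G-3$, and the \emph{alphabet claim} that $P(G)\in\MZV[2]$. Both parts hinge on upgrading the currently conjectural connection between the $c_2$-invariant and the underlying graph motive into a theorem, after which Deligne's results and Brown's coaction theorem \eqref{eq:coaction-theorem} should finish the argument.

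For the alphabet claim, my target would be to show that the finite-dimensional space $\GraphPeriods^{\motivic}(G)$ of \cite{Brown:FeynmanAmplitudesGalois} is contained in the category of mixed Tate motives over $\ZZ[\tfrac{1}{2}]$. By Deligne's description of the motivic fundamental groupoid of $\PP^1\setminus\set{0,\pm 1,\infty}$ \cite{Deligne:GroupeFondamental}, all periods of such motives lie in $\MZV[2]$. The desired inclusion $\GraphPeriods^{\motivic}(G)\subseteq\MZV[2]^{\motivic}$ should follow once one can prove that the motive attached to the complement of the graph hypersurface $X_G$ is unramified outside $\set{2,\infty}$; the conjectural dictionary is that ramification behaviour is encoded by $c_2$, with $c_2=-z_2$ signalling ramification only at the prime $2$ (as opposed to $c_2=-1$, which signals unramifiedness everywhere and predicts an MZV via Conjecture~\ref{con:c2-1}).

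For the weight claim, I would mirror the strategy expected for Conjecture~\ref{con:c2-1}: exhibit a non-degenerate framing of the graph motive in weight $2h_G-3$. Because $-z_2$ is a quasi-constant with averaged value $c=-1$, the picture after base change to $\ZZ[\tfrac{1}{2}]$ should parallel the constant case $c_2=-1$, where the maximal-weight framing survives. Combining Brown's weight filtration on Feynman motives with the framing analysis of \cite{BrownDoryn:Framings} should give a criterion for non-vanishing of the top weight piece in terms of the $c_2$-invariant evaluated on prime powers $q$ with $q\equiv1\pmod{2}$.

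The principal obstacle is exactly this geometric dictionary, which is currently inaccessible: even the weaker Conjecture~\ref{con:periodc2}, saying that $c_2$ is a period invariant, is open. A concrete intermediate target would be to lift the congruence $[X_G]_q\equiv -z_2(q)\cdot q^2\pmod{q^3}$ to a cohomological statement, identifying the relevant summand of $H^{\bullet}(\A^{N_G}\setminus X_G)$ with motivic cohomology of $\Spec\ZZ[\tfrac{1}{2}]$ along the lines of \cite{Schnetz:Fq,BrownDoryn:Framings}. The base of the induction would be the ten explicitly computed periods with $c_2=-z_2$. Once the cohomological interpretation is secured, the rest should fall out from Theorem~\ref{thm:Deligne} together with an induction in the $f$-alphabet using Theorem~\ref{thm:falphabet-parity} to rule out primitives outside $\MZV[2]$ at each step of the coaction \eqref{eq:coaction-theorem}.
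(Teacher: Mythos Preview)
The statement is a \emph{conjecture} in the paper, not a theorem: the paper offers no proof whatsoever. It is presented purely as an empirical observation, supported by the ten computed $\phi^4$ periods with $c_2$-invariant $-z_2$ at seven, eight and nine loops mentioned at the start of Section~\ref{sec:z2}. There is therefore no ``paper's own proof'' to compare your proposal against.

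Your proposal is not a proof either, and you say so yourself: the central step---showing that $c_2(G)=-z_2$ forces $\GraphPeriods^{\motivic}(G)$ to lie in the category of mixed Tate motives over $\ZZ[\tfrac{1}{2}]$---is, as you write, ``currently inaccessible''. This is not a technicality but the entire content of the conjecture. No mechanism is known that links the mod-$q$ residue $c_2(G)$ to the ramification (or even the mixed-Tate-ness) of the graph motive; the paper stresses that already the much weaker Conjecture~\ref{con:periodc2} (that $c_2$ is a period invariant) and Conjecture~\ref{con:completion} are open. Moreover, graph motives are known \emph{not} to be mixed Tate in general (the modular $c_2$-invariants at eight loops, Section~\ref{sec:non-MPL}), so any argument must explain why the specific value $-z_2$ rules this out---your proposal does not address this beyond analogy. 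The ``cohomological lift'' you propose as an intermediate target is itself a well-known open problem with no clear line of attack, and the subsequent appeal to Deligne and Brown would only kick in once that is done.

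In short: the paper treats this as a conjecture backed by data, and your proposal is a restatement of the conjecture in motivic language together with an honest acknowledgement that the key bridge is missing. There is no genuine proof on either side.
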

Rather frequently, periods of such graphs are actually MZVs. These cases include all graphs with known period and $c_2$-invariant $-z_2$ with $\leq 8$ loops; in particular $P_{7,8}$ and $P_{7,9}$.
We now give a possible explanation for this very late appearance of alternating sums (compared to sixth roots of unity, which appear starting at $7$ loops).
\begin{lem}[Follows from Scenario~2]\label{lem:delta1}
	Let $N \in \set{2,3,4}$ and $G$ be a {\plogdiv} $\phi^4$ graph with period $P(G)\in\MZV[N][\RU_N]$ with maximum weight $2h_G - 3$. 
	Then $\delta_1P(G)=0$, where $\delta_1$ is the derivation with respect to the weight one letter $f^N_1\in\MZVf[N]$ in the parity basis $\ParityBasis[N]$ of Corollary~\ref{cor:parity-even-Deligne}.
\end{lem}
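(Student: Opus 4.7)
The plan is to combine the weight-shifting property of $\delta_1$ with the loop bound coming from Scenario~2 and the standard weight bound for $\phi^4$ periods.

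First, I would reinterpret $\delta_1 P(G)$ via the coaction. On $\MZVf[N]$, the reduced coaction is deconcatenation~\eqref{eq:coact-falphabet}, so the coefficient of the length-one left factor $f_1^N$ in $\Delta' P(G)$ is precisely $\delta_1 P(G)$ on the right-hand side of the tensor. The projection $\pi_{f_1^N}\colon\MZVfDR[N]\to\QQ$ extracting the coefficient of $f_1^N$ is well defined (here we use $N \in \set{2,3,4}$, which is exactly the range in which $f_1^N$ is a letter of the $f$-alphabet), so $\delta_1 P(G)=(\pi_{f_1^N}\otimes\mathrm{id})\,\Delta' P(G)$.

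Next, I would apply Scenario~2: since $G$ is a $\phi^4$ graph with $h_G$ loops, we have $\Delta' P(G) \in \Periods^{\dR}\otimes\LogPeriods[h_G-1]^{\motivic}$, and applying $\pi_{f_1^N}\otimes\mathrm{id}$ places $\delta_1 P(G)\in\LogPeriods[h_G-1]^{\motivic}$.

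Finally, I would compare weights. Because $P(G)$ is of pure weight $2h_G - 3$ and $f_1^N$ has weight $1$, the derivation $\delta_1$ drops the weight by one, so $\delta_1 P(G)$ is of weight $2h_G - 4$. On the other hand, every plogdiv graph $H$ has $P(H)$ of weight at most $2h_H - 3$ (the standard bound underlying the discussion in Section~\ref{sec:c20}), so every element of $\LogPeriods[h_G-1]^{\motivic}$ has weight at most $2(h_G-1)-3 = 2h_G - 5$. Since $2h_G - 4 > 2h_G - 5$, we conclude $\delta_1 P(G) = 0$.

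The main obstacle is conceptual rather than computational: one must ensure that Scenario~2 produces an element on which the projection $\pi_{f_1^N}$ is defined. This is secured by $P(G)\in\MZV[N]$, by the stability of $\MZV[N]$ under the coaction (Remark~\ref{rem:Deligne}), and by the $f$-alphabet model of $\MZV[N]^{\dR}$. A secondary subtlety is that the weight comparison in the last step requires purity of $P(G)$ at the top weight; the phrase ``maximum weight $2h_G-3$'' is naturally read this way in the no-weight-drop regime of Conjecture~\ref{con:weightdrop}, but without purity the argument only rules out the top-weight component of $\delta_1 P(G)$.
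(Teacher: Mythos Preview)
Your proof is correct and follows essentially the same route as the paper: invoke Scenario~2 to land $\delta_1 P(G)$ in $\LogPeriods[h_G-1]$, then compare the weight $2h_G-4$ of $\delta_1 P(G)$ with the maximal weight $2h_G-5$ available there. Your extra care in justifying that $\delta_1$ arises from the coaction (so Scenario~2 applies) and your caveat about purity at the top weight are welcome elaborations; the paper's one-line proof tacitly assumes the same purity.
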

\begin{proof}
By Scenario 2, $\delta_1 P(G)\in\LogPeriods[n-1]$. The maximum weight in $\LogPeriods[n-1]$ is $2(n-1)-3$ which is smaller than the weight $2n-4$ of $\delta_1 P(G)$.
\end{proof}

Assuming Scenarios~1, 2 and full knowledge of the mixed Tate $\phi^4$ transcendentals of weight $\leq 11$ (see Section~\ref{sec:c20}),
we expect the first non-MZV alternating sum among graphs with $c_2$-invariant $-z_2$ at nine loops, weight 15. Then, $\delta_3$ can give a $\QQ$-linear combination of $Q_{12,4}$, \eqref{eq:p93612}, $Q_{12,5}$, \eqref{eq:Q12,5}, and the weight 12 MZVs in $\PhiPeriods[8]$. Indeed, we found
\begin{equation}\label{delta3}
	\delta_3 P_{9,67}
	\in Q_{12,5}+\sW_{12}\PhiPeriods[8].
\end{equation}

Lemma~\ref{lem:delta1} is false for non-$\phi^4$ graphs. For example, there exists a graph $P^\non4_{8,39}$ (defined in the attached files) with $8$ loops, $c_2$-invariant $-z_2$, and a non-MZV alternating sum period such that
\begin{equation}\label{delta1Euler}
	\delta_1 P^\non4_{8,39}
	\in \frac{1}{2}Q_{12,4} + \sW_{12} \PhiPeriods[7].
\end{equation}

\subsection{$c_2$-invariant $-z_3$}
The three known $\phi^4$ periods with $c_2$-invariant $-z_3$ are $P_{7,11}$, Equation~\eqref{eq:p711}, $P_{8,33}$, and $P_{9,136}=P_{9,149}$, of loop orders seven, eight, nine, respectively.
All of them are given by numbers in $\sqrt{3} \ImTeil \MZV[6]$.
\begin{con}\label{con:z3}
	Let $G$ be a {\plogdiv} graph with $c_2$-invariant $-z_3$. Then $P(G)\in\iu\sqrt{3}(\MZV[6]\cap\iu\RR)$ with weight $2h_G-3$.
\end{con}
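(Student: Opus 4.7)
The plan is to combine a motivic refinement of the $c_2$-invariant with the cyclotomic structure of sixth roots of unity. First, I would upgrade Definition~\ref{def:c2} to the motivic level: following \cite{BEK,Brown:FeynmanAmplitudesGalois}, attach to $G$ a motive $\mot_G$ whose de Rham realization pairs with a distinguished Betti cycle to produce $P(G)$. The underlying heuristic (implicit in Conjecture~\ref{con:periodc2}) is that $q\mapsto c_2(G)_q$ computes a Frobenius trace on a Tate twist of a canonical cohomology summand of the graph hypersurface complement. Since $z_3(q)$ is (up to sign) the value of the non-trivial Dirichlet character of conductor~$3$, the hypothesis $c_2(G)=-z_3$ should force this summand to carry a Galois action of conductor~$3$, i.e.\ to lie in the Tannakian category $\mathrm{MT}(\ZZ[\RU_6,\tfrac{1}{6}])$ of mixed Tate motives over the $6$-integers of $\QQ(\RU_6)$.

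Second, by Deligne's analysis (Theorem~\ref{thm:Deligne}) the periods of $\mathrm{MT}(\ZZ[\RU_6,\tfrac{1}{6}])$ are exactly $\MZV[6]$. Hence, once step one is in place, $P(G)\in\MZV[6]\otimes_{\QQ}\QQ(\RU_6)$. The reality of $P(G)$ together with the decomposition $\MZV[6]=(\MZV[6]\cap\RR)\oplus(\MZV[6]\cap\iu\RR)$ from Remark~\ref{rem:Deligne} localizes the period in one of the two summands (up to a factor of $\iu\sqrt{3}$). The parity analysis of Proposition~\ref{genpar}, combined with the convention of Remark~\ref{rem:MZVoverQ} in which the odd-parity transcendental is taken to be the $\QQ$-rational period $\pi/\sqrt{3}$ instead of $2\pi\iu$, then singles out the imaginary component and accounts for the normalizing factor $\iu\sqrt{3}$ in the stated conjecture.

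Third, I would rule out weight drop by showing that the top weight piece of $\mot_G$ is non-trivial. The mechanism is the converse to the Brown--Yeats theorem \cite{BrownYeats:WD}: weight drop is equivalent to $c_2(G)_q\equiv 0 \pmod q$, so the non-vanishing $c_2(G)_q=-z_3(q)\neq 0$ whenever $\gcd(q,6)=1$ should imply that $P(G)$ attains the maximal weight $2h_G-3$ permitted by the parametric representation~\eqref{eq:Feynman-period}. Establishing this converse direction in the cyclotomic setting is a concrete subgoal that should be tractable by adapting the arguments in \cite{BrownYeats:WD} from the trivial character to the character $\chi_3$ of conductor~$3$.

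The principal obstacle is the motivic control required in step one: there is presently no dictionary that reads off the Tannakian category containing $\mot_G$ from the function $c_2(G)$, and such a dictionary is essentially equivalent to a motivic enhancement of Conjectures~\ref{con:periodc2} and~\ref{con:completion}. Theorem~\ref{thm:small-graphs-principle} provides partial information in the opposite direction, constraining which letters may appear as right-most in the $f$-alphabet of a period, but it does not yield membership in $\MZV[6]$ starting from knowledge of $c_2$. Until such a bridge is available, Conjecture~\ref{con:z3} appears to be inaccessible beyond the case-by-case analytic verification carried out in Section~\ref{sec:methods} for $P_{7,11}$, $P_{8,33}$, and $P_{9,136}=P_{9,149}$.
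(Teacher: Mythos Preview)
The statement you are addressing is a \emph{conjecture}, not a theorem: the paper offers no proof and does not claim one. Its entire support is the analytic computation of the three periods $P_{7,11}$, $P_{8,33}$ and $P_{9,136}=P_{9,149}$, each of which happens to lie in $\iu\sqrt{3}(\MZV[6]\cap\iu\RR)$ at the expected weight. There is therefore no ``paper's own proof'' to compare against, and your text is not a proof either---it is a research outline, as you yourself concede in the final paragraph.

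Your outline is reasonable as a heuristic, and you correctly locate the decisive gap: there is at present no mechanism that reads the Tannakian category of $\mot_G$ off the point-counting function $c_2(G)$. Without that, step one is wishful. Two further points deserve sharpening. First, in step two the passage from ``$P(G)\in\MZV[6]\otimes\QQ(\RU_6)$ and $P(G)\in\RR$'' to ``$P(G)\in\iu\sqrt{3}(\MZV[6]\cap\iu\RR)$'' is not justified by Proposition~\ref{genpar} alone: that proposition describes how complex conjugation acts on the parity basis, but does not decide which summand a given real period occupies. You would need an independent argument that the oddness of the Dirichlet character $\chi_3$ forces odd parity (weight${}+{}$depth odd) at top weight; otherwise nothing excludes $P(G)\in\MZV[6]\cap\RR$, which contains $\MZV$. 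Second, in step three you invoke a ``converse to the Brown--Yeats theorem'', but no such converse is known: the paper only \emph{defines} weight drop as $c_2\equiv 0$ and \emph{conjectures} the link to the weight of the period. Proving that $c_2\not\equiv 0$ forbids weight drop of the period is itself an open problem, not a routine adaptation of \cite{BrownYeats:WD}.
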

Because $\MZV \subset \MZV[6] \cap \RR$, this conjecture implies that periods of graphs with $c_2$-invariant $-z_3$ are never MZVs (in contrast to graphs with $c_2$-invariant $-z_2$).

Consider the derivation $\delta_m$ with respect to the modified parity basis of $\MZV[6]$ defined in Remark~\ref{rem:MZVoverQ}.
Let $P(G)$ be the period of a {\plogdiv} $\phi^4$ graph $G$ with $c_2$-invariant $-z_3$.
If $m$ is even then, by Theorem~\ref{thm:falphabet-parity}, $\delta_mP(G)\in\MZV[6]\cap\RR$. Assuming Scenario~1, $\delta_mP(G)$ is also in $\PhiPeriods$.
We conjecture that the only $\phi^4$ periods of weight $\leq11$ in $\MZV[6]\cap\RR$ are MZVs (see Section~\ref{sec:c20}).

If $m$ is odd then $\delta_mP(G)$ is a weight drop period in $\iu\sqrt3(\MZV[6]\cap\iu\RR)$.
We expect no weight drop periods in $\iu\sqrt3(\MZV[6]\cap\iu\RR)$ before weight $14$ where $\delta_3$ can give the weight 11 period $P_{7,11}\in\iu\sqrt 3(\MZV[6]\cap\iu\RR)$. This leads to the following conjecture.
\begin{con}[Follows from Scenario~1]
	Let $G$ be a {\plogdiv} $\phi^4$ graph with $c_2$-invariant $-z_3$.
Let $\delta_m$ be the derivation with respect to the modified parity basis in Remark~\ref{rem:MZVoverQ}. Then
\begin{equation}\begin{split}
	&\delta_m P(G)\in\MZV\quad\text{if $m > 2 h_G -16$ is even,}\\
	&\delta_m P(G)=0\quad\text{if $m>2h_G-17$ is odd.}
\end{split}\end{equation}
\end{con}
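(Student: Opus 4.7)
The plan is to combine three ingredients: (i) Scenario~1 of Conjecture~\ref{con:coaction}, which closes $\PhiPeriods^{\motivic}$ under each derivation $\delta_m$; (ii) the parity structure of the modified $f$-alphabet from Remark~\ref{rem:MZVoverQ}, which through Theorem~\ref{thm:falphabet-parity} and Conjecture~\ref{con:z3} constrains $\delta_m P(G)$ to lie in one of the two real subspaces $A\defas\MZV[6]\cap\RR$ or $B\defas\iu\sqrt{3}(\MZV[6]\cap\iu\RR)$; and (iii) the conjectural enumeration from Section~\ref{sec:c20} of all mixed Tate $\phi^4$ periods of weight $\leq 11$.

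First I would apply Scenario~1 to get $\delta_m P(G)\in\PhiPeriods^{\motivic}$. By Conjecture~\ref{con:z3}, $P(G)\in B$ with pure weight $2h_G-3$. In the modified $f$-alphabet of Remark~\ref{rem:MZVoverQ}, the subspaces $A$ and $B$ are distinguished by the parity of $e+p$, where $e$ is the number of even-weight $f$-letters in a word and $p$ is its power of $\pi/\sqrt{3}$; this parity coincides with the parity of Definition~\ref{def:parity} in the original Deligne basis once the rescalings $\sqrt{3}\iu\, f^6_{2n}\mapsto f^6_{2n}$ and $2\pi\iu\mapsto 2\iu\sqrt{3}\cdot(\pi/\sqrt{3})$ are accounted for. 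Since $P(G)\in B$, every word in its modified $f$-alphabet expansion has odd $e+p$. Stripping the leftmost letter $f^6_m$ with $\delta_m$ leaves $p$ unchanged and decreases $e$ by one exactly when $m$ is even, so that
\begin{equation*}
	\delta_m P(G)\in\begin{cases}A&\text{if $m$ is even,}\\ B&\text{if $m$ is odd.}\end{cases}
\end{equation*}

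Next I would bound the weight of $\delta_m P(G)$, which equals $(2h_G-3)-m$. Because $2h_G-16$ is even and $2h_G-17$ is odd, the hypothesis $m>2h_G-16$ with $m$ even forces $m\geq 2h_G-14$, so the weight is at most~$11$; and $m>2h_G-17$ with $m$ odd forces $m\geq 2h_G-15$, so the weight is at most~$12$ and, as odd minus odd, is \emph{even}. At this point I would invoke the classification of Section~\ref{sec:c20}: conjecturally all mixed Tate $\phi^4$ periods of weight $\leq 11$ are $\QQ$-linear combinations of $\mzv{3}$, $\mzv{5}$, $\mzv{7}$, $\mzv{9}$, $\mzv{11}$ and $Q_8$, $Q_{10}$, $Q_{11,2}$ (all MZVs, hence in $A$) together with the single non-MZV generator $P_{7,11}$ of weight~$11$ lying in $B$. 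For $m$ even, $\delta_m P(G)\in A\cap\PhiPeriods^{\motivic}$ of weight $\leq 11$ is therefore an MZV. For $m$ odd, the only element of $B$ appearing in the classification at weight $\leq 11$ is $P_{7,11}$ of odd weight~$11$, which cannot occur in the even-weight element $\delta_m P(G)$; the remaining weight-$12$ case requires the additional expectation recalled at the end of Section~\ref{sec:z2} that no weight-drop $\phi^4$ period in $B$ exists before weight~$14$, together with the observation that $\delta_m P(G)$ is necessarily a combination of weight-drop periods since maximal-weight periods in $B$ have odd weight $2n-3$ by Conjecture~\ref{con:z3}. Hence $\delta_m P(G)=0$.

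The principal obstacle is not the coaction manipulation, which is essentially mechanical once Scenario~1 is granted, but the conjectural input on the low-weight structure of $\PhiPeriods$. Turning the statement into an unconditional theorem would require classifying all $\phi^4$ graph periods that escape MZVs up to weight~$12$, and in particular ruling out hitherto-unseen weight-drop generators in $B$ of weight~$12$; this is presently out of reach of the methods of Section~\ref{sec:methods}, since at the relevant loop orders we cannot even compute the periods of graphs with non-quasi-constant (e.g.\ modular) $c_2$-invariants, let alone exclude new transcendentals arising from them.
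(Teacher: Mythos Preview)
Your argument is correct and follows essentially the same route as the paper's own justification, which appears in the paragraph immediately preceding the conjecture: use Conjecture~\ref{con:z3} to place $P(G)$ in $B=\iu\sqrt{3}(\MZV[6]\cap\iu\RR)$ at pure weight $2h_G-3$; use the parity behaviour of the modified basis (Theorem~\ref{thm:falphabet-parity} and Remark~\ref{rem:MZVoverQ}) to see that $\delta_m$ lands in $A$ for even $m$ and in $B$ for odd $m$; then invoke Scenario~1 together with the conjectural inventory of low-weight mixed Tate $\phi^4$ periods from Section~\ref{sec:c20}. Your explicit parity computation via $e+p$ and your separate treatment of the even-weight $\leq 12$ case for odd $m$ make the reasoning more transparent than the paper's terse sentence ``If $m$ is odd then $\delta_m P(G)$ is a weight drop period in $\iu\sqrt{3}(\MZV[6]\cap\iu\RR)$''; the paper simply asserts this and then appeals to the expectation that no such weight-drop period exists below weight~$14$.

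One small correction: the ``additional expectation'' that no weight-drop $\phi^4$ period in $B$ occurs before weight~$14$ is stated in the $c_2=-z_3$ subsection (immediately above the conjecture), not at the end of Section~\ref{sec:z2}, which concerns $c_2=-z_2$. Your final paragraph on obstacles is apt and mirrors the paper's own caveats.
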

Because the number field $\QQ(\xi_6)$ equals the number field $\QQ(\xi_3)$ and $\MZV[6]\subsetneq\MZV[3]$, see Remark~\ref{rem:inclusions},
one might expect at some loop order to see counter-examples to Conjecture~\ref{con:z3} given by periods of $\phi^4$ graphs in $\iu\sqrt{3}((\MZV[3]\setminus\MZV[6])\cap\iu\RR)$.

Another consequence of $\MZV[6]\subset\MZV[3]$ is that one can give the known periods of graphs with $c_2$-invariant $-z_3$ in an $f$-alphabet of $\MZV[3]$.
If we use the alphabet with respect to the modified parity basis, we know by Remark~\ref{rem:inclusions} that the result is free of $f^3_1$. For $P_{7,11}$ we obtain an expression with a shorter coefficient in front of $\pi^{11}$,
\begin{equation}\label{P711N3}\begin{split}
	P_{7,11}
	&=
	-\lfrac{391190877}{43264}f^3_8f^3_3
	-\lfrac{247131}{256}f^3_6f^3_5
	-\lfrac{321489}{3328}f^3_4f^3_7
	+\lfrac{8435259}{10496}f^3_2f^3_9
	\\&\quad
	-\lfrac{229635}{64}f^3_2f^3_3f^3_3f^3_3
	+\lfrac{11494823863738427}{46501585778700}\Big(\frac{\pi}{\sqrt{3}}\Big)^{11}.
\end{split}\end{equation}
In a certain sense it is a general property that $\MZV[3]$ conversions produce smaller numbers than $\MZV[6]$ conversions. We do not think that it hints to a more fundamental connection between $\PhiPeriods$ and $\MZV[3]$. Even (much) smaller numerators and denominators are found in a basis given in \cite{Broadhurst:Aufbau}.

\subsection{$c_2$-invariant $-z_4$}
There exist one $\phi^4$ graph at eight loops ($P_{8,40}$) and three $\phi^4$ graphs at ten loops with $c_2$-invariant $-z_4$ \cite{modphi4}.
Analogously to the case of $c_2=-z_3$ our method is expected to produce maximum weight periods in $\iu(\MZV[4]\cap\iu\RR)$ which are never in $\MZV[2]$.
Therefore it is sufficient to test if a period is in principle accessible by our method without performing the actual calculations to conjecture new elements in $\PhiPeriods$.
However, our method fails for all four $\phi^4$ graphs with $c_2$-invariant $-z_4$ up to loop order ten. Interestingly, it works
(in principle) for some non-$\phi^4$ graphs of nine loops with $c_2$-invariant $-z_4$. So we can only conjecture new elements in the larger space
$\LogPeriods$ of all {\plogdiv} periods.

\subsection{Scenario~1 and the ladder $L_{10}$}\label{sec:L10}
By~\eqref{delta3}, Scenario~1 requires $Q_{12,5}\in\PhiPeriods$. Conjecture~\ref{con:weightdrop} restricts the set of possible periods to ten loops, $Q_{12,5}\in\PhiPeriods[10]$.
All $\phi^4$ weight drop periods are known up to eight loops. This rules out the possibility that $Q_{12,5}$ is in a single weight drop $\phi^4$ period at eight loops.
Because the periods of all weight drop $\phi^4$ ancestors are known up to nine loops we can use Conjecture~\ref{con:maxweight} to identify the graphs with multiple weight drop.
The periods of these graphs are known up to nine loops. Hence, $Q_{12,5}$ can only exist in a ten loop graph. The only unknown ten loop weight drop ancestor is $P_{10,1139}$.
Because all known weight drop ancestors only had single weight drop we conjecture that $Q_{12,5}$ is not in $P_{10,1139}$.
The multiple weight drop $\phi^4$ ten loop periods are known with the exception of $P_{10,47}$ and $L_{10}=P_{10,425}$.
From a partial calculation we have strong evidence that $P_{10,47}$ mixes weights 14 and 15. Conjecturally, the only possible source for $Q_{12,5}$ is therefore $L_{10}$.

The graph $L_{10}$ belongs to the family of ladder graphs $L_{2n}$ for $n\geq3$ (see Figure~\ref{fig:ladders}) which have ancestor $\anc (L_{2n}) = K_5^{n-1}$. For $n \geq 4$ these are the unique smallest non-product graphs with this ancestor.
The graph $L_8=P_{8,16}$ mixes weights $10$ and $11$ and it is the only source of $Q_{10}$ which is demanded, similarly to \eqref{delta3}, by Scenario~1.
Extrapolating from the cases $2n=6$ ($L_6=P_{6,3}$) and $2n=8$ we conjecture that for $n\geq3$ the period $L_{2n}$ mixes weights $2n+2$ to $3n-1$.

\begin{con}[Follows from Scenario~1 and Conjectures \ref{con:maxweight}, \ref{con:weightdrop}]\label{con:L10}
The period $L_{10}=P_{10,425}$ mixes weights 12 to 14. For the weight 12 part of $L_{10}$ we have
\begin{equation*}
	\sW_{12}L_{10}
	\in\QQ Q_{12,4}+\QQ^\times Q_{12,5}+\sW_{12}\PhiPeriods[8].
\end{equation*}
\end{con}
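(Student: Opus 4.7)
The plan is to combine three ingredients: first, the weight bounds for $L_{10}$ coming from Conjectures~\ref{con:maxweight} and~\ref{con:weightdrop}; second, the fact that Scenario~1 together with~\eqref{delta3} forces $Q_{12,5}$ to be a genuine $\phi^4$ period modulo $\sW_{12}\PhiPeriods[8]$; and third, the empirical classification of all weight-drop $\phi^4$ graphs of loop order at most $10$ whose periods are either known or whose ancestor structure has been enumerated. By eliminating every other candidate, $L_{10}$ will remain as the unique possible source of $Q_{12,5}$.

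First I would pin down the weight range of $L_{10}$. Since $\anc(L_{10})=K_5^{4}$ consists of four components of zero weight drop each, Definition~\ref{defanc} assigns it total weight drop $0+(4-1)=3$, and Conjecture~\ref{con:maxweight} then bounds the maximum weight of $P(L_{10})$ by $2\cdot 10-3-3=14$. Combining this with Conjecture~\ref{con:weightdrop} (which either makes the period pure of weight $2h_G-4=16$ or mixes weights in $[h_G+2,2h_G-5]=[12,15]$), the only possibility is that $P(L_{10})$ has weights in $\{12,13,14\}$. The empirical pattern from $L_6=P_{6,3}$ and $L_8=P_{8,16}$, for which $L_{2n}$ mixes weights $2n+2,\ldots,3n-1$, extrapolates at $n=5$ exactly to $\{12,13,14\}$.

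Next, to locate $Q_{12,5}$: equation~\eqref{delta3} yields $\delta_3 P_{9,67}\equiv Q_{12,5}\pmod{\sW_{12}\PhiPeriods[8]}$, and under Scenario~1 we have $\delta_3 P_{9,67}\in\PhiPeriods^{\motivic}$, so there must exist a $\phi^4$ period containing $Q_{12,5}$ with nonzero coefficient modulo $\sW_{12}\PhiPeriods[8]$. Because $Q_{12,5}$ is an alternating sum of weight $12$, any such graph $G$ must be a weight-drop graph whose period reaches weight $12$. Now I would exhaust all smaller loop orders: all $17$ seven-loop $\phi^4$ periods and the known eight-loop periods do not contain $Q_{12,5}$. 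At nine loops, only multiple-weight-drop graphs can reach weight $12$ by Conjecture~\ref{con:maxweight}, and all such ancestors and their periods are known and do not contain $Q_{12,5}$. At ten loops, single-weight-drop graphs have minimum weight $2h_G-5=15$ and therefore cannot contribute; the only undetermined ten-loop weight-drop ancestor, $P_{10,1139}$, is conjecturally single-weight-drop by the established pattern. Among ten-loop multiple-weight-drop $\phi^4$ graphs only $P_{10,47}$ and $L_{10}$ have unknown periods, and a partial calculation shows that $P_{10,47}$ mixes only weights $14$ and $15$. Hence $L_{10}$ is the only possible host of $Q_{12,5}$, which forces the coefficient of $Q_{12,5}$ in $\sW_{12}L_{10}$ to lie in $\QQ^\times$. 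The remaining terms $\QQ Q_{12,4}+\sW_{12}\PhiPeriods[8]$ then cover all other admissible weight-$12$ $\phi^4$ contributions, using the (conjectural) knowledge from Section~\ref{sec:c20} that the mixed Tate $\phi^4$ transcendentals of weight $\leq 12$ are generated by known zeta values together with $Q_8$, $Q_{10}$, $Q_{11,2}$, $Q_{12,4}$, and $Q_{12,5}$.

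The main obstacle is that the argument is necessarily one of exhaustion rather than direct computation, since the period of $L_{10}$ itself is out of reach with current technology. Each step of the elimination depends on a separate empirical or conjectural input: the completeness of the enumerations of nine- and ten-loop weight-drop ancestors, the extrapolation that $P_{10,1139}$ is only single-weight-drop, the partial computation pinning $P_{10,47}$ to weights $\{14,15\}$, and the completeness of the classification of weight-$\leq 12$ mixed Tate $\phi^4$ transcendentals. For this reason the statement can at present only be proved in the conditional form asserted, contingent on Scenario~1 and Conjectures~\ref{con:maxweight} and~\ref{con:weightdrop}; an unconditional result would require either a direct evaluation of $P(L_{10})$ or an a priori ruling out of hidden weight-$12$ $\phi^4$ periods of higher loop order, neither of which is currently available.
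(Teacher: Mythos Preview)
Your argument is essentially the same exhaustion argument the paper gives in Section~\ref{sec:L10}: determine the weight range of $L_{10}$ via $\anc(L_{10})=K_5^4$ and Conjectures~\ref{con:maxweight}, \ref{con:weightdrop}; use \eqref{delta3} with Scenario~1 to force $Q_{12,5}\in\PhiPeriods$; and then eliminate all other candidate graphs up to ten loops by the combination of explicit computation, ancestor classification, the conjectural single weight drop of $P_{10,1139}$, and the partial result on $P_{10,47}$. One small slip: when you exclude ten-loop single-weight-drop graphs you say they ``have minimum weight $2h_G-5=15$'', but the correct statement (via Conjecture~\ref{con:weightdrop}) is that they are of \emph{pure} weight $2h_G-4=16$; either way they cannot contribute in weight~$12$, so your conclusion stands.
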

Unfortunately, none of our currently available tools allows us to calculate $L_{10}$.

\subsection{Beyond multiple polylogarithms.}
\label{sec:non-MPL}

With our methods we can only compute periods which can be expressed as linear combinations of multiple polylogarithms with algebraic coefficients and arguments. These are periods of mixed Tate motives and enjoy extra structure as compared to periods in general \cite{Brown:NotesMotivicPeriods}, including:
\begin{itemize}
	\item They have a well-defined grading by integer weights in our MZV-inspired counting (which equal half the weights from Hodge theory).
	\item The coaction acts unipotently, saying that a Galois conjugate of $x$ under the pro-unipotent part of the Galois group equals $x$ plus lower weight periods. In other words, the right-hand factors in $\Delta' x$ have weights strictly less than the weight of $x$.
\end{itemize}
For general periods, the weight is merely a filtration and we should expect to see half-integer weights. Furthermore, a period can have several Galois conjugates of the same weight.

Not a single non mixed Tate $\phi^4$ period has so far been computed, but they are known to occur starting from $8$ loops in graphs with modular $c_2$-invariants \cite{K3phi4,BrownDoryn:Framings}. Concretely, there are four $8$ loop $\phi^4$ periods with modular $c_2$-invariants ($P_{8,37}$, $P_{8,38}$, $P_{8,39}$ and $P_{8,41}$ \cite{modphi4}), and for $P_{8,37}$ it is known that the framing given by the period \eqref{eq:Feynman-period} is not of mixed Tate type \cite{BrownDoryn:Framings}.

Furthermore, the non mixed Tate contribution of $P_{8,37}$ has weight $12$. In fact, a non mixed Tate contribution to a period $P(G)$ necessarily has weight below $2 h_G-3$. It is not excluded that the other $8$-loop periods with modular $c_2$-invariant contribute non mixed Tate periods to $\PhiPeriods$ in even smaller weights. 
This is the reason why we had to restrict our statements about $\sW_{\leq 11} \PhiPeriods$ in Section~\ref{sec:c20} and Table~\ref{tab:dimensions} to the mixed Tate subspace.

In view of the possibility to find several weight $12$ Galois conjugates of $P_{8,37}$, it is unclear if our Conjecture~\ref{con:coaction} can persist beyond the mixed Tate frontier, but we have no means to probe this realm for the time being.

\section{Data}
\label{sec:data}

Two text files are attached to this article: \texttt{Periods} and \texttt{PeriodsNonPhi4}. These files contain {\Maple} readable lists of $\phi^4$ periods up to eleven loops and non-$\phi^4$ periods up to eight
loops, respectively. The data-structure in \texttt{Periods} is:
\begin{equation*}\begin{split}
	\texttt{Period} & \big[\text{loop order}, \text{number}\big]
	\defas \big[
	\text{list of graph edges}, \text{period in the $f$-alphabet},
	\\&
	\text{period as multiple polylogarithms}, \text{numerical value to $100$ digits},
	\\&
	\text{$c_2$-invariant}, \text{ancestor}, \text{size of the automorphism group} 
\big]
\end{split}\end{equation*}
Note that we list completed graphs which we introduced in Definition~\ref{def:completion}. In particular, an $\ell$-loop {\plogdiv} $\phi^4$ graph $G$ has a $4$-regular completion $\completed{G}$ with $\ell+3$ loops. For example, the entry $\texttt{Period}[3,1]$ for the only $3$-loop period $P_{3}$, known as the wheel with $3$ spokes, starts with the edge-list 
\begin{equation*}
	[\{1, 2\}, \{1, 3\}, \{1, 4\}, \{1, 5\}, \{2, 3\}, \{2, 4\}, \{2, 5\}, \{3, 4\}, \{3, 5\}, \{4, 5\}]
\end{equation*}
of the complete graph $K_5$. The only non-MZVs among our results are polylogarithms at 2nd and 6th roots of unity. The corresponding $f$-alphabet expressions refer to Deligne's basis ($N=2$) and the parity basis $\ParityBasis[6]$ of Corollary~\ref{cor:parity-even-Deligne}.\footnote{This means that powers of $\iu\sqrt{3}$ appear in the $f$-alphabet for periods with sixth roots of unity. For example, the form stored in $\texttt{Period}[7,11]$ differs from the representation given in \eqref{eq:p711}.}
In order to be absolutely clear and avoid any confusion, we also express all known periods explicitly in terms of multiple polylogarithms \eqref{eq:Li}. For non-MZVs, these are represented as 
\begin{equation*}
	\texttt{zeta}[[\xi,n_r],n_{r-1},\ldots,n_2,n_1]
	\defas
	\Li_{n_r,\ldots,n_1}(\xi)
\end{equation*}
where $\xi \in \set{-1,e^{\pm\iu\pi/3}}$ is a corresponding root of unity.
If the period of some graph is unknown, the corresponding three entries in the table are marked with \texttt{FAIL}.

In \texttt{PeriodsNonPhi4} graphs which are not in $\phi^4$ are stored with the first five entries of the above list.

In the following table we demonstrate that the known $\phi^4$ periods up to eight loops obey the coaction conjecture. For this we express the infinitesimal coaction in terms of $\phi^4$ periods.

\renewcommand{\arraystretch}{1.12}%
\begin{longtable}{>{$}l<{$}|>{$}l<{$}}
	\text{period}
	&
	\sum_mf^N_m\delta_m(P_\bullet) \\\hline
\endhead
P_1
	& 0 \\\hline
P_3
	& 6f_3P_1 \\\hline
P_4
	& 20f_5P_1 \\\hline
P_5	
	& \lfrac{441}{8}f_7P_1 \\\hline
P_{6,1}
	& 168f_9P_1 \\
P_{6,2}
	& \lfrac{2}{3}f_3P_3^2+\lfrac{1063}{9}f_9P_1 \\
P_{6,3} 
	& \lfrac{63}{5}f_3P_4-30f_5P_3 \\
P_{6,4}
	& -\lfrac{648}{5}f_3P_4+720f_5P_3 \\\hline
P_{7,1}
	& \lfrac{33759}{64}f_{11}P_1 \\
P_{7,2}
	& \lfrac{7}{12}f_3P_3P_4-\lfrac{5}{18}f_5P_3^2-\lfrac{195379}{192}f_{11}P_1 \\
P_{7,3}
	& \lfrac{1}{3}f_3P_3P_4-\lfrac{31}{9}f_5P_3^2-\lfrac{960211}{240}f_{11}P_1 \\
P_{7,4},P_{7,7} 
	& \lfrac{160}{21}f_3P_5-20f_5P_4+70f_7P_3 \\
P_{7,5},P_{7,10}
	& -\lfrac{24}{7}f_3P_5+45f_5P_4-\lfrac{63}{2}f_7P_3 \\
P_{7,6}
	& \lfrac{7}{12}f_3P_3P_4+\lfrac{145}{18}f_5P_3^2+\lfrac{502247}{64}f_{11}P_1 \\
P_{7,8}
	& f_3(7P_{6,3}-\lfrac{161}{30}P_3P_4)+\lfrac{527}{9}f_5P_3^2+\lfrac{2756439}{20}f_{11}P_1 \\
P_{7,9}
	& f_3(\lfrac{7}{2}P_{6,3}-\lfrac{133}{80}P_3P_4)-\lfrac{217}{24}f_5P_3^2+\lfrac{4136619}{160}f_{11}P_1 \\
P_{7,11}
	& f^6_2(-\lfrac{2755}{864}P_{6,1}+\lfrac{35}{27}P_3^3)+\lfrac{14}{9}f^6_4P_5+\lfrac{1017}{22}f^6_6P_4-\lfrac{36918}{43}f^6_8P_3 \\\hline
P_{8,1}
	& 1716f_{13}P_1 \\
P_{8,2}
	& f_3(\lfrac{145}{147}P_3P_5-\lfrac{27}{80}P_4^2)+\lfrac{29}{40}f_5P_3P_4+\lfrac{47}{16}f_7P_3^2+\lfrac{94871691}{22400}f_{13}P_1 \\
P_{8,3}
	& f_3(2P_4^2-\lfrac{320}{189}P_3P_5)-13466f_{13}P_1 \\
P_{8,4}
	& f_3(\lfrac{27}{80}P_4^2+\lfrac{1}{147}P_3P_5)+\lfrac{11}{40}f_5P_3P_4-\lfrac{97}{16}f_7P_3^2-\lfrac{76207221}{22400}f_{13}P_1 \\
P_{8,5}
	& \lfrac{789}{112}f_3P_{6,1}-\lfrac{2930}{147}f_5P_5+\lfrac{3549}{40}f_7P_4-180f_9P_3 \\
P_{8,6},P_{8,9}
	& \lfrac{488}{441}f_3P_3P_5-\lfrac{29}{2}f_7P_3^2-\lfrac{1717423}{336}f_{13}P_1 \\
P_{8,7},P_{8,8}
	& -\lfrac{81}{10}f_5P_3P_4+\lfrac{75}{4}f_7P_3^2-\lfrac{9819147}{2800}f_{13}P_1 \\
P_{8,10},P_{8,22}
	& \lfrac{93}{14}f_3P_{6,1}-\lfrac{1000}{49}f_5P_5+\lfrac{6993}{80}f_7P_4-\lfrac{765}{4}f_9P_3 \\
P_{8,11},P_{8,15}
	& f_3(\lfrac{2311}{504}P_{6,1}-\lfrac{2}{9}P_3^3)+\lfrac{8380}{441}f_5P_5-\lfrac{553}{8}f_7P_4+\lfrac{1171}{18}f_9P_3 \\
P_{8,12}
	& -6f_3P_{6,1}+\lfrac{2440}{49}f_5P_5+\lfrac{63}{2}f_7P_4-189f_9P_3 \\
P_{8,13},P_{8,21}
	& f_3(\lfrac{107}{63}P_3P_5-\lfrac{93}{80}P_4^2)+\lfrac{39}{40}f_5P_3P_4+\lfrac{441}{16}f_7P_3^2+\lfrac{166607569}{9600}f_{13}P_1 \\
P_{8,14}
	& f_3(\lfrac{21}{80}P_4^2+\lfrac{17}{147}P_3P_5)+\lfrac{141}{40}f_5P_3P_4-\lfrac{81}{16}f_7P_3^2+\lfrac{74218657}{67200}f_{13}P_1 \\
P_{8,16} 
	& f_3(\lfrac{3200}{49}P_5-40P_{6,3}+48P_3P_4)+f_5(-864P_4-256P_3^2) \\*
	& +2856f_7P_3-\lfrac{3670083}{5}f_{11}P_1 \\
P_{8,17},P_{8,23}
	& f_3(\lfrac{169}{5}P_{7,2}-\lfrac{1}{16}P_{8,16}-\lfrac{200399}{9207}P_{7,1}-\lfrac{21}{20}P_4^2-\lfrac{5}{12}P_3P_{6,3}+\lfrac{758}{441}P_3P_5 \\*
	& -\lfrac{43}{45}P_3^2P_4)+f_5(9P_{6,3}+\lfrac{123}{20}P_3P_4)-\lfrac{761}{24}f_7P_3^2+\lfrac{317604329}{4800}f_{13}P_1 \\
P_{8,18},P_{8,25}
	& f_3(\lfrac{727}{168}P_{6,1}+\lfrac{4}{3}P_3^3)-\lfrac{20}{3}f_5P_5-\lfrac{147}{8}f_7P_4+\lfrac{727}{6}f_9P_3 \\
P_{8,19},P_{8,27} 
	& f_3(\lfrac{235}{126}P_{6,1}-\lfrac{14}{9}P_3^3)+\lfrac{12160}{441}f_5P_5-\lfrac{91}{2}f_7P_4-\lfrac{100}{9}f_9P_3 \\
P_{8,20}
	& f_3 \big(\lfrac{1}{32}P_{8,16}+\lfrac{200399}{18414}P_{7,1}-\lfrac{169}{10}P_{7,2}+\lfrac{81}{32}P_4^2+\lfrac{5}{24}P_3P_{6,3}-\lfrac{653}{294}P_3P_5 \\*
	& +\lfrac{43}{90}P_3^2P_4 \big)+f_5(P_{6,3}+\lfrac{19}{8}P_3P_4)+\lfrac{1411}{32}f_7P_3^2+\lfrac{311697839}{44800}f_{13}P_1 \\
 P_{8,24} 
 	& f_3\big(\lfrac{35}{16}P_{8,16}+\lfrac{7013965}{9207}P_{7,1}-1183P_{7,2}+\lfrac{189}{4}P_4^2+\lfrac{175}{12}P_3P_{6,3}
	\\* &-\lfrac{152}{3}P_3P_5+\lfrac{301}{9}P_3^2P_4\big)+f_5\left(\lfrac{93}{20}P_3P_4-\lfrac{155}{2}P_{6,3}\right)+\lfrac{127}{4}f_7P_3^2
	\\* &-\lfrac{1051211241}{1400}f_{13}P_1 \\
 P_{8,26},P_{8,28} 
 	& f_3\big(\lfrac{7}{64}P_{8,16}+\lfrac{1402793}{36828}P_{7,1}-\lfrac{1183}{20}P_{7,2}+\lfrac{189}{40}P_4^2+\lfrac{35}{48}P_3P_{6,3}
	\\&-\lfrac{82}{21}P_3P_5+\lfrac{301}{180}P_3^2P_4\big)+f_5\left(\lfrac{31}{4}P_{6,3}-\lfrac{1147}{80}P_3P_4\right)+\lfrac{635}{48}f_7P_3^2
	\\&+\lfrac{303444219}{22400}f_{13}P_1 \\
 P_{8,29} 
 	& f_3\left(\lfrac{1447}{756}P_3P_5-\lfrac{91}{64}P_4^2\right)-\lfrac{899}{160}f_5P_3P_4-\lfrac{381}{64}f_7P_3^2+\lfrac{107241779}{89600}f_{13}P_1 \\
 P_{8,31}
 	& f_3\big(\lfrac{1183}{10}P_{7,2}-\lfrac{7}{32}P_{8,16}-\lfrac{1402793}{18414}P_{7,1}-\lfrac{791}{80}P_4^2-\lfrac{35}{24}P_3P_{6,3}
	\\&+\lfrac{2074}{189}P_3P_5-\lfrac{301}{90}P_3^2P_4\big)+f_5(31P_{6,3}+\lfrac{62}{5}P_3P_4)+\lfrac{127}{2}f_7P_3^2
	\\&+\lfrac{1748673539}{5600}f_{13}P_1 \\
 P_{8,32},P_{8,34} 
 	& -\lfrac{95}{7}f_3P_{6,1}-\lfrac{21600}{49}f_5P_5+1701f_7P_4+1140f_9P_3 \\
 P_{8,33} 
 	& f^6_2(-\lfrac{75052}{9207}P_{7,1}+\lfrac{68}{5}P_{7,2}-\lfrac{73}{90}P_3^2P_4)+f^6_4(-\lfrac{191}{21}P_{6,1}+2P_3^3)
	\\&+\lfrac{5184}{539}f^6_6P_5+\lfrac{156816}{1075}f^6_8P_4+\lfrac{83063999609784}{5132664845}f^6_{10}P_3\\\hline
\end{longtable}%
\captionof{table}{Known $\phi^4$ periods of graphs with at most eight loops span a comodule with respect to the Galois coaction. We chose algebra generators $P_1,P_3,P_4,P_5,P_{6,1},P_{6,3},
P_{7,1},P_{7,2},P_{8,16}$, whose products span the $\QQ$-vector spaces of $\phi^4$ MZV periods up to weight $11$. The letters $f^6_m$ refer to the modified parity basis in Remark~\ref{rem:MZVoverQ}.}%
\label{tab:periods}%

\bibliographystyle{JHEPsortdoi}
\bibliography{refs}

\providecommand{\href}[2]{#2}\providecommand{\eprintlink}[2]{\href{#1}{#2}}\begingroup\begin{thebibliography}{10}

\bibitem{Abreu:Amplitudes2015}
S.~Abreu,
  \href{\detokenize{http://amp15.itp.phys.ethz.ch/talks/Abreu.pdf}}{``Diagrammatic
  representation of the coproduct of one-loop {F}eynman diagrams.''} talk given
  at the \href{http://amp15.itp.phys.ethz.ch/}{\emph{Amplitudes conference}},
  6--10 July 2015, Z\"{u}rich, July, 2015.

\bibitem{AbreuBrittoDuhrGardi:cuts2cop}
S.~{Abreu}, R.~{Britto}, C.~{Duhr} and E.~{Gardi},
  \href{\detokenize{http://dx.doi.org/10.1007/JHEP10(2014)125}}{\textit{From
  multiple unitarity cuts to the coproduct of {F}eynman integrals}},
  \href{\detokenize{http://dx.doi.org/10.1007/JHEP10(2014)125}}{\emph{Journal
  of High Energy Physics} \textbf{2014} (Oct., 2014) }p.~125,
  \eprintlink{http://arxiv.org/abs/1401.3546}{arXiv:1401.3546 [hep-th]}.

\bibitem{AbreuBrittoGroenqvist:massive}
S.~{Abreu}, R.~{Britto} and H.~{Gr{\"o}nqvist},
  \href{\detokenize{http://dx.doi.org/10.1007/JHEP07(2015)111}}{\textit{Cuts
  and coproducts of massive triangle diagrams}},
  \href{\detokenize{http://dx.doi.org/10.1007/JHEP07(2015)111}}{\emph{Journal
  of High Energy Physics} \textbf{2015} (July, 2015) }p.~111,
  \eprintlink{http://arxiv.org/abs/1504.00206}{arXiv:1504.00206 [hep-th]}.

\bibitem{Andre:GaloisMotivesTranscendental}
Y.~Andr{\'e},
  \href{\detokenize{http://dx.doi.org/10.4171/073-1/4}}{\textit{Galois theory,
  motives and transcendental numbers}},  in \emph{Renormalization and {G}alois
  theories}, vol.~15 of \emph{IRMA Lect. Math. Theor. Phys.}, pp.~165--177.
\newblock Eur. Math. Soc., Z\"urich, 2009.
\newblock \eprintlink{http://arxiv.org/abs/0805.2569}{arXiv:0805.2569
  [math.NT]}.

\bibitem{Apery}
R.~Ap{\'{e}}ry, {\textit{Irrationalit{\'{e}} de {$\zeta(2)$} et {$\zeta(3)$}}},
   {\emph{Journ{\'{e}}es arithm{\'{e}}tiques (Luminy, 1978)} \textbf{61} (1979)
  }pp.~11--13.

\bibitem{BelkaleBrosnan}
P.~Belkale and P.~Brosnan,
  \href{\detokenize{http://dx.doi.org/10.1215/S0012-7094-03-11615-4}}{\textit{Matroids,
  motives, and a conjecture of {Kontsevich}}},
  \href{\detokenize{http://dx.doi.org/10.1215/S0012-7094-03-11615-4}}{\emph{Duke
  Math. J.} \textbf{116} (2003), no.~1 }pp.~147--188,
  \eprintlink{http://arxiv.org/abs/math/0012198}{arXiv:math/0012198}.

\bibitem{BinosiTheussl:JaxoDraw}
D.~Binosi and L.~Theu{\ss}l,
  \href{\detokenize{http://dx.doi.org/10.1016/j.cpc.2004.05.001}}{\textit{{JaxoDraw}:
  {A} graphical user interface for drawing {Feynman} diagrams}},
  \href{\detokenize{http://dx.doi.org/10.1016/j.cpc.2004.05.001}}{\emph{Comput.
  Phys. Commun.} \textbf{161} (Aug., 2004) }pp.~76--86,
  \eprintlink{http://arxiv.org/abs/hep-ph/0309015}{arXiv:hep-ph/0309015}.

\bibitem{BEK}
S.~Bloch, H.~Esnault and D.~Kreimer,
  \href{\detokenize{http://dx.doi.org/10.1007/s00220-006-0040-2}}{\textit{On
  motives associated to graph polynomials}},
  \href{\detokenize{http://dx.doi.org/10.1007/s00220-006-0040-2}}{\emph{Commun.
  Math. Phys.} \textbf{267} (2006), no.~1 }pp.~181--225,
  \eprintlink{http://arxiv.org/abs/math/0510011}{arXiv:math/0510011}.

\bibitem{Datamine}
J.~Bl{\"u}mlein, D.~J. Broadhurst and J.~A.~M. Vermaseren,
  \href{\detokenize{http://dx.doi.org/10.1016/j.cpc.2009.11.007}}{\textit{The
  {Multiple} {Zeta} {Value} data mine}},
  \href{\detokenize{http://dx.doi.org/10.1016/j.cpc.2009.11.007}}{\emph{Comput.
  Phys. Commun.} \textbf{181} (Mar., 2010) }pp.~582--625,
  \eprintlink{http://arxiv.org/abs/0907.2557}{arXiv:0907.2557 [math-ph]}.

\bibitem{BognerWeinzierl:GraphPolynomials}
C.~Bogner and S.~Weinzierl,
  \href{\detokenize{http://dx.doi.org/10.1142/S0217751X10049438}}{\textit{Feynman
  graph polynomials}},
  \href{\detokenize{http://dx.doi.org/10.1142/S0217751X10049438}}{\emph{International
  Journal of Modern Physics A} \textbf{25} (2010) }pp.~2585--2618,
  \eprintlink{http://arxiv.org/abs/1002.3458}{arXiv:1002.3458 [hep-ph]}.

\bibitem{Broadhurst:IrreducibleEulerSums}
D.~J. {Broadhurst},
  \href{\detokenize{http://arxiv.org/abs/hep-th/9604128}}{``On the enumeration
  of irreducible {$k$}-fold {Euler} sums and their roles in knot theory and
  field theory.''} preprint, Apr., 1996,
  \eprintlink{http://arxiv.org/abs/hep-th/9604128}{arXiv:hep-th/9604128}.

\bibitem{Broadhurst:Bristol2011}
D.~J. Broadhurst,
  \href{\detokenize{http://www.maths.bris.ac.uk/events/meetings/uploads/8577HeilbronnBroadhurstslides.pdf}}{``Multiple
  zeta values and other periods in quantum field theory.''} Talk given on 4 May
  at the Workshop on Multiple Zeta Values, Modular Forms \& Elliptic Motives
  (Heilbronn institute for mathematical research, Bristol), May, 2011.

\bibitem{Broadhurst:Radcor2013}
D.~J. Broadhurst,
  \href{\detokenize{https://conference.ippp.dur.ac.uk/getFile.py/access?contribId=36&sessionId=12&resId=0&materialId=slides&confId=341}}{``The
  number theory of radiative corrections.''} Talk given at 26 September at
  Radcor (Lumley Castle, UK), Sept., 2013.

\bibitem{Broadhurst:Aufbau}
D.~J. Broadhurst, \href{\detokenize{http://arxiv.org/abs/1409.7204}}{``Multiple
  {Deligne} values: a data mine with empirically tamed denominators.''}
  preprint, Sept., 2014,
  \eprintlink{http://arxiv.org/abs/1409.7204}{arXiv:1409.7204 [hep-th]}.

\bibitem{BK:KnotsAndNumbers}
D.~J. Broadhurst and D.~Kreimer,
  \href{\detokenize{http://dx.doi.org/10.1142/S012918319500037X}}{\textit{Knots
  and numbers in {$\phi^4$} theory to 7 loops and beyond}},
  \href{\detokenize{http://dx.doi.org/10.1142/S012918319500037X}}{\emph{Int. J.
  Mod. Phys. C} \textbf{6} (Aug., 1995) }pp.~519--524,
  \eprintlink{http://arxiv.org/abs/hep-ph/9504352}{arXiv:hep-ph/9504352}.

\bibitem{BK:MZVPositiveKnots}
D.~J. Broadhurst and D.~Kreimer,
  \href{\detokenize{http://dx.doi.org/10.1016/S0370-2693(96)01623-1}}{\textit{Association
  of multiple zeta values with positive knots via {Feynman} diagrams up to 9
  loops}},
  \href{\detokenize{http://dx.doi.org/10.1016/S0370-2693(96)01623-1}}{\emph{Phys.
  Lett. B} \textbf{393} (Feb., 1997) }pp.~403--412,
  \eprintlink{http://arxiv.org/abs/hep-th/9609128}{arXiv:hep-th/9609128}.

\bibitem{Brown:TwoPoint}
F.~C.~S. Brown,
  \href{\detokenize{http://dx.doi.org/10.1007/s00220-009-0740-5}}{\textit{The
  massless higher-loop two-point function}},
  \href{\detokenize{http://dx.doi.org/10.1007/s00220-009-0740-5}}{\emph{Commun.
  Math. Phys.} \textbf{287} (May, 2009) }pp.~925--958,
  \eprintlink{http://arxiv.org/abs/0804.1660}{arXiv:0804.1660 [math.AG]}.

\bibitem{Brown:PeriodsFeynmanIntegrals}
F.~C.~S. Brown, \href{\detokenize{http://arxiv.org/abs/0910.0114}}{``On the
  periods of some {Feynman} integrals.''} preprint, Oct., 2009,
  \eprintlink{http://arxiv.org/abs/0910.0114}{arXiv:0910.0114 [math.AG]}.

\bibitem{Brown:MMZ}
F.~C.~S. Brown,
  \href{\detokenize{http://dx.doi.org/10.4007/annals.2012.175.2.10}}{\textit{Mixed
  {Tate} motives over {$\mathbb{Z}$}.}},
  \href{\detokenize{http://dx.doi.org/10.4007/annals.2012.175.2.10}}{\emph{Ann.
  Math. (2)} \textbf{175} (2012), no.~2 }pp.~949--976,
  \eprintlink{http://arxiv.org/abs/1102.1312}{arXiv:1102.1312 [math.AG]}.

\bibitem{Brown:DecompositionMotivicMZV}
F.~C.~S. Brown, \href{\detokenize{http://arxiv.org/abs/1102.1310}}{\textit{On
  the decomposition of motivic multiple zeta values}},  in
  \emph{Galois-{T}eichm{\"u}ller theory and arithmetic geometry}, vol.~68 of
  \emph{Adv. Studies in Pure Math.}, (Tokyo), pp.~31--58, Math. Soc. Japan,
  2012.
\newblock \eprintlink{http://arxiv.org/abs/1102.1310}{arXiv:1102.1310
  [math.NT]}.

\bibitem{Brown:DepthGraded}
F.~C.~S. {Brown},
  \href{\detokenize{http://arxiv.org/abs/1301.3053}}{``Depth-graded motivic
  multiple zeta values.''} preprint, Jan., 2013,
  \eprintlink{http://arxiv.org/abs/1301.3053}{arXiv:1301.3053 [math.NT]}.

\bibitem{Brown:SingleValued}
F.~C.~S. Brown,
  \href{\detokenize{http://dx.doi.org/10.1017/fms.2014.18}}{\textit{Single-valued
  periods and multiple zeta values}},
  \href{\detokenize{http://dx.doi.org/10.1017/fms.2014.18}}{\emph{Forum of
  Mathematics, Sigma} \textbf{2} (2014), no.~e25 }p.~37,
  \eprintlink{http://arxiv.org/abs/1309.5309}{arXiv:1309.5309 [math.NT]}.

\bibitem{Brown:FeynmanAmplitudesGalois}
F.~C.~S. {Brown},
  \href{\detokenize{http://arxiv.org/abs/1512.06409}}{\textit{Feynman
  amplitudes and cosmic {Galois} group}},
  \href{\detokenize{http://arxiv.org/abs/1512.06409}}{\emph{CNTP} }(Dec., 2015)
  \eprintlink{http://arxiv.org/abs/1512.06409}{arXiv:1512.06409 [math-ph]}.
  based on lectures (links to recordings:
  \href{https://www.youtube.com/watch?v=PlJIECqRZRA}{1},
  \href{https://www.youtube.com/watch?v=BLloW3zzTrc}{2},
  \href{https://www.youtube.com/watch?v=UyjOrKwk3GY}{3} and
  \href{https://www.youtube.com/watch?v=Tco0PP4wvxg}{4}), given at the
  \href{http://www.ihes.fr/}{IH\'{E}S} in May 2015.

\bibitem{Brown:NotesMotivicPeriods}
F.~C.~S. {Brown},
  \href{\detokenize{http://arxiv.org/abs/1512.06410}}{\textit{Notes on motivic
  periods}},  \href{\detokenize{http://arxiv.org/abs/1512.06410}}{\emph{CNTP}
  }(Dec., 2015) \eprintlink{http://arxiv.org/abs/1512.06410}{arXiv:1512.06410
  [math.NT]}. based on lectures (links to recordings:
  \href{https://www.youtube.com/watch?v=PlJIECqRZRA}{1},
  \href{https://www.youtube.com/watch?v=BLloW3zzTrc}{2},
  \href{https://www.youtube.com/watch?v=UyjOrKwk3GY}{3} and
  \href{https://www.youtube.com/watch?v=Tco0PP4wvxg}{4}), given at the
  \href{http://www.ihes.fr/}{IH\'{E}S} in May 2015.

\bibitem{BrownDoryn:Framings}
F.~C.~S. Brown and D.~Doryn,
  \href{\detokenize{http://arxiv.org/abs/1301.3056}}{``Framings for graph
  hypersurfaces.''} preprint, Jan., 2013,
  \eprintlink{http://arxiv.org/abs/1301.3056}{arXiv:1301.3056 [math.AG]}.

\bibitem{K3phi4}
F.~C.~S. Brown and O.~Schnetz,
  \href{\detokenize{http://dx.doi.org/10.1215/00127094-1644201}}{\textit{A {K3}
  in {$\phi^{4}$}}},
  \href{\detokenize{http://dx.doi.org/10.1215/00127094-1644201}}{\emph{Duke
  Math. J.} \textbf{161} (July, 2012) }pp.~1817--1862,
  \eprintlink{http://arxiv.org/abs/1006.4064}{arXiv:1006.4064 [math.AG]}.

\bibitem{modphi4}
F.~C.~S. Brown and O.~Schnetz,
  \href{\detokenize{http://dx.doi.org/10.4310/CNTP.2013.v7.n2.a3}}{\textit{Modular
  forms in quantum field theory}},
  \href{\detokenize{http://dx.doi.org/10.4310/CNTP.2013.v7.n2.a3}}{\emph{Commun.
  Number Theory Phys.} \textbf{7} (2013), no.~2 }pp.~293--325,
  \eprintlink{http://arxiv.org/abs/1304.5342}{arXiv:1304.5342 [math.AG]}.

\bibitem{BrownSchnetz:ZigZag}
F.~C.~S. Brown and O.~Schnetz,
  \href{\detokenize{http://dx.doi.org/10.1016/j.jnt.2014.09.007}}{\textit{Single-valued
  multiple polylogarithms and a proof of the zig-zag conjecture}},
  \href{\detokenize{http://dx.doi.org/10.1016/j.jnt.2014.09.007}}{\emph{Journal
  of Number Theory} \textbf{148} (Mar., 2015) }pp.~478--506,
  \eprintlink{http://arxiv.org/abs/1208.1890}{arXiv:1208.1890 [math.NT]}.

\bibitem{BrownYeats:WD}
F.~C.~S. Brown and K.~A. Yeats,
  \href{\detokenize{http://dx.doi.org/10.1007/s00220-010-1145-1}}{\textit{Spanning
  forest polynomials and the transcendental weight of {Feynman} graphs}},
  \href{\detokenize{http://dx.doi.org/10.1007/s00220-010-1145-1}}{\emph{Commun.
  Math. Phys.} \textbf{301} (Jan., 2011) }pp.~357--382,
  \eprintlink{http://arxiv.org/abs/0910.5429}{arXiv:0910.5429 [math-ph]}.

\bibitem{ChavezDuhr}
F.~Chavez and C.~Duhr,
  \href{\detokenize{http://dx.doi.org/10.1007/JHEP11(2012)114}}{\textit{Three-mass
  triangle integrals and single-valued polylogarithms}},
  \href{\detokenize{http://dx.doi.org/10.1007/JHEP11(2012)114}}{\emph{JHEP}
  \textbf{11} (Nov., 2012) }p.~114,
  \eprintlink{http://arxiv.org/abs/1209.2722}{arXiv:1209.2722 [hep-ph]}.

\bibitem{Chen}
K.~Chen,
  \href{\detokenize{http://dx.doi.org/10.1090/S0002-9947-1971-0275312-1}}{\textit{Algebras
  of iterated path integrals and fundamental groups}},
  \href{\detokenize{http://dx.doi.org/10.1090/S0002-9947-1971-0275312-1}}{\emph{Trans.
  Amer. Math. Soc.} \textbf{156} (May, 1971) }pp.~359--379.

\bibitem{Deligne:GroupeFondamental}
P.~Deligne,
  \href{\detokenize{http://dx.doi.org/10.1007/s10240-010-0027-6}}{\textit{Le
  groupe fondamental unipotent motivique de {$\mathbb{G}_{m}-\mu_{N}$}, pour
  {$N=2$}, $3$, $4$, $6$ ou $8$}},
  \href{\detokenize{http://dx.doi.org/10.1007/s10240-010-0027-6}}{\emph{Publications
  Math{\'e}matiques de l'IH{\'E}S} \textbf{112} (2010), no.~1 }pp.~101--141.

\bibitem{Doryn:ThesisCNTP}
D.~Doryn,
  \href{\detokenize{http://dx.doi.org/10.4310/CNTP.2010.v4.n2.a3}}{\textit{Cohomology
  of graph hypersurfaces associated to certain {Feynman} graphs}},
  \href{\detokenize{http://dx.doi.org/10.4310/CNTP.2010.v4.n2.a3}}{\emph{Commun.
  Num. Theor. Phys.} \textbf{4} (2010), no.~2 }pp.~365--415,
  \eprintlink{http://arxiv.org/abs/0811.0402}{arXiv:0811.0402 [math.AG]}.

\bibitem{Doryn:Pointcount}
D.~Doryn,
  \href{\detokenize{http://dx.doi.org/10.1007/s11005-011-0501-1}}{\textit{On
  one example and one counterexample in counting rational points on graph
  hypersurfaces}},
  \href{\detokenize{http://dx.doi.org/10.1007/s11005-011-0501-1}}{\emph{Lett.
  Math. Phys.} \textbf{97} (2011), no.~3 }pp.~303--315,
  \eprintlink{http://arxiv.org/abs/1006.4064}{arXiv:1006.4064 [math.AG]}.

\bibitem{FergusonBaileyArno:PSLQ}
H.~R.~P. Ferguson, D.~H. Bailey and S.~Arno,
  \href{\detokenize{http://dx.doi.org/10.1090/S0025-5718-99-00995-3}}{\textit{Analysis
  of {PSLQ}, an integer relation finding algorithm.}},
  \href{\detokenize{http://dx.doi.org/10.1090/S0025-5718-99-00995-3}}{\emph{Math.
  Comp.} \textbf{68} (1999), no.~225 }pp.~351--369.

\bibitem{Glanois:PhD}
C.~Glanois, \href{\detokenize{http://arxiv.org/abs/1603.05155}}{\emph{Periods
  of the motivic fundamental groupoid of
  {$\PP^1\setminus\{0,\mu_N,\infty\}$}}}.
\newblock PhD thesis, Universit\'{e} Pierre et Marie Curie, Paris, 2015.
\newblock \eprintlink{http://arxiv.org/abs/1603.05155}{arXiv:1603.05155
  [math.NT]}.

\bibitem{Glanois:Descents}
C.~{Glanois},
  \href{\detokenize{http://dx.doi.org/10.1016/j.jnt.2015.08.003}}{\textit{Motivic
  unipotent fundamental groupoid of {$\mathbb{G}_m\setminus \mu_N$} for
  {$N=2,3,4,6,8$} and {Galois} descents}},
  \href{\detokenize{http://dx.doi.org/10.1016/j.jnt.2015.08.003}}{\emph{Journal
  of Number Theory} \textbf{160} (Mar., 2016) }pp.~334--384,
  \eprintlink{http://arxiv.org/abs/1411.4947}{arXiv:1411.4947 [math.NT]}.

\bibitem{GolzPanzerSchnetz:GfParam}
M.~{Golz}, E.~{Panzer} and O.~{Schnetz},
  \href{\detokenize{http://dx.doi.org/10.1007/s11005-016-0935-6}}{\textit{Graphical
  functions in parametric space}},
  \href{\detokenize{http://dx.doi.org/10.1007/s11005-016-0935-6}}{\emph{Letters
  in Mathematical Physics} \textbf{107} (2017), no.~6 }pp.~1177--1192,
  \eprintlink{http://arxiv.org/abs/1509.07296}{arXiv:1509.07296 [hep-th]}.

\bibitem{Goncharov:PolylogsArithmeticGeometry}
A.~B. Goncharov, {\textit{Polylogarithms in arithmetic and geometry}},  in
  \emph{Proceedings of the International Congress of Mathematicians: August
  3--11, Z\"{u}rich 1994} (S.~D. Chatterji, ed.), vol.~1, (Basel),
  pp.~374--387, Birkh\"{a}user, 1995.

\bibitem{Goncharov:FundamentalGroupoids}
A.~B. Goncharov,
  \href{\detokenize{http://dx.doi.org/10.1215/S0012-7094-04-12822-2}}{\textit{Galois
  symmetries of fundamental groupoids and noncommutative geometry}},
  \href{\detokenize{http://dx.doi.org/10.1215/S0012-7094-04-12822-2}}{\emph{Duke
  Math. J.} \textbf{128} (June, 2005) }pp.~209--284,
  \eprintlink{http://arxiv.org/abs/math/0208144}{arXiv:math/0208144}.

\bibitem{IharaKanekoZagier:DerivationDoubleShuffle}
K.~Ihara, M.~Kaneko and D.~Zagier,
  \href{\detokenize{http://dx.doi.org/10.1112/S0010437X0500182X}}{\textit{Derivation
  and double shuffle relations for multiple zeta values}},
  \href{\detokenize{http://dx.doi.org/10.1112/S0010437X0500182X}}{\emph{Compositio
  Mathematica} \textbf{142} (Mar., 2006) }pp.~307--338.

\bibitem{ItzyksonZuber}
C.~Itzykson and J.-B. Zuber, {\emph{Quantum Field Theory}}.
\newblock Dover Publications, Inc., 2006.
\newblock first published by McGraw-Hill in 1980.

\bibitem{Kontsevich:GelfandTalk1997}
M.~Kontsevich Gelfand seminar talk, Rutgers University, December 8, 1997.

\bibitem{KZ:Periods}
M.~Kontsevich and D.~Zagier,
  \href{\detokenize{http://preprints.ihes.fr/M01/M01-22.ps.gz}}{\textit{Periods}},
  in \emph{{Mathematics} {Unlimited} - 2001 and {Beyond}} (B.~Engquist and
  W.~Schmid, eds.), pp.~771--808.
\newblock Springer, 2001.

\bibitem{LaportaRemiddi:g2alpha3}
S.~Laporta and E.~Remiddi,
  \href{\detokenize{http://dx.doi.org/10.1016/0370-2693(96)00439-X}}{\textit{The
  analytical value of the electron {$(g-2)$} at order {$\alpha^3$} in {QED}}},
  \href{\detokenize{http://dx.doi.org/10.1016/0370-2693(96)00439-X}}{\emph{Physics
  Letters B} \textbf{379} (June, 1996) }pp.~283--291,
  \eprintlink{http://arxiv.org/abs/hep-ph/9602417}{arXiv:hep-ph/9602417}.

\bibitem{Lewin:PolylogarithmsAssociatedFunctions}
L.~Lewin, {\emph{Polylogarithms and associated functions}}.
\newblock North Holland, July, 1981.

\bibitem{LowensteinZimmermann:PowerCountingMassless}
J.~H. Lowenstein and W.~Zimmermann,
  \href{\detokenize{http://dx.doi.org/10.1007/BF01609059}}{\textit{The power
  counting theorem for {Feynman} integrals with massless propagators}},
  \href{\detokenize{http://dx.doi.org/10.1007/BF01609059}}{\emph{Commun. Math.
  Phys.} \textbf{44} (1975), no.~1 }pp.~73--86.

\bibitem{Panzer:PhD}
E.~Panzer,
  \href{\detokenize{https://www.math.hu-berlin.de/~panzer/paper/phd.pdf}}{\emph{Feynman
  integrals and hyperlogarithms}}.
\newblock PhD thesis, Humboldt-Universit{\"a}t zu Berlin, 2014.
\newblock \eprintlink{http://arxiv.org/abs/1506.07243}{arXiv:1506.07243
  [math-ph]}.

\bibitem{Panzer:LL2014}
E.~Panzer,
  \href{\detokenize{http://pos.sissa.it/archive/conferences/211/049/LL2014_049.pdf}}{\textit{Feynman
  integrals via hyperlogarithms}},  in \emph{Loops and Legs in Quantum Field
  Theory - LL 2014, 27 April--2 May 2014, Weimar, Germany}, Proceedings of
  Science, 2014.
\newblock \eprintlink{http://arxiv.org/abs/1407.0074}{arXiv:1407.0074
  [hep-ph]}, PoS(LL2014)049.

\bibitem{Panzer:HyperInt}
E.~Panzer,
  \href{\detokenize{http://dx.doi.org/10.1016/j.cpc.2014.10.019}}{\textit{Algorithms
  for the symbolic integration of hyperlogarithms with applications to
  {Feynman} integrals}},
  \href{\detokenize{http://dx.doi.org/10.1016/j.cpc.2014.10.019}}{\emph{Computer
  Physics Communications} \textbf{188} (Mar., 2015) }pp.~148--166,
  \eprintlink{http://arxiv.org/abs/1403.3385}{arXiv:1403.3385 [hep-th]}.

\bibitem{Panzer:Parity}
E.~Panzer,
  \href{\detokenize{http://dx.doi.org/10.1016/j.jnt.2016.08.004}}{\textit{The
  parity theorem for multiple polylogarithms}},
  \href{\detokenize{http://dx.doi.org/10.1016/j.jnt.2016.08.004}}{\emph{Journal
  of Number Theory} \textbf{172} (Mar., 2017) }pp.~93--113,
  \eprintlink{http://arxiv.org/abs/1512.04482}{arXiv:1512.04482 [math.NT]}.

\bibitem{Radford:BasisShuffleAlgebra}
D.~E. Radford,
  \href{\detokenize{http://dx.doi.org/10.1016/0021-8693(79)90171-6}}{\textit{A
  natural ring basis for the shuffle algebra and an application to group
  schemes}},
  \href{\detokenize{http://dx.doi.org/10.1016/0021-8693(79)90171-6}}{\emph{J.
  Algebra} \textbf{58} (1979), no.~2 }pp.~432--454.

\bibitem{Schnetz:GeneralizedSV}
O.~Schnetz, {``Generalized single-valued hyperlogarithms.''} in preparation.

\bibitem{Schnetz:HyperlogProcedures}
O.~Schnetz, {``\texttt{hyperlog\_procedures}.''} in preparation.

\bibitem{Schnetz:Census}
O.~Schnetz,
  \href{\detokenize{http://dx.doi.org/10.4310/CNTP.2010.v4.n1.a1}}{\textit{Quantum
  periods: {A} {Census} of {$\phi^4$}-transcendentals}},
  \href{\detokenize{http://dx.doi.org/10.4310/CNTP.2010.v4.n1.a1}}{\emph{Commun.
  Number Theory Phys.} \textbf{4} (2010), no.~1 }pp.~1--47,
  \eprintlink{http://arxiv.org/abs/0801.2856}{arXiv:0801.2856 [hep-th]}.

\bibitem{Schnetz:Fq}
O.~Schnetz, \href{\detokenize{http://arxiv.org/abs/0909.0905}}{\textit{Quantum
  field theory over {$\mathbb{F}_q$}}},
  \href{\detokenize{http://arxiv.org/abs/0909.0905}}{\emph{Electron. J.
  Combin.} \textbf{18} (May, 2011) }p.~P102,
  \eprintlink{http://arxiv.org/abs/0909.0905}{arXiv:0909.0905 [math.CO]}.

\bibitem{Schnetz:gf}
O.~Schnetz,
  \href{\detokenize{http://dx.doi.org/10.4310/CNTP.2014.v8.n4.a1}}{\textit{Graphical
  functions and single-valued multiple polylogarithms}},
  \href{\detokenize{http://dx.doi.org/10.4310/CNTP.2014.v8.n4.a1}}{\emph{Communications
  in Number Theory and Physics} \textbf{8} (2014), no.~4 }pp.~589--675,
  \eprintlink{http://arxiv.org/abs/1302.6445}{arXiv:1302.6445 [math.NT]}.

\bibitem{Stembridge}
J.~R. Stembridge,
  \href{\detokenize{http://dx.doi.org/10.1007/BF01608531}}{\textit{Counting
  points on varieties over finite fields related to a conjecture of
  {Kontsevich}}},
  \href{\detokenize{http://dx.doi.org/10.1007/BF01608531}}{\emph{Ann. Comb.}
  \textbf{2} (1998), no.~4 }pp.~365--385.

\bibitem{Tsumura:CombinatorialEulerZagier}
H.~Tsumura,
  \href{\detokenize{http://dx.doi.org/10.4064/aa111-1-3}}{\textit{Combinatorial
  relations for {Euler}-{Zagier} sums}},
  \href{\detokenize{http://dx.doi.org/10.4064/aa111-1-3}}{\emph{Acta
  Arithmetica} \textbf{111} (2004), no.~1 }pp.~27--42.

\bibitem{Vermaseren:Axodraw}
J.~A.~M. Vermaseren,
  \href{\detokenize{http://dx.doi.org/10.1016/0010-4655(94)90034-5}}{\textit{Axodraw}},
  \href{\detokenize{http://dx.doi.org/10.1016/0010-4655(94)90034-5}}{\emph{Computer
  Physics Communications} \textbf{83} (Mar., 1994) }pp.~45--58.

\bibitem{Weinberg}
S.~Weinberg,
  \href{\detokenize{http://dx.doi.org/10.1103/PhysRev.118.838}}{\textit{High-energy
  behavior in quantum field theory}},
  \href{\detokenize{http://dx.doi.org/10.1103/PhysRev.118.838}}{\emph{Phys.
  Rev.} \textbf{118} (May, 1960) }pp.~838--849.

\end{thebibliography}\endgroup

\end{document}